\documentclass{article}

\usepackage{amsmath,amsfonts,amssymb,bbm,amsthm} 
\usepackage{mathtools}
\usepackage{fullpage}

\usepackage{microtype}
\usepackage{graphicx}
\usepackage{subfig}
\usepackage{booktabs} % for professional tables
\usepackage{enumitem}
\usepackage{authblk}
\usepackage{natbib}
\DeclareMathOperator{\Tr}{Tr}

% hyperref makes hyperlinks in the resulting PDF.
% If your build breaks (sometimes temporarily if a hyperlink spans a page)
% please comment out the following usepackage line and replace
% \usepackage{icml2019} with \usepackage[nohyperref]{icml2019} above.
\usepackage{hyperref}

% Attempt to make hyperref and algorithmic work together better:

\usepackage{Definitions}

\usepackage{algorithm}
\usepackage{algorithmic}

\newcommand{\xhdr}[1]{\noindent{{\bf #1.}}}

\graphicspath{{./FIG/}}

\begin{document}

\title{Non-submodular Function Maximization \\ subject to a Matroid Constraint, with Applications}

\author{Kashayar Gatmiry$^{1}$}
\author{Manuel Gomez Rodriguez$^{2}$}
\affil{$^{1}$Sharif University, kgatmiry@ce.sharif.edu \\ $^{2}$Max Planck Institute for Software Systems, manuelgr@mpi-sws.org}

\date{}

\clubpenalty=10000
\widowpenalty = 10000

\maketitle

\begin{abstract}
The standard greedy algorithm has been recently shown to enjoy approximation guarantees for constrained non-submodular 
nondecreasing set function maximization. While these recent results allow to better characterize the empirical success of the 
greedy algorithm, they are only applicable to simple cardinality constraints.
In this paper, we study the problem of maximizing a non-submodular nondecreasing set function subject to a general matroid 
constraint.
We first show that the standard greedy algorithm offers an approximation factor of $\frac{0.4 {\gamma}^{2}}{\sqrt{\gamma r} + 1}$, 
where $\gamma$ is the sub\-mo\-du\-la\-ri\-ty ratio of the function and $r$ is the rank of the matroid.
Then, we show that the same greedy algorithm offers a constant approximation factor of $(1 + 1/(1-\alpha))^{-1}$, where $\alpha$ is the 
generalized curvature of the function. 
In addition, we demonstrate that these approximation guarantees are applicable to several real-world applications in which the submodularity
ratio and the generalized curvature can be bounded. 
Finally, we show that our greedy algorithm does achieve a competitive performance in practice using a variety of experiments on synthetic and 
real-world data.
\end{abstract}

\section{Introduction}
\label{sec:introduction}
The problem of maximizing a nondecreasing set function emerges in a wide variety of important real-world applications such as feature selection, sparse modeling, experimental design, 
graph inference and link recommendation, to name a few.
If the set function of interest is nondecreasing and satisfies a natural diminishing property called submodularity\footnote{\scriptsize A set function $F(\cdot)$ is submodular iff it satisfies
that $F(\Acal \cup \{v\}) - F(\Acal) \geq F(\Bcal \cup \{v\}) - F(\Bcal)$ for all $\Acal \subseteq \Bcal \subset \Vcal$ and $v \in \Vcal$, where $\Vcal$ is the ground set.}, 
the problem is (relatively) well understood.
For example, under a simple cardinality constraint, it is known that the \emph{standard} greedy algorithm enjoys an approximation factor of $(1-1/e)$~\citep{nemhauser1978analysis, vondrak2008optimal}. 
Moreover, this constant factor has been improved using the curvature~\citep{conforti1984submodular, vondrak2010submodularity} of a submodular function, which 
quantifies how close is a submodular function to being modular.
Under a general matroid constraint, a variation of the standard greedy algorithm yields a $1/2$-approximation~\citep{fisher1978analysis} and, more 
recently, it has been shown that there exist polynomial time algorithms that yield a $(1-1/e)$-approximation~\citep{calinescu2011maximizing, filmus2012tight}.

However, there are many important applications, from subset selection~\citep{altschuler2016greedy}, sparse recovery~\citep{candes2006stable} and dictionary 
selection~\citep{das2011submodular} to experimental design~\citep{krause2008near}, where the corresponding set function is not submodular.
In this context,~\citet{bian2017guarantees} have shown that, under a cardinality constraint, the standard greedy algorithm enjoys an 
approximation factor of of $\frac{1}{\alpha}(1-e^{-\gamma \alpha})$, where $\gamma$ is the submodularity ratio~\citep{das2011submodular} of the set function, 
which characterizes how close is the function to being submodular, and $\alpha$ is the curvature of the set function. Very recently,~\citet{harshaw2019submodular} 
have also shown that there is no polynomial algorithm with better guarantees.
However, the problem of maximizing a non submodular nondecreasing set function subject to a general matroid constraint has only been studied very recently 
by~\citet{chen2017weakly}, who have shown that a randomized version of the standard greedy algorithm enjoys an approximation factor of $(1+1/\gamma)^{-2}$, 
where $\gamma$ is again the submodularity ratio.
In this paper, we make the following contributions: % we follow up on this line of work and make the following contributions: 
\begin{itemize}% [noitemsep,nolistsep,leftmargin=0.8cm]
\item[(i)] We show that the standard greedy algorithm~\citep{} yields an approximation factor of $\frac{0.4 {\gamma}^{2}}{\sqrt{\gamma r} + 1}$, where $\gamma$ is the 
submodularity ratio and $r$ is the rank of the matroid. 
While this approximation factor is worse than the one by~\citet{chen2017weakly}, this result shows that the standard greedy algorithm, which is deterministic 
and simpler, does enjoy non trivial theoretical guarantees.
\item[(i)] We show that the standard greedy algorithm yields a constant approximation factor of $(1 + 1/(1-\alpha))^{-1}$, where $\alpha$ is the generalized 
curvature of the function, as defined in previous work~\citep{lehmann2006combinatorial, hassani2017gradient,bogunovic2018robust}.
\item[(ii)] We show that the approximation guarantees from our theoretical analysis is applicable in a wide range of real-world applications, including tree structured 
Gaussian graphical model estimation and visibility maximization in link recommendation, in which the generalized curvature can be bounded. % and the rank of the matroid can be bounded. 
\item[(iii)] We show that the standard greedy algorithm does achieve a competitive performance in practice using a variety of experiments on synthetic and 
real-world data.
\end{itemize}
Here, we focus on $\gamma$-weakly submodular functions~\citep{bian2017guarantees, das2011submodular}, however, we would like to acknowledge that there are 
other types of non-submodular set functions that have been studied in the literature in recent years, namely,
approximately submodular functions~\citep{krause2008near},
weak submodular functions~\citep{borodin2014weakly},
set functions with restricted and shifted submodularity~\citep{du2008analysis},  
and
$\varepsilon$-approximately submodular functions~\citep{horel2016maximization}.
Moreover, it would be interesting to extend our study to robust non-submodular function maximization~\citep{bogunovic2018robust}.

\xhdr{Notation} 
We use capital italic letters to denote sets and we refer to $\Vcal$ as the ground set. We use $F(\cdot)$ to represent a set function and define the marginal gain function 
$\rho_{\Omega}$ of each subset $\Omega \subseteq \Vcal$ as $\rho_{\Omega}(\Scal) = F(\Scal \cup \Omega) - F(\Scal), \forall \Scal \subseteq \Vcal$.
Whenever $\Omega = \{v\}$ is a singleton, we use the symbol $\rho_{v}$ instead of $\rho_{\{v\}}$ for simplicity.

\section{Preliminaries}
\label{sec:preliminaries}
In this section, we start by revisiting the definitions of matroids and $\gamma$-weakly submodular functions~\citep{bian2017guarantees,das2011submodular}. 
Then, we define $\alpha$-submodular functions, a subclass of $\gamma$-weakly submodular functions defined in terms of the generalized curvature 
$\alpha$~\citep{lehmann2006combinatorial, hassani2017gradient,bogunovic2018robust}. 
Finally, we establish a relationship between $\alpha$-submodular functions and set functions representable as a difference between submodular functions.

Matroids are combinatorial structures that generalize the notion of linear independence in matrices. More formally, a matroid can be defined as 
follows~\citep{fujishige2005submodular, schrijver2003combinatorial}:
\begin{definition}
A matroid $\Mcal = (\Vcal, \Ical)$ is a pair defined over the ground set $\Vcal$ and a family of sets (the independent sets) $\Ical$ that satisfies three axioms:
\begin{enumerate}[leftmargin=0.5cm,noitemsep,nolistsep]
\item Non-emptiness: the empty set $\emptyset \in \Ical$.
\item Heredity: if $\Ycal \in \Ical$ and $\Xcal \subseteq \Ycal$, then $\Xcal \in \Ical$.
\item Exchange: if $\Xcal \in \Ical$, $\Ycal \in \Ical$ and $|\Ycal| > |\Xcal|$, then there exists $z \in \Ycal \backslash \Xcal$ such that $\Xcal \cup \{ z \} \in \Ical$.
\end{enumerate}
\end{definition}
The rank of a matroid is the maximum size of an independent set in the matroid. 

A $\gamma$-weakly submodular function is defined in terms of the submodularity ratio $\gamma$:
\begin{definition}
A set function $F(\cdot)$ is $\gamma$-weakly submodular if
\begin{equation} \label{eq:weak-submodularity}
\sum_{v \in \Omega \backslash \Scal} \rho_{v}(\Scal) \geq \gamma \rho_{\Omega} (\Scal), \quad \forall \Omega, \Scal \subseteq \Vcal, 
\end{equation} 
where the largest $\gamma \leq 1$ such that the above inequality is true is called submodularity ratio. Submodular functions have submodularity
ratio $\gamma = 1$.
\end{definition}
A $\alpha$-submodular function is defined in terms of the generalized curvature $\alpha$:
\begin{definition}
A set function $F$ is $\alpha$-submodular if, for any $v \in \Vcal$ and subsets $\Acal \subseteq \Bcal \subseteq \Vcal$,
  \begin{align}  \label{eq:xi-submodularity}
  \rho_{v}(\Acal) \geq (1-\alpha) \rho_{v}(\Bcal), 
  \end{align}
where the smallest $\alpha \leq 1$ such that the above inequality is true is called the generalized curvature.
\end{definition}
As shown very recently~\citep{bogunovic2018robust,halabi2018combinatorial}, there is a relationship between $\alpha$-submodular functions 
and $\gamma$-weakly submodular functions:
\begin{proposition}
Given a set function $F$ with generalized curvature $\alpha$, then it has submodularity ratio $\gamma \geq 1-\alpha$.
\end{proposition}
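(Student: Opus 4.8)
The plan is to establish the weak-submodularity inequality \eqref{eq:weak-submodularity} for the single value $\gamma = 1-\alpha$. Since the submodularity ratio is, by Definition, the \emph{largest} $\gamma \le 1$ for which \eqref{eq:weak-submodularity} holds simultaneously for all $\Omega, \Scal \subseteq \Vcal$, verifying that $\gamma = 1-\alpha$ is admissible immediately yields the claim $\gamma \ge 1-\alpha$. It therefore suffices to prove
\[
\sum_{v \in \Omega \setminus \Scal} \rho_v(\Scal) \;\ge\; (1-\alpha)\,\rho_{\Omega}(\Scal), \qquad \forall\, \Omega, \Scal \subseteq \Vcal.
\]

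First I would fix arbitrary $\Omega$ and $\Scal$ and enumerate the new elements $\Omega \setminus \Scal = \{v_1, \dots, v_k\}$ in any order, setting $\Scal_0 = \Scal$ and $\Scal_i = \Scal \cup \{v_1, \dots, v_i\}$. Since $\Scal \cup \Omega = \Scal_k$, the aggregate gain decomposes into a telescoping sum of singleton marginals:
\[
\rho_{\Omega}(\Scal) = F(\Scal_k) - F(\Scal_0) = \sum_{i=1}^{k} \bigl(F(\Scal_i) - F(\Scal_{i-1})\bigr) = \sum_{i=1}^{k} \rho_{v_i}(\Scal_{i-1}).
\]
This reduces the proposition to comparing each increment $\rho_{v_i}(\Scal_{i-1})$, evaluated at the growing set $\Scal_{i-1}$, against the marginal $\rho_{v_i}(\Scal)$ at the base set.

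The key step is exactly this comparison, and it is where $\alpha$-submodularity enters. Because $\Scal \subseteq \Scal_{i-1}$, I apply \eqref{eq:xi-submodularity} with $\Acal = \Scal$ and $\Bcal = \Scal_{i-1}$ to obtain $\rho_{v_i}(\Scal) \ge (1-\alpha)\,\rho_{v_i}(\Scal_{i-1})$, equivalently $\rho_{v_i}(\Scal_{i-1}) \le \tfrac{1}{1-\alpha}\,\rho_{v_i}(\Scal)$ for $\alpha < 1$. Summing over $i$ and substituting into the telescoping identity gives $\rho_{\Omega}(\Scal) \le \tfrac{1}{1-\alpha}\sum_{v \in \Omega \setminus \Scal} \rho_v(\Scal)$, which rearranges to the desired inequality.

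I expect the remaining points to be bookkeeping rather than real obstacles. One must check that dividing the curvature inequality by $1-\alpha$ preserves its direction: this is legitimate because $\alpha \le 1$ forces $1-\alpha \ge 0$, and for $\alpha < 1$ the division is direction-preserving irrespective of the signs of the individual marginals, so no monotonicity assumption on $F$ is actually needed here. The boundary case $\alpha = 1$ degenerates the claim to $\gamma \ge 0$, which is trivial. Finally, it is worth observing that both the telescoping identity and the resulting bound are independent of the chosen ordering of $\Omega \setminus \Scal$, so the argument requires no optimization over orderings.
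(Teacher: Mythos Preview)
Your argument is correct: the telescoping decomposition of $\rho_{\Omega}(\Scal)$ into singleton marginals along the chain $\Scal = \Scal_0 \subseteq \cdots \subseteq \Scal_k = \Scal \cup \Omega$, followed by termwise application of the curvature inequality \eqref{eq:xi-submodularity} with $\Acal = \Scal \subseteq \Scal_{i-1} = \Bcal$, yields exactly $\sum_{v \in \Omega \setminus \Scal} \rho_v(\Scal) \ge (1-\alpha)\,\rho_{\Omega}(\Scal)$, and your handling of the boundary case $\alpha = 1$ and the sign issue is fine.

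Note, however, that the paper does not supply its own proof of this proposition; it simply cites \citet{bogunovic2018robust,halabi2018combinatorial} for the result. Your write-up is the standard proof one finds in those references, so there is nothing to compare against in the present paper.
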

Moreover, the following proposition establishes a relationship between $\alpha$-submodular functions and (nondecreasing) set functions representable as a difference 
between submodular functions\footnote{\scriptsize Note that any set function can be expressed as a difference between two submodular functions~\citep{narasimhan2012submodular}.}:
\begin{proposition} \label{thm:diff-submodular}
Given a set function $G = F_1 - F_2$, where $F_1$ and $F_2$ are nondecreasing submodular functions and let $0 < \alpha \leq 1$ be the smallest
constant\footnote{\scriptsize Note that, if $G$ is nondecreasing, such constant $1-\alpha$ will always exist.} such that 
\begin{equation} \label{eq:ds}
F_2(\Scal \cup \{v\}) - F_2(\Scal) \leq \alpha [F_1(\Scal \cup \{v\}) - F_1(\Scal)]
\end{equation}
for all $\Scal \subseteq \Vcal$ and $v \in \Vcal$.
Then, $G$ has generalized curvature $\alpha$.
\end{proposition}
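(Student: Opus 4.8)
The plan is to verify directly that $G$ satisfies the $\alpha$-submodularity inequality \eqref{eq:xi-submodularity} with the very constant $\alpha$ supplied by \eqref{eq:ds}, which shows that the generalized curvature of $G$ is at most $\alpha$. The starting observation is that marginal gains are linear in the function, so writing $\rho_v^{F}(\Scal) := F(\Scal \cup \{v\}) - F(\Scal)$ for any of our functions, we have
\[
\rho_v^{G}(\Scal) = \rho_v^{F_1}(\Scal) - \rho_v^{F_2}(\Scal)
\]
for every $v$ and every $\Scal$. Fixing $\Acal \subseteq \Bcal \subseteq \Vcal$ and $v \in \Vcal$, the statement then reduces to bounding $\rho_v^{F_1}(\Acal) - \rho_v^{F_2}(\Acal)$ from below by $(1-\alpha)\bigl(\rho_v^{F_1}(\Bcal) - \rho_v^{F_2}(\Bcal)\bigr)$, i.e. to a single scalar inequality among four marginals.

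Next I would assemble a short chain of inequalities, spending exactly one hypothesis per step. First, applying \eqref{eq:ds} at $\Scal = \Acal$ gives $\rho_v^{F_2}(\Acal) \le \alpha\,\rho_v^{F_1}(\Acal)$, hence $\rho_v^{G}(\Acal) \ge (1-\alpha)\,\rho_v^{F_1}(\Acal)$. Second, since $F_1$ is submodular and $\Acal \subseteq \Bcal$, its marginal gains are nonincreasing in the set, so $\rho_v^{F_1}(\Acal) \ge \rho_v^{F_1}(\Bcal)$; because $\alpha \le 1$ the factor $1-\alpha$ is nonnegative and the inequality survives multiplication, giving $(1-\alpha)\,\rho_v^{F_1}(\Acal) \ge (1-\alpha)\,\rho_v^{F_1}(\Bcal)$. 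Third, since $F_2$ is nondecreasing we have $\rho_v^{F_2}(\Bcal) \ge 0$, so subtracting it only lowers the right-hand side: $(1-\alpha)\,\rho_v^{F_1}(\Bcal) \ge (1-\alpha)\bigl(\rho_v^{F_1}(\Bcal) - \rho_v^{F_2}(\Bcal)\bigr) = (1-\alpha)\,\rho_v^{G}(\Bcal)$. Chaining the three steps yields \eqref{eq:xi-submodularity}. The degenerate case $v \in \Bcal$ is handled separately: there $\rho_v^{G}(\Bcal) = 0$, and the required bound collapses to $\rho_v^{G}(\Acal) \ge 0$, which holds because $G$ is assumed nondecreasing.

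The step that needs the most care is getting every inequality to point the same way: the argument works precisely because we lower-bound $\rho_v^{G}(\Acal)$ through $F_1$'s marginal at the \emph{smaller} set $\Acal$, where submodularity of $F_1$ is useful, while discarding $F_2$'s marginal at the \emph{larger} set $\Bcal$ via monotonicity, rather than attempting to compare the two $F_2$-terms directly; a naive term-by-term comparison fails because $-F_2$ is supermodular. This establishes that $G$ is $\alpha$-submodular, so its generalized curvature is at most $\alpha$, which is the direction the approximation guarantees actually use. Establishing in addition that $\alpha$ is the exact (smallest) value is the more delicate converse: it would require exhibiting, from the minimality of $\alpha$ in \eqref{eq:ds}, a triple $(\Acal, \Bcal, v)$ at which the chain above is simultaneously tight, and I would expect this to demand an extra nondegeneracy condition on the set--element pair witnessing \eqref{eq:ds}.
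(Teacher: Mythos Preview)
Your proof is correct and follows exactly the same three-step chain as the paper: apply \eqref{eq:ds} at $\Acal$, then submodularity of $F_1$ to pass from $\Acal$ to $\Bcal$, then monotonicity of $F_2$ to reinsert the $-\rho_v^{F_2}(\Bcal)$ term. Your closing remark that the argument only establishes the upper bound on the generalized curvature (not exact equality) is a fair observation that the paper's proof also does not address.
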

\begin{proof}
Let $\Acal \subseteq \Bcal$. Then, we have:
\begin{align*}
G(\Acal \cup \{v\}) - G(\Acal) &= F_1(\Acal \cup \{v\}) - F_1(\Acal) - F_2(\Acal \cup \{v\}) + F_2(\Acal) \\
&\geq (1-\alpha)[F_1(\Acal \cup \{v\}) - F_1(\Acal)] \geq (1-\alpha)[F_1(\Bcal \cup \{v\}) - F_1(\Bcal)] \\
&\geq (1-\alpha)[F_1(\Bcal \cup \{v\}) - F_1(\Bcal) - F_2(\Bcal \cup \{v\}) + F_2(\Bcal)] = (1-\alpha) [G(\Bcal \cup \{v\}) - G(\Bcal)],
\end{align*}
where the first inequality follows from the definition of $\alpha$ and the second and third inequalities follow from the submodularity of $F_1$ and the
monotonicity of $F_2$, respectively.
\end{proof}
In general, set functions representable as a difference between submodular functions cannot be approximated in polynomial time~\citep{iyer2012algorithms}, 
however, the above result identifies a particular class of these functions for which the standard greedy algorithm achieves approximation guarantees. 

\xhdr{Remarks} The notion of $\alpha$-submodularity is fundamentally different from $\varepsilon$-approxi\-ma\-te submodularity~\citep{horel2016maximization}.
%
% A function $F$ is is $\varepsilon$-approximately submodular if there exists a submodular function $G$ such that
%
% \begin{equation} \label{eq:varepsilon-submodularity}
% (1-\varepsilon) G(\Scal) \leq F(\Scal) \leq (1+\varepsilon) G(\Scal) \quad \forall \Scal \subseteq \Vcal.
% \end{equation}
%
More specifically, there exist $\varepsilon$-approxi\-ma\-te\-ly sub\-modular func\-tions that are not $\alpha$-submodular for any $\alpha$
arbitrarily close to $1$. 
Define the following set functions $G$ and $F$ over $\Vcal = \{ 1,2,3\}$:
\begin{align*}
G(\Scal) = \begin{cases}0 & |\Scal| = 0\\1-\epsilon & |\Scal| = 1\\ 1 & |\Scal| = 2\\ 1 + \epsilon & |\Scal|=3 \end{cases} \quad \quad 
F_{\delta}(\Scal) = \begin{cases}0 & |\Scal| = 0\\1-\epsilon & |\Scal| = 1\\ 1-\epsilon+\delta & |\Scal| = 2\\ 1 + \epsilon & |\Scal|=3 \end{cases}.
\end{align*}
Then, it can be readily shown that for $0 \leq \epsilon \leq \frac{1}{2}$, $G$ is submodular and, if $0 \leq \delta \leq \epsilon$, then $\forall \Scal \subset \Vcal, (1 - \epsilon)G(\Scal) \leq F_{\delta}(\Scal) \leq (1 + \epsilon)G(\Scal)$, which implies that $F_{\delta}$ is $\epsilon$-approxi\-ma\-tely submodular. However,
\begin{align}
\frac{\rho_{2}(\{ 1 \})}{\rho_{2}(\{ 1,3 \})} = \frac{F_{\delta}(\{ 1,2 \}) - F_{\delta}(\{ 1\})}{F_{\delta}(\{ 1,2,3\}) - F_{\delta}(\{ 1,3\})} \leq \frac{\delta}{\epsilon} \rightarrow 0,
\end{align}
as $\delta \rightarrow 0$. Therefore, the generalized curvature of $F_{\delta}$ can approach $1$ arbitrarily, which proves our claim.

\section{Approximation Guarantees}
\label{sec:formulation}
In this section, we show that the standard greedy algorithm~\citep{nemhauser1978analysis} (Algorithm~\ref{alg:greedy}) enjoys approximation guarantees at maximizing set non-submodular 
nondecreasing functions $F$ under a matroid constraint $\Mcal = (\Vcal, \Ical)$ with rank $r$, \ie,
\begin{align} \label{eq:optimization-problem}
\underset{\Scal \in \Ical}{\text{maximize}} & \quad F(\Scal) % \nonumber \\
% \text{subject to} & \quad \Scal \in \Ical.
\end{align}
More specifically, we first show that the greedy algorithm offers an approximation factor of $\frac{0.4 {\gamma}^{2}}{\sqrt{\gamma r} + 1}$ for $\gamma$-weakly submodular 
functions whenever $r \geq 3$. 
Here, note that, whenever $r \leq 2$, we can check all of the sets in our matroid using brute force without increasing the time complexity of the greedy
algorithm and, hence, we omit those cases from our analysis.
Then, we show that the greedy algorithm enjoys an approximation factor of $(1 + 1/(1-\alpha))^{-1}$ for $\alpha$-submodular functions independently of the rank of 
the matroid. 
\begin{algorithm}[t] \label{alg:greedy}
\renewcommand{\algorithmicrequire}{\textbf{Input:}}
\renewcommand{\algorithmicensure}{\textbf{Output:}}
\begin{algorithmic}[1]
\caption{Greedy algorithm}\label{alg:greedy}
\REQUIRE Ground set $\Vcal$, matroid $\Mcal = (\Vcal, \Ical)$, non-submodular nondecreasing set function $F$ 
\ENSURE Set of items $\Scal_n$
\STATE $\Scal_{0} \leftarrow \varnothing, \Ucal_{0} \leftarrow \varnothing$
\STATE $t \leftarrow 1$
\WHILE{$|\Ucal_{t-1}| < |\Vcal|$}
\STATE $\Ucal_i \leftarrow \Ucal_{t-1}$
\REPEAT
	\STATE $v^* \leftarrow \text{argmax}_{v \in \Vcal \backslash \Ucal_{t}} \rho_{v}(\Scal_{t-1})$
	\STATE $\Ucal_{t} \leftarrow \Ucal_{t} \cup \{ v^* \}$ \mbox{\quad \,\,\,\, \% Item is considered}
\UNTIL{$S_{t-1} \cup \{ v^* \} \in \Ical$}
\STATE $S_{t} \leftarrow S_{t-1} \cup \{ v^* \}$ \mbox{\quad \% Item is selected}
\STATE $t \leftarrow t+1$
\ENDWHILE \\
\STATE \mbox{\bf return} $S_{t-1}$
\end{algorithmic}
\end{algorithm}

% \subsection{The standard greedy algorithm}
%
% The standard greedy algorithm proceeds iteratively and, at each step, it picks the item that provides the highest marginal gain among the set of 
% feasible items which have not been picked yet. Algorithm~\ref{alg:greedy} summarizes the overall procedure.

\subsection{$\gamma$-weakly submodular functions}
Our main result is the following theorem, which shows that the greedy algorithm achieves an approximation factor 
that depends on the rank of matroid:
\begin{theorem} \label{thm:guarantee-weakly-submodular}
Given a ground set $\Vcal$, a matroid $\Mcal = (\Vcal, \Ical)$ with rank $r \geq 3$ and a non-decreasing $\gamma$-weakly submodular set function 
$F$. Then, the greedy algorithm, summarized in Algorithm~\ref{alg:greedy}, returns a set $\Scal$ such that
\begin{equation}
F(\Scal) \geq \frac{0.4 {\gamma}^{2}}{\sqrt{\gamma r} + 1} OPT
\end{equation}
where $OPT$ is the optimal value.
\begin{proof}
Let $\Scal_t$ be the set of items selected by the greedy algorithm in the first $t$ steps, assume $\Scal_0 = \emptyset$, and define 
$K_t = OPT - F(\Scal_t)$.
The core of our proof lies on the following key Lemma (proven in Appendix~\ref{app:lem-key-weakly-submodular}), which shows that, if $K_t$ is large enough, then $K_{t+1}$ will 
be smaller than a factor of $K_t$:
\begin{lemma} \label{lem:key-weakly-submodular}
Suppose that $\frac{K_t}{OPT} \geq \alpha^{*}$, where $\alpha^* = \frac{1}{\frac{\gamma^2}{2(\sqrt{\gamma r}+1)} + 1}$. Then, it holds that
\begin{equation} 
K_{t+1} \leq (1-\theta) K_{t}, \label{eq:recursive-definition}
\end{equation}
where 
\begin{equation} \label{eq:thetadefinition}
\theta = \sqrt{\frac{1}{\gamma} \frac{\log(\frac{1}{\alpha^*}) - 1 + \alpha^*}{r \alpha^*}},
\end{equation}
\end{lemma}
Given the above Lemma, we proceed as follows. 
First, we note that the function $\frac{\gamma^2}{2(\sqrt{\gamma r}+1)}$ is increasing with respect to $\gamma$, it is decreasing with respect to $r \geq 1$ and 
$\gamma \leq 1$ by definition. Therefore, it follows that
\begin{equation} \label{eq:alphalowerbound}
\alpha^* = \frac{1}{\frac{\gamma^2}{2(\sqrt{\gamma r}+1)} + 1} \geq \frac{1}{\frac{1}{2(\sqrt{r} +1)}+ 1} \geq \frac{1}{\frac{1}{2(1 + 1)}+ 1} = 0.8. 
\end{equation}
and thus
\begin{equation} \label{eq:alphaupperbound}
1 - \alpha^* = \frac{\frac{\gamma^2}{2(\sqrt{\gamma r}+1)}}{\frac{\gamma^2}{2(\sqrt{\gamma r}+1)} + 1}  \geq 0.4 \frac{\gamma^2}{\sqrt{\gamma r}+1} 
\end{equation}
Second, we note that the function $g(\alpha) = \frac{\alpha^*(\ln(\frac{1}{\alpha^*}) - 1 + \alpha^*)}{(1-\alpha^*)^2}$ is decreasing. The reason is that $g^{\prime}(\alpha) = \frac{\log(\frac{1}{\alpha})(1+\alpha) - 2(1-\alpha)}{(1-\alpha)^3} \geq 0$, because for $\alpha \leq 1$, we have $\log(\frac{1}{\alpha}) \geq \frac{2(1-\alpha)}{1+\alpha}$. Therefore, using $r \geq 3$, for $\omega := \sqrt{\frac{r}{\gamma}} \sqrt{\frac{\alpha^*(\ln(\frac{1}{\alpha^*}) - 1 + \alpha^*)}{(1-\alpha^*)^2}}$,
\begin{equation} \label{eq:omegaequation}
w \geq \sqrt{3 g(\alpha^*)} \geq \sqrt{3 g(0.8)} \geq \sqrt{3 \times 0.46} \geq 1,
\end{equation}
where we used the lower bound of $\alpha^*$ from Eq.~\ref{eq:alphalowerbound}. Next, using that $(1-\theta)^{\frac{1}{\theta}} \leq e^{-1}$ and $e^{-y} \leq \frac{1}{y+1}$ for all $y \geq 0$, we have that
\begin{equation} \label{eq:alphaomega}
(1-\theta)^r = [(1-\theta)^{\frac{1}{\theta}}]^{\theta r} \leq e^{-\theta r} \leq \frac{1}{\theta r + 1} = \frac{\alpha^*}{\alpha^* \theta r + \alpha^*} \leq \frac{\alpha^*}{\omega (1-\alpha^*)+\alpha^*} \leq 1,
\end{equation}
where the last inequality follows from Eq.~\ref{eq:omegaequation}. 

Now, if we combine Eq.~\ref{eq:omegaequation} and Eq.~\ref{eq:alphaomega}, we conclude that $(1-\theta)^{r} \leq \alpha^{*}$.
Hence, if we could use Eq.~\ref{eq:recursive-definition} up until step $r$ (the last step\footnote{Since $F$ is monotone nondecreasing, $\Scal$ has cardinality 
equal to the dimension of the matroid, \ie, $|\Scal| = r$}), then we would conclude that $\frac{K_r}{OPT} \leq \alpha^{*}$. 
However, since we can only use Eq.~\ref{eq:recursive-definition} whenever $\frac{K_t}{OPT} \geq \alpha^{*}$, we conclude that there exists some $t^{*}$ such that 
$\frac{K_{t^{*}-1}}{OPT} \geq \alpha^{*}$ and $\frac{K_{t^{*}}}{OPT} \leq \alpha^{*}$. Thus,
\begin{equation*}
F(\Scal_{t^*}) = OPT - K_{t^*} \geq OPT(1 - \alpha^*) \geq  0.4 \frac{\gamma^2}{\sqrt{\gamma r}+1} OPT,
\end{equation*}
where we have also used Eq.~\ref{eq:alphaupperbound}. Finally, since $F$ is monotone nondecreasing, it follows that
\begin{equation*}
F(\Scal) = F(\Scal_{r}) \geq F(\Scal_{t^*}) \geq 0.4 \frac{\gamma^2}{\sqrt{\gamma r}+1}OPT,
\end{equation*}
which concludes the proof.
\end{proof}
\end{theorem}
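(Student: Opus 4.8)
The plan is to take the per-step contraction Lemma \ref{lem:key-weakly-submodular} as given and convert its geometric decay of the optimality gap into the claimed multiplicative approximation. I would track the gap $K_t = OPT - F(\Scal_t)$ along the run of the greedy algorithm, starting from $K_0 = OPT$ (since $\Scal_0 = \emptyset$), and the goal is to show that after at most $r$ greedy steps the normalized gap $K_t/OPT$ has dropped to or below the threshold $\alpha^*$ appearing in the Lemma's hypothesis. Once that is established, the approximation factor can be read off directly from $F(\Scal_{t^*}) = OPT - K_{t^*} \geq (1-\alpha^*)OPT$ at the crossing step, and then transported to the returned set by monotonicity.

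First I would pin down the two elementary estimates on $\alpha^*$ through which the final factor is expressed. Because $\frac{\gamma^2}{2(\sqrt{\gamma r}+1)}$ is increasing in $\gamma$ and decreasing in $r$, and $\gamma \leq 1$, $r \geq 1$, substituting the extreme values $\gamma = 1$, $r = 1$ into the definition of $\alpha^*$ yields $\alpha^* \geq 0.8$. Rearranging the definition of $\alpha^*$ then gives the lower bound $1-\alpha^* \geq 0.4\,\frac{\gamma^2}{\sqrt{\gamma r}+1}$, which is exactly the target approximation factor; this is what makes the crossing-step estimate deliver the theorem.

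The technical core is the compounded inequality $(1-\theta)^r \leq \alpha^*$. Here I would introduce the auxiliary quantity $\omega = \sqrt{\tfrac{r}{\gamma}}\sqrt{g(\alpha^*)}$ with $g(\alpha) = \frac{\alpha(\ln(1/\alpha)-1+\alpha)}{(1-\alpha)^2}$, and observe from the definition of $\theta$ that $\alpha^*\theta r = \omega(1-\alpha^*)$. The first subtask is to show $\omega \geq 1$: I would verify that $g$ is decreasing (via $g'(\alpha)\geq 0$ in the relevant form, equivalently $\ln(1/\alpha)\geq \frac{2(1-\alpha)}{1+\alpha}$ for $\alpha\leq 1$), so that $g(\alpha^*)\geq g(0.8)$ using the lower bound on $\alpha^*$, and then combine this with $r \geq 3$ to get $\omega \geq \sqrt{3\,g(0.8)}\geq 1$. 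The second subtask is the chain $(1-\theta)^r \leq e^{-\theta r} \leq \frac{1}{\theta r+1} = \frac{\alpha^*}{\alpha^*\theta r + \alpha^*} = \frac{\alpha^*}{\omega(1-\alpha^*)+\alpha^*}$, using $(1-\theta)^{1/\theta}\leq e^{-1}$ and $e^{-y}\leq \frac{1}{y+1}$; since $\omega \geq 1$ forces $\omega(1-\alpha^*)+\alpha^* \geq 1$, the last expression is at most $\alpha^*$, giving $(1-\theta)^r \leq \alpha^*$.

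The step I expect to demand the most care is that the Lemma may only be invoked while its hypothesis $K_t/OPT \geq \alpha^*$ holds, so I cannot simply iterate it for all $r$ steps to conclude $K_r \leq \alpha^* OPT$. I would resolve this with a first-crossing argument: if the hypothesis held at every step $0,\dots,r-1$, then iterating would give $K_r/OPT \leq (1-\theta)^r \leq \alpha^*$, so in any case there is a first index $t^*$ with $K_{t^*-1}/OPT \geq \alpha^*$ and $K_{t^*}/OPT \leq \alpha^*$. At that step $F(\Scal_{t^*}) = OPT - K_{t^*} \geq (1-\alpha^*)OPT \geq 0.4\,\frac{\gamma^2}{\sqrt{\gamma r}+1}\,OPT$. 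Finally, since $F$ is nondecreasing and the returned set $\Scal = \Scal_r \supseteq \Scal_{t^*}$, the bound transfers to $\Scal$, completing the proof.
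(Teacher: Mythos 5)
Your proposal is correct and follows essentially the same route as the paper's own proof: the same bounds $\alpha^* \geq 0.8$ and $1-\alpha^* \geq 0.4\,\gamma^2/(\sqrt{\gamma r}+1)$, the same auxiliary quantity $\omega$ with the chain $(1-\theta)^r \leq e^{-\theta r} \leq \alpha^*/(\omega(1-\alpha^*)+\alpha^*) \leq \alpha^*$, and the same first-crossing argument at $t^*$ followed by monotonicity. Your explicit observation that $\alpha^*\theta r = \omega(1-\alpha^*)$ holds with equality is a minor clarification of a step the paper states as an inequality, but the argument is otherwise identical.
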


\begin{corollary} \label{cor:guarantee-weakly-submodular}
The greedy algorithm enjoys an approximation guarantee of $\Omega(\gamma^2)$ if $\gamma r = O(1)$ and $\Omega(\gamma \sqrt{\gamma}/\sqrt{r})$ if 
$\gamma r = \Omega(1)$.
\end{corollary}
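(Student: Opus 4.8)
The plan is to read off the approximation factor from Theorem~\ref{thm:guarantee-weakly-submodular} and then simplify it asymptotically in each of the two regimes. Concretely, the theorem guarantees that the greedy algorithm returns a set $\Scal$ with
\begin{equation*}
F(\Scal) \geq c(\gamma,r)\, OPT, \qquad c(\gamma,r) = \frac{0.4\,\gamma^2}{\sqrt{\gamma r}+1},
\end{equation*}
so it suffices to estimate $c(\gamma,r)$ as a function of $\gamma$ and $r$ (the hypothesis $r \geq 3$ of the theorem holds throughout the asymptotic regimes of interest). Since the corollary asks for a \emph{lower} bound on $c$ (an $\Omega$ guarantee), the only quantity I need to control is an \emph{upper} bound on the denominator $\sqrt{\gamma r}+1$; the numerator $0.4\,\gamma^2$ is already in the desired form.

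First I would treat the regime $\gamma r = O(1)$. Here $\sqrt{\gamma r} = O(1)$, and since $\sqrt{\gamma r}+1 \geq 1$ as well, the denominator satisfies $\sqrt{\gamma r}+1 = \Theta(1)$. Substituting into $c(\gamma,r)$ gives $c(\gamma,r) = 0.4\,\gamma^2/\Theta(1) = \Omega(\gamma^2)$, which is the first claim. Next I would treat the regime $\gamma r = \Omega(1)$, say $\gamma r \geq c_0$ for some constant $c_0 > 0$. The key observation is that here the additive $1$ in the denominator is dominated by $\sqrt{\gamma r}$: since $1 \leq \sqrt{\gamma r}/\sqrt{c_0}$, we get $\sqrt{\gamma r}+1 \leq (1+1/\sqrt{c_0})\sqrt{\gamma r} = O(\sqrt{\gamma r})$, and together with the trivial bound $\sqrt{\gamma r}+1 \geq \sqrt{\gamma r}$ this yields $\sqrt{\gamma r}+1 = \Theta(\sqrt{\gamma r})$. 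Hence
\begin{equation*}
c(\gamma,r) = \Omega\!\left(\frac{\gamma^2}{\sqrt{\gamma r}}\right) = \Omega\!\left(\frac{\gamma^2}{\sqrt{\gamma}\,\sqrt{r}}\right) = \Omega\!\left(\frac{\gamma\sqrt{\gamma}}{\sqrt{r}}\right),
\end{equation*}
using $\gamma^2/\sqrt{\gamma} = \gamma\sqrt{\gamma}$, which is the second claim.

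The argument is essentially elementary asymptotic bookkeeping, so I do not anticipate a genuine obstacle; this corollary is really just a restatement of Theorem~\ref{thm:guarantee-weakly-submodular} in two limiting parameter regimes. The only point that requires a little care is the handling of the $+1$ term in the denominator, which must be absorbed correctly in each case: it is negligible compared to $\sqrt{\gamma r}$ when $\gamma r = \Omega(1)$, but is genuinely $\Theta(1)$-comparable to the whole denominator when $\gamma r = O(1)$. Keeping track of the direction of each inequality — upper-bounding the denominator in order to obtain the claimed $\Omega$ lower bound on $c$ — is what keeps the two cases honest.
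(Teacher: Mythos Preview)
Your argument is correct and is exactly the intended one: the paper states the corollary without proof, as it follows immediately from Theorem~\ref{thm:guarantee-weakly-submodular} by the asymptotic case analysis you carry out. There is nothing to add.
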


\xhdr{Remarks}
We would like to acknowledge that the randomized algorithm recently introduced by~\citet{chen2017weakly} enjoys better approximation guarantees
at maximizing $\gamma$-weakly submodular functions, however, we do think that the above result has some value. More specifically:
\begin{itemize}%[noitemsep,nolistsep,leftmargin=0.8cm]
\item[(i)] Our theoretical result shows that the greedy algorithm, which is deterministic and simpler, does enjoy non trivial theoretical 
guarantees. These theoretical guarantees supports its strong empirical performance in several applications (\eg, tree-structured Gaussian 
graphical model estimation).
\item[(ii)] If $\gamma r = O(1)$, the approximation factors of the greedy algorithm and the algorithm by~\citet{chen2017weakly} are both $\Omega(\gamma^2)$ 
and of the same order. 
\item[(iii)] The proof technique used in Theorem~\ref{thm:guarantee-weakly-submodular} is novel and it may be useful in proving better approximation 
factors of other randomized algorithms for maximizing non submodular set functions.
\end{itemize}

\subsection{$\alpha$-submodular functions}
Our main result is the following theorem, which shows that the greedy algorithm achieves an approximation factor 
that is independent of the rank of the matroid:
\begin{theorem}  \label{thm:guarantee-xi-submodular}
Given a ground set $\Vcal$, a matroid $\Mcal = (\Vcal, \Ical)$ and a nondecreasing $\alpha$-submodular set function $F$.
%
% Given a ground set $\Vcal$, a partition $\{ \Ncal_{i} \}_{i=1}^{n}$ of $\Vcal$, a collection of matroids $\{ \Ical_{i}\}_{i=1}^{n}$ with respect to each partition, and a nondecreasing set 
% function $F : \Vcal \rightarrow \Rcal$ with generalized curvature $\alpha$. 
Then, the greedy algorithm returns a set $\Scal$ such that $F(\Scal) \geq OPT/(1 + 1/(1-\alpha))$, where $OPT$ is the optimal value.
\begin{proof}
Let $\Tcal$ be the optimal set of items and $\Scal$ be the set of items selected by the greedy algorithm.
Moreover, let  
% for $1 \leq i \leq n$ and  $1 \leq j \leq |\Ncal_{i}|$, let 
%\begin{align*}
% $ \Scal_{i} = \Scal \cap (\Ncal_{1} \cup \cdots \cup \Ncal_{i}), \Tcal_{i} = \Tcal \cap (\Ncal_{1} \cup \cdots \cup \Ncal_{i})$,
%\end{align*}
%
$\Ucal_t$ be the items considered by the algorithm in the first $t$ steps,
$\Scal_t = \{s_i\}_{i=1}^{t}$ be the items selected by the greedy algorithm in the first $t$ steps in order of their consideration, 
and $\Tcal_t = \{ t_{i} \}_{i=1}^{q_t}$ be the items in $\Tcal$ considered by the greedy algorithm in the first $t$ steps also in order of their consideration.
%
% \Ucal_{j}$ be the first $j$ elements that are considered in the greedy algorithm, and define $\Tcal_{j} = \Ucal_{j} \cap \Tcal \ , \Scal_{j} = \Ucal_{j} \cap \Scal$.
% 
% Moreover, define $\{ s_{ik}\}_{k=1}^{p_{i}}$ and $\{ t_{ik} \}_{k=1}^{q_{i}}$ be the elements in $\Scal_{i} - \Scal_{i-1}$ and $\Tcal_{i} - \Tcal_{i-1}$ respectively, in order of their consideration in the algorithm.

% First, we show that $q_t \leq t$. 
%
% \begin{align}
% q_t \leq t. \label{eq:theclaim}
% \end{align}
%
According to the definition of the greedy algorithm, adding any element from $\Ucal_{t} \backslash \Scal_{t}$ to $\mathcal{S}_{t}$ violates the matroid constraint $\Ical$ (otherwise, 
that element should have been picked by the greedy algorithm). 
Thus, $\Ucal_{t} \subseteq \text{span}(\Scal_{t})$, which implies $\text{rank}(\Ucal_{t}) = \text{rank}(\Scal_{t})$. Moreover, $\Tcal_{t} \subseteq \Ucal_{t}$ implies 
$\text{rank}(\Tcal_{t}) \leq \text{rank}(\Ucal_{t})$, therefore, $\text{rank}(\Tcal_{t}) \leq \text{rank}(\Scal_{t})$.
%
% \begin{align}
% \text{rank}(\Tcal_{t}) \leq \text{rank}(\Scal_{t}). \label{firstinequality}
% \end{align}
% 
However, $\Scal_{t}$ and $\Tcal_{t}$ are both independent sets of the matroid $\Ical$, because they are both feasible solutions. As a result, it follows that $\text{rank}(\Tcal_{t}) = q_t$,
$\text{rank}(\Scal_{t}) = t$, and thus $q_t \leq t$.
%
% Combining this with Eq.~\ref{firstinequality} proves our claim.
%
Moreover, this implies that % , for each $1 \leq i \leq n$, $q_{i} \leq p_{i}$ and also, for each $1 \leq k \leq q_{i}$, 
$s_{i}$ is considered in the greedy algorithm at some point before $t_{i}$. This means that, at the point that the greedy picks $s_{i}$, $t_{i}$ does not have a higher marginal 
gain than $s_{i}$, \ie, $\rho_{s_{i}}(\Scal_{i-1}) \geq \rho_{t_{i}}(\Scal_{i-1}),\, \forall 1 \leq i \leq n$.
%
% where $\Qcal_{ik}$ is the set of elements picked by the greedy until $s_{ik}$ is picked (but not including $s_{ik}$ itself). 
%
Hence, we can write
\begin{align*}
F(\Tcal_t)  & \leq  F(\Scal_{t} \cup \Tcal_t) = F(\Scal_{t}) + \sum_{i=1}^{q_t} \rho_{t_{i}}(\Scal_{t} \cup \{t_1, \ldots, t_{i-1}\} ) \leq F(\Scal_{t}) + \sum_{i=1}^{q_t} \frac{1}{1-\alpha} \rho_{t_{i}}(\Scal_{i-1}) \\
& \leq F(\Scal_{t-1}) + \frac{1}{1-\alpha} \sum_{i=1}^{q_t} \rho_{s_{i}}(\Scal_{i-1}) \leq  F(\Scal_{t}) +  \frac{1}{1-\alpha} \sum_{i=1}^{t} \rho_{s_{i}}(\Scal_{i-1}) = \left(1 + \frac{1}{1-\alpha}\right) F(\Scal_t), 
\end{align*}
where, in the first inequality, we have used the monotonicity of $F$ and, in the second inequality, we have used the $\alpha$-submodularity of $F$.
This concludes the proof.
%
% note that $\Tcal_{i} = \Tcal \cap \Ncal_{i}$ is an independent set of matroid $\Ical_{i}$ because $\Tcal$ is a feasible solution. Therefore, due to the fact that $\Tcal_{i}^{j} \subseteq U_{i}^{j}$, we have
\end{proof}
\end{theorem}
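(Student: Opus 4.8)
The plan is to compare the greedy output $\Scal$ with an optimal solution $\Tcal$ via the standard matroid exchange argument, using $\alpha$-submodularity as the single analytic tool that replaces a marginal gain evaluated over a large set by the same gain over a smaller set at a multiplicative cost of $1/(1-\alpha)$. Since $F$ is nondecreasing I may take $\Scal$ and $\Tcal$ to be bases, so $|\Scal| = |\Tcal| = r$, and the goal reduces to proving $F(\Tcal) \le \bigl(1 + \tfrac{1}{1-\alpha}\bigr) F(\Scal)$; rearranging then gives the stated bound with $OPT = F(\Tcal)$.

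First I would set up the bookkeeping dictated by the algorithm: let $\Ucal_t$ be the elements \emph{considered} (inspected, whether selected or rejected) through step $t$, let $\Scal_t = \{s_i\}_{i=1}^t$ be the selected elements in their order of selection, and let $\Tcal_t = \{t_i\}_{i=1}^{q_t}$ be the optimal elements considered so far, ordered by the time they were inspected. The combinatorial core is to match optimal elements to greedy elements so that each $t_i$ is dominated by $s_i$. To this end I would argue that any considered-but-unselected element forms a circuit with the current greedy set, \ie $\Ucal_t \subseteq \text{span}(\Scal_t)$, so that $\text{rank}(\Ucal_t) = \text{rank}(\Scal_t) = t$; since $\Tcal_t \subseteq \Ucal_t$ is independent, this forces $q_t \le t$ for every $t$. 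Tracking the global consideration order, $t_i$ is inspected only after at least $i$ optimal elements have been seen, hence (by $q_t \le t$) at a greedy step $\ge i$, so $s_i$ is already selected by then. Because within each step candidates are inspected in decreasing order of marginal gain and the first feasible one is taken, this ordering yields the greedy domination inequality $\rho_{s_i}(\Scal_{i-1}) \ge \rho_{t_i}(\Scal_{i-1})$ for all $i$.

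With the matching in hand the estimate is a short telescoping computation. Taking $t$ to be the final step, so $\Tcal_t = \Tcal = \{t_i\}_{i=1}^r$ and $\Scal_t = \Scal$, monotonicity together with a telescoping decomposition of the marginals gives
\[
F(\Tcal) \le F(\Scal \cup \Tcal) = F(\Scal) + \sum_{i=1}^{r} \rho_{t_i}\bigl(\Scal \cup \{t_1, \dots, t_{i-1}\}\bigr).
\]
Here $\alpha$-submodularity does the real work: since $\Scal \cup \{t_1, \dots, t_{i-1}\} \supseteq \Scal_{i-1}$, each term is at most $\tfrac{1}{1-\alpha} \rho_{t_i}(\Scal_{i-1})$, and the greedy inequality then bounds $\sum_i \rho_{t_i}(\Scal_{i-1}) \le \sum_i \rho_{s_i}(\Scal_{i-1}) = F(\Scal)$, the last equality being a telescope of the greedy gains (using $F(\emptyset) = 0$). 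Collecting terms yields $F(\Tcal) \le \bigl(1 + \tfrac{1}{1-\alpha}\bigr) F(\Scal)$, which is the claim.

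I expect the main obstacle to be the matroid-exchange/ordering step, \ie making rigorous that each optimal element is dominated by a distinct earlier greedy element. The clean route is the span argument above, and the delicate point is tracking ``considered'' rather than ``selected'', so that optimal elements the greedy never picks are still charged against the step at which they were inspected and rejected (the edge case where $t_i$ is both inspected and rejected at step $i$ needs separate care); once this matching is secured, the analytic part is routine and driven entirely by the definition of $\alpha$-submodularity.
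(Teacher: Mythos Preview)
Your proposal is correct and follows essentially the same approach as the paper: the same bookkeeping of considered versus selected elements, the same span/rank argument to obtain $q_t \le t$ and the greedy domination $\rho_{s_i}(\Scal_{i-1}) \ge \rho_{t_i}(\Scal_{i-1})$, and the same telescoping estimate where $\alpha$-submodularity replaces $\rho_{t_i}(\Scal \cup \{t_1,\dots,t_{i-1}\})$ by $\tfrac{1}{1-\alpha}\rho_{t_i}(\Scal_{i-1})$. Your explicit mention of $F(\emptyset)=0$ and the edge case where $t_i$ might be inspected and rejected at step $i$ (which the rank argument rules out, since $\{t_1,\dots,t_i\}$ would then lie in $\text{span}(\Scal_{i-1})$ of rank $i-1$) are points the paper leaves implicit.
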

\xhdr{Remarks} 
We would like to highlight that the above proof differs significantly from that of Theorem 2.1 in~\citet{nemhauser1978analysis}, which is
significantly more involved.
More specifically, in our proof, we cannot apply proposition 2.2 in~\citet{nemhauser1978analysis} because the decreasing monotonicity 
of the marginal gains of the added elements fails to hold in the absence of the submodularity condition.
As a result, our proof does not resort to linear program duality and it is generalizable for the case of having an intersection of $P$ matroids 
instead of just one.

\section{Applications}
\label{sec:applications}
In this section, we consider several real-world applications and their corresponding $\gamma$-weakly submodular and $\alpha$-submodular functions and
matroid constraints. 
We demonstrate that the submodularity ratio and the generalized curvature can be bounded and, as a result, our approximation guarantees are applicable.

\subsection{$\gamma$-weakly submodular functions} 
\xhdr{Tree-structured Gaussian graphical models}
Gaussian graphical models (GGMs) are widely used in many applications, \eg, gene regulatory networks~\citep{friedman2000using, friedman2004inferring, irrthum2010inferring}. 
A GGM is typically characterized by means of the sparsity pattern of the inverse of its covariance matrix $\Sigma^{-1}$. 
Here, we look at maximum likelihood estimation (MLE) problem for tree-structured GMMs~\citep{tan2010learning} from the perspective of 
$\gamma$-weakly submodular maximization. 
More specifically, given a set of $n$-dimensional samples $\{ \xb_1, \ldots, \xb_{N} \}$ and a lower and upper bound on the eigenvalues $\sigma_i(\Sigma)$ of the true covariance 
matrix $\Sigma$, \ie, $L \leq \sigma_i(\Sigma) \leq U$, we can readily rewrite the MLE problem as: 
\begin{equation} \label{eq:mle-ggm}
\underset{\Tcal \in \Tcal^{n}}{\text{maximize}} \quad F(\Tcal)
\end{equation}
with
\begin{equation*}
F(\Tcal) = \max_{\substack{\Sigma^{-1} \in \mathcal{R}(S^n_{++} , \Tcal) \\ U^{-1} \leq \sigma_i(\Sigma^{-1}) \leq L^{-1}}} N \log | \Sigma^{-1}| - \Tr \left(\Sigma^{-1} \sum_{i=1}^{N} x_i x_i^{T}\right),
\end{equation*}
where $\Tcal^{n}$ is the set of all trees with $n$ vertices, $\mathcal{R}(S^n_{++} , \mathcal{T}^{n})$ is the set of $n \times n$ positive definite matrices 
whose sparsity pattern is based on a tree in $\mathcal{T}^n$ and note that, for a fixed $\Tcal$, the optimization problem that defines $F(\Tcal)$ is convex 
with respect to $\Sigma^{-1}$.
Then, the following Theorem (proven by ~\citet{elenberg2016restricted}) and Proposition (proven in Appendix~\ref{app:prop:keyprop}) characterize the submodularity 
ratio $\gamma$ of $F(\Tcal)$:

\begin{theorem} \label{thm:generalclasstheorem}
Let $f: \mathbb{R}^n \rightarrow \mathbb{R}$ be a concave function with curvature based bounds
\begin{equation*}
f(x) + (y-x)^{T}f^{\prime}(x) - R \left \|  y - x\right \|^2 \leq f(y) \leq f(x) + (y-x)^{T}f^{\prime}(x) - r \left \|  y - x\right \|^2.
\end{equation*}
and define the set function $F$ as
\begin{align*}
\forall \Scal \subseteq \{ 1,2,..,n\}, \  F(\Scal) = \max_{\text{supp}(x) \subseteq \Scal} f(x).
\end{align*}
Then, the submodularity ratio of $F$ is $\gamma \geq \frac{r}{R}$.
\end{theorem}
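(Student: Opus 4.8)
The plan is to reduce the set-function inequality to a statement about the optimizers of the inner concave program, and then to read off a lower bound on the individual marginal gains from the smoothness (lower) bound with parameter $R$ and an upper bound on the joint marginal gain from the strong-concavity (upper) bound with parameter $r$. For a set $\Scal \subseteq \Vcal$, write $x_\Scal$ for the maximizer $\arg\max_{\mathrm{supp}(x) \subseteq \Scal} f(x)$, so that $F(\Scal) = f(x_\Scal)$. Because the inner maximization is unconstrained over the coordinate subspace indexed by $\Scal$ and $f$ is strongly concave there (by the upper bound with $r > 0$), this maximizer exists and is unique, and it satisfies the first-order optimality condition $(f'(x_\Scal))_i = 0$ for every $i \in \Scal$. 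This vanishing of the gradient on $\Scal$ is the structural fact that drives the whole argument.

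First I would lower bound each single-element marginal gain. Fix $\Scal$ and $v \notin \Scal$, and set $x = x_\Scal$. Since $x + t e_v$ is feasible for $F(\Scal \cup \{v\})$ for every scalar $t$, applying the smoothness lower bound $f(y) \geq f(x) + (y-x)^T f'(x) - R\|y-x\|^2$ along the direction $e_v$ gives $F(\Scal \cup \{v\}) \geq f(x) + t\,(f'(x))_v - R t^2$, and maximizing the right-hand side over $t$ yields
\[
\rho_v(\Scal) \geq \frac{\big((f'(x_\Scal))_v\big)^2}{4R}.
\]
Summing over $v \in \Omega \setminus \Scal$ turns the left-hand side of the submodularity-ratio inequality into $\tfrac{1}{4R}\sum_{v \in \Omega\setminus\Scal}\big((f'(x_\Scal))_v\big)^2$.

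Next I would upper bound the joint gain $\rho_\Omega(\Scal) = f(x_{\Scal\cup\Omega}) - f(x_\Scal)$. Writing $x = x_\Scal$ and $z = x_{\Scal\cup\Omega}$ and applying the strong-concavity upper bound $f(z) \leq f(x) + (z-x)^T f'(x) - r\|z-x\|^2$, I would use that $z - x$ is supported on $\Scal \cup \Omega$ together with the optimality fact $(f'(x))_i = 0$ for $i \in \Scal$ to collapse the linear term onto the coordinates in $\Omega \setminus \Scal$. Cauchy--Schwarz then gives $(z-x)^T f'(x) \leq \|z-x\|\,\|(f'(x))_{\Omega\setminus\Scal}\|$, so $\rho_\Omega(\Scal) \leq \|z-x\|\,\|(f'(x))_{\Omega\setminus\Scal}\| - r\|z-x\|^2$, and maximizing over the scalar $\|z-x\|$ produces
\[
\rho_\Omega(\Scal) \leq \frac{\|(f'(x_\Scal))_{\Omega\setminus\Scal}\|^2}{4r} = \frac{1}{4r}\sum_{v\in\Omega\setminus\Scal}\big((f'(x_\Scal))_v\big)^2.
\]
Dividing the summed lower bound by this upper bound, the common factor $\sum_{v\in\Omega\setminus\Scal}\big((f'(x_\Scal))_v\big)^2$ cancels and leaves exactly the ratio $r/R$, which is the claim $\gamma \geq r/R$.

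The step I expect to be the main obstacle is the bookkeeping in the joint-gain bound: one must verify that the gradient of $f$ at $x_\Scal$ genuinely vanishes on \emph{all} of $\Scal$, so that the linear term $(z-x)^Tf'(x)$ is confined to $\Omega \setminus \Scal$ rather than to all of $\Scal \cup \Omega$. This hinges on the inner maximizer being an \emph{unconstrained} optimum over its support; if $f$ carried active side constraints (as it effectively does in the Gaussian-graphical-model application through the eigenvalue bounds), the first-order condition would be replaced by a KKT condition and this clean cancellation could fail. The single-element lower bound and the final division are then routine once the gradient's support structure is pinned down.
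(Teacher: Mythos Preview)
The paper does not supply its own proof of this theorem; it attributes the result to \citet{elenberg2016restricted} and only proves the companion Proposition~\ref{prop:keyprop} in the appendix. Your argument is correct and is essentially the standard proof from that reference: lower-bound each single-coordinate gain via the $R$-smoothness inequality along $e_v$, upper-bound the joint gain via the $r$-strong-concavity inequality together with the first-order condition $(f'(x_\Scal))_\Scal = 0$ and Cauchy--Schwarz, and cancel the common factor $\|(f'(x_\Scal))_{\Omega\setminus\Scal}\|^2$. Your closing caveat about active side constraints breaking the first-order condition is well taken, but it concerns the application (Proposition~\ref{prop:keyprop}) rather than the theorem as stated, whose inner problem is unconstrained over a linear subspace.
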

\begin{proposition} \label{prop:keyprop}
The function $F(\Tcal)$ satisfies Theorem~\ref{thm:generalclasstheorem} for $r = \frac{1}{U^2}$ and $R = \frac{1}{L^2}$ and, as a result, its submodularity 
ratio is $\gamma \geq (\frac{L}{U})^2$.
\end{proposition}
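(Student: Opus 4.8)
The plan is to verify the two hypotheses of Theorem~\ref{thm:generalclasstheorem} for $F(\Tcal)$ and then read off $\gamma$ from the resulting curvature constants. First I would make the identification between $F(\Tcal)$ and the ``maximum over a support'' form required by the theorem. Here the relevant concave function is the Gaussian log-likelihood $\ell(\Sigma^{-1}) = N\log|\Sigma^{-1}| - \Tr(\Sigma^{-1}\hat\Sigma)$, with $\hat\Sigma = \sum_{i=1}^{N}\xb_i\xb_i^{T}$, viewed as a function of the free entries of the precision matrix $\Sigma^{-1}$. The diagonal entries are always free, whereas each off-diagonal entry is indexed by an edge of the complete graph on $n$ vertices; requiring the sparsity pattern of $\Sigma^{-1}$ to be supported on a tree $\Tcal$ is then exactly a support constraint on the vector $x$ of off-diagonal entries, so that $F(\Tcal) = \max_{\mathrm{supp}(x)\subseteq\Tcal}\ell$, matching the form in Theorem~\ref{thm:generalclasstheorem}. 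Since $\ell$ is concave on the positive-definite cone, the theorem applies once its curvature bounds are established on the feasible domain.

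The main work, and the step I expect to be the obstacle, is bounding the Hessian of $\ell$ uniformly over the feasible set $\{L\le\sigma_i(\Sigma)\le U\}$. I would compute that, for a symmetric perturbation $H$ of the precision matrix, the second-order term is $\nabla^2\ell\,[H,H] = -N\,\Tr(\Sigma H\Sigma H) = -N\,\|\Sigma^{1/2}H\Sigma^{1/2}\|_F^2$, the linear data term contributing nothing. The crux is the Frobenius bookkeeping: diagonalizing $\Sigma=\sum_k\nu_k u_k u_k^{T}$ and expanding $H$ in the basis $\{u_k\}$ gives $\|\Sigma^{1/2}H\Sigma^{1/2}\|_F^2 = \sum_{k,l}\nu_k\nu_l\,H_{kl}^2$, and since every product $\nu_k\nu_l$ lies in $[L^2,U^2]$ under the eigenvalue constraints, this quantity is sandwiched between $L^2\|H\|_F^2$ and $U^2\|H\|_F^2$. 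Hence on the feasible box $\ell$ obeys the restricted curvature bounds of Theorem~\ref{thm:generalclasstheorem} with a strong-concavity parameter governed by $L^2$ and a smoothness parameter governed by $U^2$ (the roles of $r$ and $R$), whose ratio is $(L/U)^2$.

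Finally I would invoke Theorem~\ref{thm:generalclasstheorem}, which only needs the ratio of these two curvature constants, to conclude $\gamma \ge r/R = (L/U)^2$. Two further points need care. First, the correspondence in the first step must treat the always-present diagonal coordinates consistently, so that the support constraint really captures the tree structure and nothing else. Second, the bounds in the second step must be read in the \emph{restricted} sense, holding on the eigenvalue box that defines the domain of the inner maximization rather than globally on the positive-definite cone; confirming that this box is exactly where Theorem~\ref{thm:generalclasstheorem} requires the bounds is what makes the extreme eigenvalues $L$ and $U$ the right quantities to track.
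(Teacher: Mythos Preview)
Your proposal is correct and follows the same strategy as the paper: bound the second-order curvature of the Gaussian log-likelihood on the eigenvalue box and feed the resulting strong-concavity/smoothness constants into Theorem~\ref{thm:generalclasstheorem}. The only cosmetic difference is that the paper parametrizes along lines $\Sigma_0 = Rx+C$ and bounds the directional second derivative via the trace inequality $\sigma_{\min}(A)\Tr(B)\le\Tr(AB)\le\sigma_{\max}(A)\Tr(B)$, whereas you compute the Hessian bilinear form $-N\Tr(\Sigma H\Sigma H)$ and diagonalize $\Sigma$; these are equivalent bookkeeping choices. One point worth flagging: your Hessian calculation (which is the correct one, since the second derivative of $\log|\Theta|$ brings in $\Theta^{-1}=\Sigma$) yields curvature constants proportional to $L^2$ and $U^2$, not literally the $r=1/U^2$, $R=1/L^2$ written in the statement---but as you rightly observe, only the ratio enters Theorem~\ref{thm:generalclasstheorem}, and that ratio is $(L/U)^2$ either way.
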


\xhdr{Social welfare allocation}
Social welfare maximization has been studied extensively in the context of combinatorial auctions~\citep{feige2009maximizing, feige2006approximation, mirrokni2008tight, vondrak2008optimal}. 
In a popular variant of this problem, given a set of items $\Scal$ and $n$ players, each of them with a monotone utility function $w_{i}: 2^{\Scal} \rightarrow \mathbb{R}^+$, 
the goal is to partition $\Scal$ into disjoint subsets $\Scal_{1}, \ldots ,\Scal_{n}$ that maximize the social welfare $W(\Scal) = \sum_{i=1}^n w_i (\Scal_i)$. 
Since the partition of the items can be viewed as a matroid constraint, \ie, for any valid partitioning, each item is assigned to exactly one player, the above formulation
reduces to the problem of maximizing a set function subject to a matroid constraint.
In this context, Calinescu et al.~\citep{calinescu2011maximizing} has recently proposed a polynomial time algorithm with approximation guarantees whenever $w_i$ are submodular 
functions. 
Here, since the sum of $\gamma$-weakly submodular functions is also $\gamma$-weakly submodular, our results imply that our greedy algorithm enjoys approximation guarantees
whenever $w_i$ are $\gamma$-weakly submodular. 
This includes natural utility functions such as $w_i(\Scal_i) = \sum_{j \in \Scal_i} F_j(x_{ij})$, where $x_{ij}$ is the \emph{amount} of item $j$ by player $i$ and $F_j$ is a 
strongly concave function that satisfies the curvature based bounds in Theorem~\ref{thm:generalclasstheorem}.

\xhdr{LPs with combinatorial constraints}
In a recent work~\citep{bian2017guarantees}, Bian et al. have shown that linear programs with combinatorial constraints, which appeared
in the context of inventory optimization, can be reduced to maximizing $\gamma$-weakly submodular functions. 
More specifically, define a set function $F(\Scal) = \max_{\text{supp}(x) \subseteq \Scal, x \in \mathcal{P}} d^{T} x$, where $\Pcal$ is a 
polytope and $d$ is a given vector. 
Then, they have shown that $F$ has non-zero submodularity ratio that depends on the polytope $\Pcal$. However, they only impose
simple cardinality constraints on $\Scal$. 
Our results imply that the greedy algorithm also enjoys approximation guarantees under a general matroid constraint $\Scal \in \Ical$.

\subsection{$\alpha$-submodular functions}
\xhdr{Visibility optimization in link recommendation}
In the context of viral marketing, a recent line of work~\citep{karimi2016smart, upadhyay2018deep, zarezade2018steering, zarezade2017redqueen}, 
has developed a variety of algorithms to help users in a social network maximize the visibility of the stories they post. 
More specifically, these algorithms find the best times for these users, the \emph{broadcasters}, to share stories with her followers so that 
they elicit the greatest attention.
Motivated by this line of work and recent calls for fairness of exposure in social networks~\citep{biega2018equity, singh2018fairness}, we consider 
the following visibility optimization problem in the context of link recommendation~\citep{lu2011link}:
given a set of candidate links $\Ecal'{}$ provided by a link recommendation algorithm,
the goal is to find the subset of these links $\Ecal_{\Bcal} \subseteq \Ecal'{}$ that maximize the average visibility that a set of broadcasters $\Bcal$ 
achieve with respect to the (new) followers induced by these links\footnote{\scriptsize The followers the broadcasters would gain if $\Ecal_{\Bcal}$ were added.}, 
under constraints on the maximum number of links per broadcaster.
More specifically, we can show that this problem reduces to maximizing a $\alpha$-submodular function (average visibility) under a partition matroid constraint 
(number of links per broadcaster), where the generalized curvature $\alpha$ can be analytically bounded.
For space constraints, we defer most of the technical details to Appendix~\ref{app:visibility}, which also provide additional motivation for the problem, and 
here we just state the main results.

Formally, let measure the visibility a broadcaster $u \in \Bcal$ achieves with respect to the (new) followers induced by the links $\Ecal_{\Bcal}$ as the average 
number of stories posted by her that lie within the top $K$ positions of those followers'{} feeds over time. Here, for simplicity, each follower'{}s feed ranks stories 
in inverse chronological order, as in previous work~\citep{karimi2016smart, zarezade2018steering, zarezade2017redqueen}.
Moreover, assume that\footnote{\scriptsize These assumptions are natural in most practical scenarios, as argued in Appendix~\ref{app:visibility}.}: 
\begin{itemize}[noitemsep,nolistsep,leftmargin=0.8cm]
\item[(i)] the intensities (or rate) $\lambda(t)$ at which broadcasters posts stories and followers receive stories are $\xi$-bounded, \ie, $\sup(\lambda) \leq \xi \inf(\lambda)$; and, 
\item[(ii)] at each time $t$, the intensity at which each broadcaster posts is lower than a fraction $\frac{1}{\rho \sqrt{K-1}}$ of each of her followers'{} feeds intensity, where $\rho$ is a given
constant. % intensity at which each of her followers'{} feeds receive posts due to the broadcasters not induced by $\Ecal_{\Bcal}$, where $\rho > 0$ is a given constant.
\end{itemize}
Then, we can characterize the generalized curvature $\alpha$ of the average visibility $F(\Ecal_{\Bcal})$ these broadcasters achieve using the following Proposition:
\begin{proposition} \label{prop:visibility-strong-submodularity-ratio}
The generalized curvature of the average top $K$ visibility $F(\Ecal_{\Bcal})$ is given by
\begin{equation}
\alpha \approx 1-\Omega\left(\frac{1}{\xi^2} \max\left\{ \frac{1}{\sqrt{K}} , \rho e^{-\frac{1}{\rho^{2}}} \right\} \right),
\end{equation}
where, if $\rho \geq 1$, then $\alpha \approx 1-\Omega\left(\frac{1}{\xi^2}\right)$.
\end{proposition}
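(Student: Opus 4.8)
The plan is to work directly from the definition of generalized curvature: writing $v=(u,f)$ for a candidate link, I need to lower bound the ratio $\rho_v(\Acal)/\rho_v(\Bcal)$ uniformly over all nested pairs $\Acal \subseteq \Bcal \subseteq \Ecal'$ and all $v$, since $1-\alpha$ is exactly the infimum of this ratio. The first step is to obtain a tractable expression for the marginal gain $\rho_v(\Scal)$ from the temporal point process model developed in Appendix~\ref{app:visibility}. Because adding $v=(u,f)$ only alters follower $f$'s feed, $\rho_v(\Scal)$ depends on $\Scal$ only through the aggregate feed intensity $\mu_f(\Scal)$ that $f$ receives from the already selected links, together with the posting intensity $\lambda_u$. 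Under inverse-chronological ranking, the visibility of one of $u$'s stories is the probability that fewer than $K$ competing stories have been posted since it, i.e., a tail of a (time-inhomogeneous) Poisson count whose mean is governed by $\mu_f$; integrating this survival probability over time and over $u$'s posting process yields the marginal gain as a functional of $\mu_f(\Scal)$ and $\lambda_u$.

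The second step is to observe that as $\Scal$ ranges over $\Acal \subseteq \Bcal$, the only relevant quantity that changes is the competing intensity $\mu_f(\Scal)$, which is nondecreasing, so the curvature is controlled by how much the top-$K$ survival functional can contract as this intensity grows. Here I would invoke the two standing assumptions. The $\xi$-boundedness $\sup(\lambda)\le \xi \inf(\lambda)$ pins the minimal and maximal feed intensities to within a factor $\xi$; since the marginal gain is essentially a product of an intensity ratio in the numerator (the gain accruing to $u$) and an intensity ratio in the denominator (the crowding of $f$'s feed), applying the bound on both sides produces the $1/\xi^2$ prefactor. The per-broadcaster assumption $\lambda \le \frac{1}{\rho\sqrt{K-1}}\mu$ guarantees that any single link perturbs the feed intensity by at most a $1/(\rho\sqrt{K-1})$ fraction, so the deviation of the competing count from its mean $K$ stays on the scale of its standard deviation $\sqrt{K}$ rescaled by $1/\rho$.

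The technical core, and the step I expect to be the main obstacle, is translating this controlled perturbation into a lower bound on the survival-functional ratio via Poisson concentration. Writing the count of competing stories in the critical window as a Poisson variable with mean near $K$, I would split into two regimes. When $\rho$ is large the perturbation is small relative to the standard deviation, and a central-limit / Berry--Esseen style estimate shows the survival probability changes by at most an $O(1/\sqrt{K})$ fraction, yielding the $1/\sqrt{K}$ term. When $\rho$ is small (a large perturbation) I would instead use a Chernoff large-deviation bound for the Poisson tail, $P(|N-K|\ge K/\rho)\le e^{-\Omega(1/\rho^2)}$ up to a polynomial factor in $\rho$, which produces the $\rho\,e^{-1/\rho^2}$ term. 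Taking the better of the two bounds gives the claimed $\max\{1/\sqrt K,\ \rho e^{-1/\rho^2}\}$, and combining with the $1/\xi^2$ prefactor yields $1-\alpha = \Omega\!\big(\tfrac{1}{\xi^2}\max\{1/\sqrt K, \rho e^{-1/\rho^2}\}\big)$; when $\rho\ge 1$ the exponential term is bounded below by a constant and we recover $1-\alpha=\Omega(1/\xi^2)$. The delicate part is that the feed intensity may vary across the entire lattice between $\Acal$ and $\Bcal$, so the concentration estimate must hold uniformly and remain tight at both ends simultaneously; obtaining matching constants, rather than a loose bound, is what makes the argument nontrivial.
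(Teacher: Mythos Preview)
Your high-level plan is in the right spirit---reduce to a single follower, identify the marginal gain with an integrated Poisson tail, and pull out a $1/\xi^2$ factor by applying $\xi$-boundedness once on the outer weight $\gamma(\tau)$ and once inside the ratio---and this does match how the paper obtains the $\xi^{-2}$ dependence. But the technical core of your proposal differs from the paper's argument and, as written, skips over the structural obstacle that the paper spends most of its effort on.

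The paper does \emph{not} bound the curvature ratio by a single concentration estimate. Instead it first writes the visibility (Eq.~\eqref{eq:Uwexact}) as a constant plus a term governed by the antiderivative $G$ plus a third term $\Pcal_j$ involving $\int_0^t \Gamma(K, J(\mu+\gamma,\tau,t))\gamma(\tau)\,d\tau$. The $G$-term is shown to be exactly submodular (Lemma~\ref{lemma:concave}), so only $\Pcal_j$ contributes to the curvature. The crucial point, which your proposal does not mention, is that $x\mapsto -\Gamma(K,x)$ is convex on $(0,K-1)$ and concave on $(K-1,\infty)$; hence the marginal-gain difference $\Gamma(K,J)-\Gamma(K,J+d)$ is \emph{not} monotone in $J$ and the curvature ratio is not controlled by a single tail comparison. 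The paper handles this by choosing $\tau_0$ so that $J(\mu+\gamma,\tau_0,t)=K-1$, splitting the time integral at $\tau_0$, and showing that the contribution from the convex region $(\tau_0,t)$---which can have the wrong sign---is dominated by $(\theta-1)$ times the contribution from the concave region $(0,\tau_0)$ (Lemma~\ref{keylemma}). The values of $\theta$ that make this work are then read off from explicit inequalities for ratios of $\Gamma$-integrals (Lemma~\ref{lemma:inequalities}), proved by Stirling-type manipulations (Lemmas~\ref{firstlowlevelinequality}--\ref{forthlowlevelinequality}); these give the $1/\sqrt{K}$ and $\min\{\rho,1\}e^{-1/\rho^2}$ factors respectively, and combining Theorems~\ref{thm:first-theorem} and~\ref{thm:second-theorem} yields the proposition.

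Two concrete gaps in your plan: first, your ``Berry--Esseen / Chernoff'' dichotomy treats the marginal gain as a single Poisson tail probability, but it is a \emph{time integral} of differences of such tails, and the sign of that difference flips across the inflection point $J=K-1$; without the $\tau_0$-split you cannot conclude the ratio is bounded below. Second, the $1/\sqrt{K}$ bound in the paper (Theorem~\ref{thm:first-theorem}) holds \emph{without} any assumption on $\rho$, whereas you derive it from the small-perturbation (large $\rho$) regime; the $\rho e^{-1/\rho^2}$ bound (Theorem~\ref{thm:second-theorem}) is the one that uses the per-broadcaster assumption, and it arises from bounding $d=J(\lambda,\tau_0,t)\le \sqrt{K-1}/\rho$ and then invoking the pointwise lower bound on $\Gamma(K,K-1+d)$, not from a Chernoff tail on $|N-K|$.
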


\xhdr{Sensor placement with submodular costs}
In sensor placement optimization~\citep{iyer2012algorithms, krause2005near, krause2008near}, the goal is typically maximizing the mutual information between the chosen 
locations $\Acal$ and the unchosen ones $\Vcal \backslash \Acal$, \ie, $F_{I}(\Acal) = I(X_{\Acal} ; X_{\Vcal \backslash \Acal})$ while simultaneously minimizing a cost function 
$c(\Acal)$ associated with the chosen locations. 
Since the mutual information is a submodular function and the costs are also often submodular, \eg, there is typically a discount when purchasing sensors in bulk, the problem 
can be reduced to maximizing a set function representable as difference between submodular functions, \ie, $F(\Acal) = F_{I}(\Acal) - \lambda c(\Acal)$, where $\lambda$ is a given parameter.
Moreover, there may be constraints on the amount of sensors in a given geographical area~\citep{powers2015sensor}, which can be represented as partition matroid constraints.

Then, we can characterize the generalized curvature of $F$ using the following Proposition, which readily follows 
from Proposition~\ref{thm:diff-submodular}:
\begin{proposition} \label{prop:sensor-placement-strong-submodularity-ratio}
Let $\alpha^{*}$ be the mininum constant for which $$\lambda [c(\Acal \cup \{v\}) - c(\Acal)] \leq \alpha^{*} [F_I(\Scal \cup \{v\}) - F_I(\Scal)].$$ 
Then, the generalized curvature of $F(\Acal)$ is $\alpha \geq \alpha^{*}$.
\end{proposition}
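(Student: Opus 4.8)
The plan is to derive Proposition~\ref{prop:sensor-placement-strong-submodularity-ratio} as a direct instance of Proposition~\ref{thm:diff-submodular}. First I would make the identification $F_1 = F_I$ and $F_2 = \lambda c$, so that the objective $F(\Acal) = F_I(\Acal) - \lambda c(\Acal)$ is written in the form $G = F_1 - F_2$ required there. The whole point is that this is the \emph{same} algebraic setup as Proposition~\ref{thm:diff-submodular}, specialized to the mutual information and the (scaled) cost, so essentially no new argument is needed beyond checking that the hypotheses are met and that the constants line up.

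Next I would verify the hypotheses of Proposition~\ref{thm:diff-submodular}, namely that $F_1$ and $F_2$ are both nondecreasing and submodular. For $F_2 = \lambda c$ this is immediate: $c$ is nondecreasing and submodular by assumption and $\lambda \geq 0$, and scaling a nondecreasing submodular function by a nonnegative constant preserves both properties. For $F_1 = F_I$ I would invoke the standard fact that the mutual information $I(X_{\Acal}; X_{\Vcal \backslash \Acal})$ is submodular. The defining inequality of $\alpha^{*}$ in the statement, $\lambda[c(\Scal \cup \{v\}) - c(\Scal)] \leq \alpha^{*}[F_I(\Scal \cup \{v\}) - F_I(\Scal)]$, is then exactly Eq.~\eqref{eq:ds} with $F_1 = F_I$, $F_2 = \lambda c$, and $\alpha = \alpha^{*}$; since $\alpha^{*}$ is taken to be the minimal such constant, it coincides with the constant $\alpha$ appearing in Proposition~\ref{thm:diff-submodular}. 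With the hypotheses in place, that proposition applies verbatim and yields that $F$ has generalized curvature $\alpha^{*}$, which in particular gives $\alpha \geq \alpha^{*}$ as claimed; the chain of inequalities in its proof transfers line by line with $F_I$ in place of $F_1$ and $\lambda c$ in place of $F_2$.

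I expect the main obstacle to be the monotonicity requirement on $F_1 = F_I$. The footnote to Proposition~\ref{thm:diff-submodular} relies on $G$ (and hence the constituent functions) being nondecreasing so that the constant $1-\alpha$ is well defined, but the mutual information $I(X_{\Acal}; X_{\Vcal \backslash \Acal})$ is in general \emph{not} monotone in $\Acal$. I would therefore either restrict attention to the regime in which $F_I$ is nondecreasing, or replace it by its monotone information-gain surrogate, and then check that the discounted-cost structure keeps $\lambda c$ dominated by the information marginals so that $\alpha^{*} < 1$ and the guarantee of Theorem~\ref{thm:guarantee-xi-submodular} remains nonvacuous. Once monotonicity is secured, the remainder is the routine substitution described above.
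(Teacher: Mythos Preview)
Your approach is exactly the paper's: the proposition is stated as readily following from Proposition~\ref{thm:diff-submodular}, and your identification $F_1 = F_I$, $F_2 = \lambda c$ together with the observation that the defining inequality for $\alpha^{*}$ is precisely Eq.~\eqref{eq:ds} is all that is intended. The monotonicity concern you raise about $F_I$ is legitimate but is not addressed in the paper either; it is a wrinkle in the application rather than a defect in your argument relative to the paper's own.
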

The above proposition assumes that the marginal gain of the cost function times $\lambda$ is always smaller than a fraction $1-\alpha^{*}$ of the marginal 
gain of the mutual information, which in turns imposes an upper bound on the given parameter $\lambda$ for which $\alpha$-submodularity holds.
\begin{figure*}[t]
\centering
\subfloat[Negative log-likelihood $-F(\Tcal)$]{\includegraphics[width=0.3\textwidth]{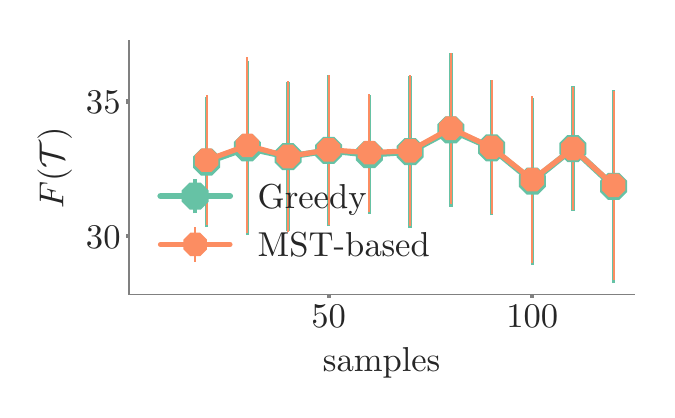}} \hspace{5mm}
\subfloat[Edge errors $|\hat{\Tcal} \backslash \Tcal|$]{\includegraphics[width=0.3\textwidth]{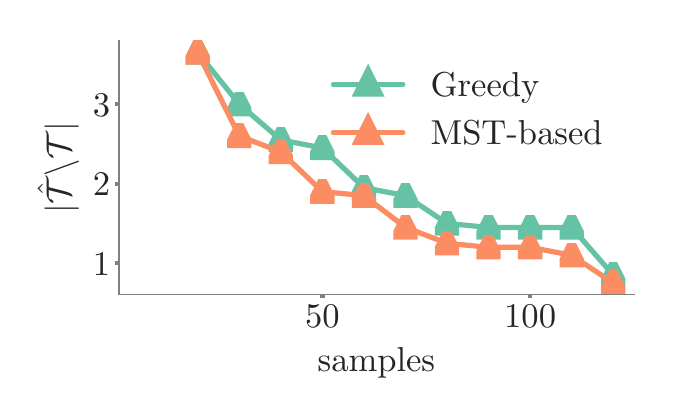}}
 \caption{Negative log-likelihood $-F(\Tcal)$ and edge errors $|\hat{\Tcal} \backslash \Tcal|$ achieved the greedy algorithm (Greedy; green) and a MST-based state of the art method (MST-based; orange) by
 Tan et al.~\citep{tan2010learning}. Each point corresponds to the average value across $20$ repetitions.} \label{fig:lineplots2}
\end{figure*}

\xhdr{More applications}
As pointed out by~\citet{iyer2012algorithms}, set functions representable as a difference between submodular functions emerged in more applications,
from feature selection and discriminatively structure graphical models and neural computation to probabilistic inference. 
In all those applications, the inequality in Eq.~\ref{eq:ds}, which must be satisfied for the functions to be $\alpha$-submodular, have a natural interpretation. 
For example, in feature selection with a submodular cost model for the features, it just imposes an upper bound on the penalty parameter that controls the 
tradeoff between the predictive power of a feature and its cost, similarly as in the case of sensor placement.

\section{Experiments}
\label{sec:Results}
In this section, our goal is to show that greedy algorithm, given by Algorithm~\ref{alg:greedy}, does achieve a competitive performance in practice. 
To this aim, we perform experiments on synthetic and real-world experiments in two of the applications introduced in Section~\ref{sec:applications},
namely, tree-structure Gaussian graphical model estimation and visibility maximization in link recommendation\footnote{\scriptsize We will release an open-source implementation of our algorithm with the final version of the paper.}.

\subsection{Tree-structured Gaussian graphical models} \label{sec:ggm-experiments}
\xhdr{Data description and experimental setup} 
%
% \manuel{the paragraph below just describes the baseline method we compare with. We skip that to save space}
% In  ~\citep{tan2010learning}, authors propose a two-level estimating method based on the maximum spanning tree (MST) to estimate the edge structure of tree-based gaussian graphical model. First, they 
% estimate the covariance matrix from the observed data, then they minimize a KL-divergence term $KL(\hat{p} || q)$ with respect to $q$, where $\hat{p}$ is the Gaussian with respect to the estimated covariance 
% matrix $\hat{\Sigma}$, and $q$ is picked from Gaussian distributions whose inverse covariance is sparse according to a tree. However, we have a direct formulation of the maximum likelihood in 
% \ref{thirdoptimizationproblem}, and our greedy algorithm enjoys approximation guarantees for it. \\
%
% In our experiments, 
%
We experiment with random trees\footnote{\scriptsize To generate a random tree, we start with an empty graph 
and add edges sequentially. At each step, we pick two of the graph's current connected components uniformly at random and connect them with an edge, choosing the end points 
uniformly at random. The process continues until there is only one connected component.} $\Tcal \in \Tcal^{n}$ with $n = 20$ vertices and edge weights $\Sigma_{ij} \sim U[0, 10]$.
We compare the performance of our estimation procedure, based on Algorithm~\ref{alg:greedy}, with the minimum spanning tree (MST) based estimation procedure 
by~\citet{tan2010learning}, which is the state of the art method, in terms of two performance metrics: 
negative log-likelihood $-F(\Tcal)$ and edge errors $| \hat{\Tcal} \backslash \Tcal |$.
Here, for our estimation procedure, at each iteration of the greedy algorithm, we solve the corresponding convex problem in Eq.~\ref{eq:mle-ggm} using 
CVXOPT~\citep{diamond2016cvxpy}.
Moreover, we run the estimation procedures using different number of samples and, for a fix number of samples, we repeat the experiment $20$ times to obtain reliable estimates
of the performance metrics.
\begin{figure*}[t]
\centering
\subfloat[$c_i = 10$]{\includegraphics[width=0.3\textwidth]{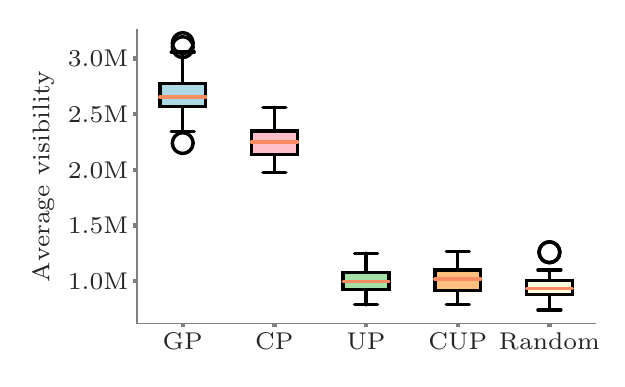}} \hspace{2mm}
\subfloat[$c_i = 20$]{\includegraphics[width=0.3\textwidth]{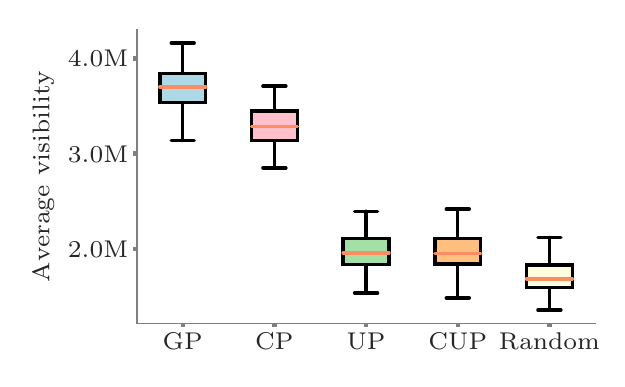}} \hspace{2mm}
\subfloat[$c_i = 40$]{\includegraphics[width=0.3\textwidth]{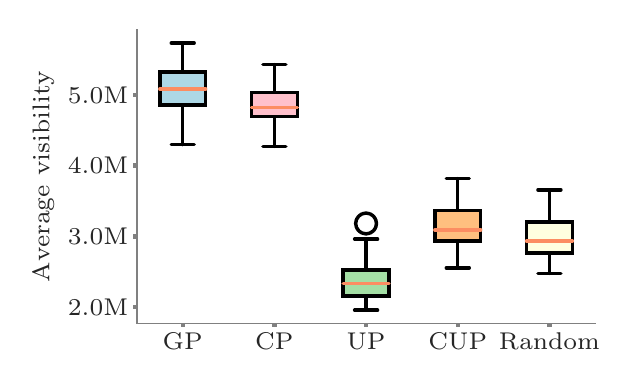}}
 \caption{Average visibility $F(\Ecal_{\Bcal})$ with $K=10$ achieved by the greedy algorithm (GP), the three heuristics (CP, UP, CUP) and the trivial baseline (Random) using Twitter data. 
 The solid horizontal line shows the median visibility and the box limits correspond to the 25\%-75\% percentiles.} \label{fig:boxplots}
\end{figure*}

% We use random graphs with 20 vertices. At each phase, we sample a random tree with random weights in the interval $(0,10)$. We construct the inverse matrix of our Gaussian based 
% on the edge-weights in that tree, and fill out each diagonal entry by summing up the weights of the edges connected to its corresponding vertex in the random graph. Then, we take a 
% number of samples from the proposed Gaussian distribution, and subsequently we apply the Greedy and the MST-based estimators. We repeat this procedure for different number of 
% samples. For a fixed number of samples, we repeat the procedure 20 times and plot the mean value of the obtained likelihood and number of mistaken picked edges. 

\xhdr{Performance}
Figure \ref{fig:lineplots2} summarizes the results in terms of the two performance metrics, which show that both methods perform comparably, \ie, in terms of negative log-likelihood, our method 
beats the MST-based method slightly while, in terms of edge error, the MST-based method beats ours. 
%
% Note that we have considered the average negative of log likelihood function, which is $\frac{-f}{N}$. . Hence, as illustrated in Figure , we expect that the greedy attains a slightly higher likelihood that the MST-% based method. As you can see, both methods perform comparable, almost the same.
%
% Our method does not have the first level of covariance estimation and directly aims to maximize the likelihood function.
%
The main benefit of using our greedy algorithm for this application is that, in contrast with MST, it provides optimality guarantees in terms of likelihood maximization.
That being said, our goal here is to demonstrate that the greedy algorithm, which is a generic algorithm, can achieve competitive performance in a structured estimation problem for 
which a specialized algorithm exists.

\subsection{Visibility optimization in link recommendation} \label{sec:visibility-real-experiments}
\xhdr{Data description and experimental setup} 
We experiment with data gathered from Twitter as reported in previous work~\citep{cha2010measuring}, which comprises user profiles, (directed) links between users, and (public) 
tweets. 
The follow link information is based on a snapshot taken at the time of data collection, in September 2009. 
Here, we focus on the tweets posted during a two month period, from July 1, 2009 to September 1, 2009, in order to be able to consider the social graph to be approximately 
static, sample a set $\Acal$ of $2000$ users uniformly at random, record all the tweets they posted. 

We compare the performance of the greedy algorithm with a trivial baseline that picks edge uniformly at random and the same three heuristics we used in the experiments
with synthetic data.
Then, for $c_i \in \{10, 20, 40\}$, we repeat the following procedure $50$ times: 
(i) we pick uniformly at random a set $\Bcal \subset \Acal$ of $80$ users as broadcasters; (ii) for each broadcaster $i$, we pick uniformly at random a set $\Hcal_i$ of $20$ 
of their followers; (iii) we record all tweets not posted by broadcasters in $\Bcal$ in the feeds of the users in $\Hcal = \cup_i \Hcal_i$; and (iv) we run the greedy algorithm, the 
heuristics, and the trivial baseline and record the sets $\Ecal_{\Bcal} = \{ (i, j) : i \in \Bcal, j \in \Hcal \}$ each provides.
Here, we run all methods using empirical estimates of the relevant quantities, \ie, $F$ using Eq.~\ref{eq:estimator} (see Appendix~\ref{app:robustness-visibility}) and $\int_{0}^{T} \gamma(t) dt$ 
using maximum likelihood, computed using the tweets posted during the first month and evaluate their performance using empirical estimates of $F$ using the tweets posted during 
the second month.

\xhdr{Solution quality}
Figure~\ref{fig:boxplots} summarizes the results by means of box plots, which show that the greedy algorithm consistently beats all heuristics and the trivial baseline. Moreover,
we did experiment with other parameters settings (\eg, $|\Hcal|$, $K$ and $c_i$) and found our method to be consistently superior to alternatives.
Appendix~\ref{app:visibility-synthetic-experiments} contains additional results using synthetic data.

\section{Conclusions}
\label{sec:conclusions}
We have % initiated the study of the problem of maximizing non-submodular nondecreasing set functions under a matroid 
% constraint and 
shown that a simple variation of the standard greedy algorithm offers approximation guarantees at maximizing non-submodular 
nondecreasing set functions under a matroid constraint.
Moreover, we have identified a particular type of $\gamma$-weakly submodular functions, which we called $\xi$-submodular 
functions, for which the greedy algorithm offer a stronger approximation factor that is independent of the rank of the matroid.
In addition, we have shown that these approximation guarantees are applicable in a variety several real-world applications, from
tree-structure Gaussian graphical models and social welfare allocation to link recommendation.
%
% Finally, we have demonstrated that the greedy algorithm does achieve competitive performance in practice using a variety of experiments
% in synthetic and real-world data.

Our work opens up several interesting avenues for future work. For example, a natural step would be to include the curvature of a set
function in our theoretical analysis, as defined in~\citet{bian2017guarantees}.
Moreover, it would be very interesting to analyze the tightness of the approximation guarantees and obtaining better upper and lower 
bounds for $\alpha$ and $\gamma$, respectively, or even unbiased estimates, for $\xi$ and $\gamma$ in the applications we considered.
Finally, it would be worth to extend our analysis to other notions of approximate submodularity~\citep{borodin2014weakly, du2008analysis, horel2016maximization,krause2008near}.

{ 
\bibliographystyle{plainnat}
\bibliography{refs}
}

\clearpage
\newpage
\onecolumn

\appendix
\section{Proof of Lemma~\ref{lem:key-weakly-submodular}} \label{app:lem-key-weakly-submodular}
Let $\Tcal$ be the optimal set of items and $\Scal$ be the set of items selected by the greedy algorithm. 
Moreover, let $\{ s_{i} \}_{i=1}^{t}$ be the items selected by the greedy algorithm in the first $t$ steps and $\{ t_{i}\}_{i=1}^{q_{t}}$ 
be the items in $\mathcal{T}$ considered by the greedy algorithm also in the first $t$ steps in order of their consideration in 
the algorithm. 
Then, we first state the following facts, that we will use throughout the proof:
\begin{itemize}[noitemsep,nolistsep,leftmargin=0.8cm]
\item[(i)] $\Scal$ and $\Tcal$ have cardinality equal to the rank of the matroid, \ie, $|\Tcal| = |\Scal| = r$. This follows from the monotonicity of
the function $F$. 
\item[(ii)] For any $t$, it readily follows that $q_t \leq t$. This follows from the proof of Theorem~\ref{thm:guarantee-xi-submodular}. % This follows from the proof of Theorem~4.1 in Gatmiry and Gomez-Rodriguez~\citep{gatmiry2018network}, 
\item[(iii)] There is a subset $\Rcal \subseteq \Tcal - \Scal_t$ with cardinality $|\Rcal| = r - t$ such that $\Scal_t \cup \Rcal \in \Ical$. This follows
from the definition of a matroid.
\item[(iv)] For all $0 \leq i \leq q_t - 1$,
\begin{equation} \label{eq:firstineq}
\rho_{s_{i+1}}(\Scal_{i}) \geq \rho_{t_{i+1}}(\Scal_{i}).
\end{equation}
This holds because, at step $i+1$, $s_{i+1}$ and $t_{t+1}$ are not considered yet and greedy selects $s_{i+1}$. Therefore, $s_{i+1}$ must have 
a higher marginal gain than $t_{i+1}$.
\item[(v)] Let $\Rcal' = \Tcal - \{t_1, \ldots, t_{q_t}\} - \Rcal$, with $|\Rcal'| = t - q_t$. Then, for all $e \in \Rcal'$ and for all $q_t \leq i \leq t-1$,
\begin{equation}
\rho_{s_{i+1}}(\Scal_i) \geq \rho_e(\Scal_i) \label{eq:secondineq}
\end{equation}
This holds because none of the items in $\Rcal'$ are considered by the greedy algorithm in the first $t$ steps. 
\end{itemize}
Now, we can use Eq.~\ref{eq:firstineq} and the fact that $\rho_{s_{i+1}}(\Scal_i) \geq \rho_{s_j}(\Scal_i)$ for $j > i$ to upper bound 
$\rho_{t_{i+1}}(\Scal_t)$ for all $0 \leq i \leq q_t - 1$:
\begin{align}
\rho_{t_{i+1}}(\Scal_{t}) & \leq \rho_{t_{i+1}}(\Scal_{t}) + \sum_{j = i}^{t-1} \rho_{s_{j+1}} (\Scal_{j}) = \rho_{\{ s_{i+1}, ... , s_{t},t_{i+1}\}}(\Scal_{i}) \nonumber \\
& \leq \frac{1}{\gamma}(\rho_{t_{i+1}}(\Scal_i) + \sum_{j=i}^{t-1} \rho_{s_{j+1}}(\Scal_{i})) \leq \frac{1}{\gamma}(t +1 - i) \rho_{s_{i+1}}(\Scal_{i}). \label{eq:maininequality1}
\end{align}
Using the same technique, if we choose an arbitrary order on the elements of $\Rcal' = \Tcal - \{t_1, \ldots, t_{q_t}\} - \Rcal = \{e_1, \ldots, e_{t-q_t}$,
we can also upper bound $\rho_{e_{i+1}}(\Scal_t)$ for all $0 \leq i \leq t - q_t - 1$:
\begin{equation}
\rho_{e_{i+1}}(\Scal_{t}) \leq \frac{1}{\gamma}(t +1 - q_{t} - i) \rho_{s_{q_{t} + i+1}}(\Scal_{q_{t} + i}) \label{eq:maininequality2}
\end{equation}
Then, it follows from Eqs.~\ref{eq:maininequality1} and~\ref{eq:maininequality2} that:
\begin{align}
\sum_{e \in \Tcal - \Rcal} \rho_{e}(\Scal_{t}) &\leq \frac{1}{\gamma}\sum_{i=0}^{t-1} (t+1-i)\rho_{s_{i+1}}(\Scal_{i}) = \frac{1}{\gamma}\sum_{i=0}^{t-1} (t+1-i) (F(\Scal_{i+1})-F(\Scal_{i})) \nonumber \\
& = \frac{1}{\gamma}\big[ F(\Scal_t) + \sum_{i=0}^{t} F(\Scal_i) \big] = \frac{1}{\gamma}\big[ 1-\frac{K_t}{OPT} + \sum_{i=0}^{t} F(\Scal_i) \big] \label{eq:secondbound}
\end{align}

In what follows, we will upper bound the sum in the above equation. To this end, we assume there exists $0 \leq \theta \leq 1$ such that it holds that $K_{t+1} \leq (1-\theta) K_{t}$. 
We will specify the value of $\theta$ later. 
Then, it follows that $K_{t} \leq (1-\theta)^{t} OPT$ and, using that $(1-\theta)^{\frac{1}{\theta}} \leq e^{-1}$,
\begin{equation}
t \leq \frac{1}{\theta} \log \left(\frac{OPT}{K_t}\right).
\end{equation}
Next, consider the geometric series $1, (1-\theta), \ldots, (1-\theta)^{\left\lfloor \frac{1}{\theta} \log \left(\frac{OPT}{K_t}\right) \right\rfloor}$ and notice
that for $j^* = \left\lfloor \frac{1}{\theta}\log(\frac{OPT}{K_{t}})\right\rfloor$, it holds that
\begin{equation*}
(1 - \theta)^{j^*} OPT \leq (1 - \theta)^{\frac{1}{\theta}\log(\frac{OPT}{K_{t}}) - 1} OPT \leq e^{-\log(\frac{OPT}{K_t})}(1-\theta)^{-1} OPT \leq \frac{K_t}{1-\theta} \leq K_{t-1}.
\end{equation*}
Then, it follows that, for each $0 \leq i \leq t-1$, there exists at least a $0 \leq j \leq \left\lfloor \frac{1}{\theta}\log(\frac{OPT}{K_{t}})\right\rfloor$ such 
that $K_{i+1} \leq (1-\theta)^j OPT \leq K_i$. 
Hence, we are ready to derive the upper bound we were looking for:
\begin{align*}
\sum_{i=0}^{t} F(S_{i}) & \leq \left( \left\lfloor \frac{1}{\theta}\log(\frac{OPT}{K_{t}}) \right\rfloor + 1 - \sum_{j=0}^{\left\lfloor \frac{1}{\theta}\log(\frac{OPT}{K_{t}}) \right\rfloor} (1-\theta)^{j} \right) OPT \\
& = \left(  \left\lfloor \frac{1}{\theta}\log(\frac{OPT}{K_{t}}) \right\rfloor - \sum_{j=1}^{\left\lfloor \frac{1}{\theta}\log(\frac{OPT}{K_{t}}) \right\rfloor} (1-\theta)^{j} \right) OPT \\
& \leq  \left( \frac{1}{\theta}\log(\frac{OPT}{K_{t}}) - (1-\theta)\frac{1-(1-\theta)^{\frac{1}{\theta}\log(\frac{OPT}{K_{t}})}}{\theta} \right) OPT \\
& \leq \left( \frac{\log(\frac{OPT}{K_{t}}) - (1-\theta)(1 - \frac{K_{t}}{OPT}) }{\theta} \right) OPT.
\end{align*}

Moreover, using the above results, we can derive an upper bound on $K_t$:
\begin{align*}
K_{t} & = OPT - F(\Scal_{t}) = F(\mathcal{T}) - F(\Scal_t) \leq F(\mathcal{T} \cup \Scal_{t}) - F(\Scal_t) \leq \rho_{\Tcal}(\Scal_{t}) \\
& \leq \frac{1}{\gamma} \sum_{e \in \Tcal} \rho_{e}(\Scal_t) = \frac{1}{\gamma} \left( \sum_{e \in \Tcal - \Rcal} \rho_{e}(\Scal_t) +  \sum_{e \in \Rcal} \rho_{e}(\Scal_t) \right) \\
& \leq \frac{1}{\gamma} \left[ \frac{1}{\gamma}\left( \frac{\log(\frac{OPT}{K_{t}}) - (1-\theta)(1 - \frac{K_{t}}{OPT}) }{\theta} + 1 - \frac{K_{t}}{OPT}\right) OPT + \sum_{e \in \Rcal} \rho_{e}(\Scal_t) \right]
\end{align*}
Therefore, if we define $e^{*} = \argmax_e \{ \rho_e(\Scal_t) \}$, then
\begin{equation*}
\rho_{e^*}(\Scal_t) \geq \frac{1}{r-t} \left[ \gamma K_t - \frac{1}{\gamma}\left( \frac{\log(\frac{OPT}{K_{t}}) - (1-\theta)(1 - \frac{K_{t}}{OPT}) }{\theta} +1 - \frac{K_t}{OPT} \right) OPT \right]
\end{equation*}
At this point, we find the value of $\theta$ such that our assumption $K_{t+1} \leq (1-\theta) K_{t}$ holds. For that, note that it is sufficient to prove that 
$\rho_{e^{*}}(\Scal_t) \geq \theta K_t$.
Hence, it is sufficient to prove that
\begin{equation*}
\frac{1}{r} \left[ \gamma K_t - \frac{1}{\gamma}\left( \frac{\log(\frac{OPT}{K_{t}}) - (1-\theta)(1 - \frac{K_{t}}{OPT}) }{\theta} +1 - \frac{K_t}{OPT} \right) OPT \right] \geq \theta K_t
\end{equation*}
To this end, we start by rewriting the above equation in terms of $\alpha = \frac{K_t}{OPT}$:
\begin{equation}
 \frac{2}{\gamma}(1-\alpha) + \frac{1}{\gamma}\frac{\log(\frac{1}{\alpha}) - 1 + \alpha}{\theta} + r \theta \alpha \leq \gamma \alpha. \label{eq:rawform2}
\end{equation}
In the above, it is easy to check that $\log(\frac{1}{\alpha}) -1 + \alpha \geq 0$. Moreover, if we fix $\alpha$, then the left hand side is minimized whenever
$\theta =  \sqrt{\frac{1}{\gamma} \frac{\log(\frac{1}{\alpha}) - 1 + \alpha}{r \alpha}}$. Then, it is sufficient to prove that
\begin{equation*}
\frac{2}{\gamma}(1-\alpha) + 2\sqrt{\frac{1}{\gamma} r\alpha \left(\log\left(\frac{1}{\alpha}\right) - 1 + \alpha\right)} \leq \gamma \alpha 
\end{equation*}
Moreover, since $\alpha \leq 1$ and $\log(\frac{1}{\alpha}) \leq \frac{1}{\alpha} - 1$, it suffices to prove that
\begin{equation*}
 \frac{1 - \alpha}{\alpha \sqrt{\gamma r}} + \frac{1}{\alpha} - 1 \leq \frac{\gamma \sqrt{\gamma}}{2\sqrt{r}},
\end{equation*}
which can be rewritten as
\begin{equation*}
\alpha \geq \frac{1}{ \frac{\gamma^2}{2(\sqrt{\gamma r}+1)}+1}.
\end{equation*}
Here, we note that definition of $\theta$ depends on $\alpha$. However, if we rewrite the inequality in Eq.~\ref{eq:rawform2} as
\begin{equation*}
\frac{2}{\gamma}\frac{1 - \alpha}{\alpha} + \frac{\log(\frac{1}{\alpha}) - 1}{\gamma \theta \alpha} + \frac{1}{\gamma \theta} + r \theta \leq \gamma,
\end{equation*}
and realize that, for $0 \leq \alpha \leq 1$, the functions $f(\alpha) = \frac{\ln(\frac{1}{\alpha}) - 1}{ \alpha}$ and $\frac{1-\alpha}{\alpha}$ are decreasing 
with respect to $\alpha$, then it is easy to see that, if the above equation holds for $0 \leq \alpha^{*} \leq 1$, it also holds for any $1 \geq \alpha = \frac{K_t}{OPT} \geq \alpha^{*}$. Hence,
if we define
\begin{equation*}
\alpha^{*} = \frac{1}{ \frac{\gamma^2}{2(\sqrt{\gamma r}+1)}+1} \quad \mbox{and} \quad \theta = \sqrt{\frac{1}{\gamma} \frac{\log(\frac{1}{\alpha^{*}}) - 1 + \alpha}{r \alpha^{*}}},
\end{equation*}
we have that $\rho_{e^{*}}(\Scal_t) \geq \theta K_t$ and thus $K_{t+1} \leq K_{t} (1-\theta)$ for any step $t$ such that $\frac{K_t}{OPT} \geq \alpha^{*}$.

% \section{Tree-structured Gaussian graphical models}
\section{Proof of Proposition~\ref{prop:keyprop}} \label{app:prop:keyprop}

Let $\alpha \leq \sigma_i(\Sigma) \leq \beta$ for all $i=1, \ldots, n$. It is sufficient to prove that the curvature of the function $$f(\Sigma) = N \log | \Sigma | - \Tr \left(\Sigma \sum_{i=1}^{N} x_i x_i^{T}\right)$$ 
in any arbitrary direction $\Sigma = Rx + C$, for $\ x \in \mathbb{R}, \ R \in \mathbb{S}^n , C \in \mathbb{S}_{++}^n$ is between $\alpha^2$ and $\beta^2$.
Here, note that the possible directions in the space of positive definite matrices is equivalent to symmetric matrices, hence $R \in \mathbb{S}^n$, and we can ignore the second term
in $f(\Sigma)$ since its second derivative is zero. 
Moreover, in the remainder, we assume that the direction matrix $R$ is normalized, \ie, $|| R ||_F = 1$, where $|| \cdot ||_F$ is the Frobenius norm. 

Define $g(x) = \log | Rx + C |$. First, using that $\frac{d\log |\Sigma|}{d\Sigma} = \Sigma^{-1}$, the derivative in any specific direction is given by: 
$$\frac{dg}{dx}(x_0) = \Tr \left(\Sigma_0^{-1} R\right)$$
where $\Sigma_{0} = Rx_0 + C$. Moreover, the second derivative is readily given by
\begin{equation}
\frac{d^2g}{dx^2}(x_0) = \Tr \left(R \frac{d\Sigma^{-1}}{dx}(x_0)\right) = \Tr(R^{T} \Sigma_0 R \Sigma_0) = \Tr(R \Sigma_0 R \Sigma_0).
\end{equation}
Now, we compute an upper bound for $\Tr(R \Sigma_0 R \Sigma_0)$ as follows:
\begin{align}
\Tr(R \Sigma_0 R \Sigma_0) & \leq \sigma_{\max}(\Sigma_0) \Tr(R \Sigma_0 R)  \\
& = \sigma_{\max}(\Sigma_0) \Tr(\Sigma_0 RR) \leq \sigma_{\max}(\Sigma_0)^2 \Tr(RR) = \sigma_{\max}(\Sigma_0)^2 \Tr(R^{T}R) \\
& = \sigma_{\max}(\Sigma_0)^2 || R ||_F^2 =  \sigma_{\max}(\Sigma_0)^2 \leq \beta^2.
\end{align}
Above, we are using the inequality $\sigma_{\text{min}}(A) Tr(B) \geq Tr(AB) \leq \sigma_{\text{max}}(A) Tr(B)$ for symmetric matrix $A$ and positive definite matrix $B$, added to the fact that the matrix $R\Sigma_0R$ is positive definite because $\Sigma_0$ is positive definite, and also $RR = R^{T}R$ is positive definite. Next, we can proceed similarly to compute a lower bound and obtain
\begin{align}
\Tr(R \Sigma_0 R \Sigma_0) \geq  \sigma_{\min}(\Sigma_0)^2 \geq \alpha^2.
\end{align}
The above curvature bounds readily imply that, for given matrices $X \in S^n , Z \in D$, we have that
\begin{align*}
\alpha^2 || X ||_F^2 \leq b(X)^{T} \frac{d^2 f}{d(b(\Sigma))^2}(Z) b(X) \leq \beta^2 || X ||_F^2.
\label{curvaturebounds}
\end{align*}
which is equivalent to
\begin{align}
\alpha^2 || b(X) ||_2^2 \leq b(X)^{T} \frac{d^2 f}{d(b(\Sigma))^2}(Z) b(X) \leq \beta^2 || b(X) ||_2^2,
\end{align} 
where $b(K)$ is the vector representation of matrix $K$. Now, according to the mean value theorem, for $Y,Z \in D$, we know there is a matrix $T \in D$ on the line joining $Y$ and $Z$ such that
\begin{align}
f(Y) = f(Z) + b(Y-Z)^{T}\frac{df}{d(b(\Sigma))}(Z) + b(Y-Z)^{T} \frac{d^2f}{d(b(\Sigma))^2}(T) b(Y-Z).
\end{align}
Applying the bounds from Eq.~\ref{curvaturebounds} completes the proof.

\section{Visibility optimization in link recommendation} \label{app:visibility}

\vspace{-1mm}
\subsection{Motivation and problem definition}
\vspace{-1mm}
Users in social networks are eager to gain new followers---to grow their audience---so that, whenever they decide to share a new story, it receives a greater 
amount of views, likes and shares.
At the same time, users actually share quite a portion of their followers and, as a consequence, they are constantly competing with each other for 
attention~\citep{backstrom2011center, gomez2014quantifying}, which becomes a scarce commodity of great value~\citep{crawford2015world}.
In this context, recent empirical studies have shown that stories at the top of a user'{}s feed are more likely to be noticed and consequently liked 
or shared~\citep{hodas2012visibility, kang2015vip, lerman2014leveraging}.

The above empirical findings have motivated the recently introduced when-to-post problem~\citep{karimi2016smart, spasojevic2015post, upadhyay2018deep, zarezade2018steering, zarezade2017redqueen}, which aims to help a user, a \emph{broadcaster}, find the best times to share stories with her followers---the times when her stories would enjoy higher 
visibility and would consequently elicit greater attention from her audience.
While this line of work has shown great promise at helping broadcasters increase their visibility, it assumes the links between the broadcasters and their followers are 
given. 
However, these links are of great importance to the broadcaster'{} visibility---they define their audience---and they are (partially) influenced by link recommendation
algorithms.

The task of recommending links in social networks has a rich history in the recommender systems literature~\citep{lu2011link}. 
However, link recommendation algorithms have traditionally focused on maximizing the followers'{} utility---they recommend users to follow broadcasters whose posts
they may find interesting.
This uncompromising focus on the utility to the followers has been called into question as social media platforms are increasingly used as news 
sources\footnote{\scriptsize https://www.nytimes.com/2018/09/05/technology/lawmakers-facebook-twitter-foreign-influence-hearing.html}\footnote{\scriptsize https://www.economist.com/business/2018/09/06/how-social-media-platforms-dispense-justice}~\citep{biega2018equity, singh2018fairness}.
If we think of the broadcasters'{} utility as the visibility of their posts, the following visibility optimization problem let us balance followers'{} and 
broadcasters'{} utilities.

Given a social network $\Gcal = (\Vcal, \Ecal)$ and a set of candidate links $\Ecal'{}$ provided by a link recommendation algorithm\footnote{\scriptsize The link
recommendation algorithm may optimize for the followers'{} utilities.}, with $\Ecal'{} \cap \Ecal = \emptyset$, the goal is to find a subset of these links $\Ecal_{\Bcal} \subseteq \Ecal'{}$ 
that maximize the average visibility $F(\Ecal_{\Bcal})$ that a set of broadcasters $\Bcal$ achieve with respect to the (new) followers induced by these links, 
under constraints on the number of links per broadcaster, \ie, 
\begin{align} \label{eq:average-loss}
\underset{\Ecal_{\Bcal} \subseteq \Ecal}{\text{maximize}} & \quad F(\Ecal_{\Bcal}) \nonumber \\
\text{subject to} & \quad | \{j : (i, j) \in \Ecal_{\Bcal} \}| \leq c_i, \quad \forall i \in \Bcal
\end{align}
where we can express the constraints as $|\Bcal|$ partition matroid constraints, \ie, $\Ical = \{ \Ecal_{\Bcal} | \Ecal_{\Bcal} \subseteq \Ecal,\, |\Ecal_{\Bcal} \cap \Ecal_{i, :}'{}| \leq c_i\, \forall i \in \Bcal\}$, 
where $\Ecal_{i, :}'{}$ denotes the ground set of candidate links from broadcaster $i$,
and $c_i$ is the maximum number of links that broadcaster $i$ can \emph{afford}\footnote{\scriptsize A social media platform may charge broadcasters for 
each edge recommendation. Without loss of generality, we assume that each edge recommendation has a cost of one unit to the broadcaster and each broadcaster 
can pay for $c_i$ units.}. 
In the following sections, we denote the set of broadcasters and followers corresponding to the set of candidate links $\Ecal'$ as $\Vcal' \subseteq \Vcal$.

\vspace{-1mm}
\subsection{Definition and computation of visibility}
\vspace{-1mm}
In this section, we formally define the measure of visibility and, using previous work~\citep{karimi2016smart, zarezade2018steering, zarezade2017redqueen}, derive a relationship between 
this visibility measure and the intensity (or rates) at which broadcaster post stories and followers receive stories in their feeds. Refer to~\citet{de2019tpp} for an introduction to the 
theory of temporal point processes, which is used throughout the section.

\xhdr{Definition of visibility} 
Given a broadcaster $i$ and one of her followers $j$, let $r(t, i, j)$ be the number of stories posted by $i$ that are among the top $K$ positions of $j$'{}s feed at time $t$ 
and, for simplicity, assume each user'{}s feed ranks stories in inverse chronological order\footnote{\scriptsize At the time of writing, Twitter, Facebook and Weibo allows choosing such 
an ordering.}, as in previous work~\citep{karimi2016smart, zarezade2018steering, zarezade2017redqueen}.
Then, given an observation time window $[t_0, t_f]$ and a deterministic sequence of broadcasting events, define the deterministic top $K$ visibility of broadcaster $i$ with 
respect to follower $j$ as
\vspace{-1mm}
\begin{equation}
\Tcal(i, j) := \int_{t_0}^{t_f} r(t, i, j) dt,
\end{equation}
which is the number of stories posted by $i$'{}s that are among the top $K$ positions of $j$'{}s feed over time.
However, since the sequence of broadcasting events are generated from stochastic processes, consider instead the expected value of the top $K$ visibility 
instead, \ie, 
\vspace{-1mm}
\begin{equation} \label{eq:visibility-one-edge}
\Ucal(i, j) = \EE\left[\Tcal(i, j)\right] = \int_{t_0}^{t_f} \EE\left[ r(t, i, j) \right] dt.
\end{equation}
Then, by definition, it readily follows that
\vspace{-1mm}
\begin{equation} \label{eq:visibility-one-edge-2}
\EE\left[ r(t, i, j) \right]  = \sum_{k=1}^{K} g_{k}(t, i, j),
\end{equation}
where $g_{k}(t, i, j)$ is the probability that a story posted by broadcaster $i$ is at position $k$ of follower $j$'{}s feed at time $t$.
Finally, given a set of broadcasters $\Bcal$ and a set of links $\Ecal_{\Bcal}$, define the average top $K$ visibility (or, in short, average visibility) of the broadcasters 
with respect to the followers induced by the links as
\begin{equation} \label{eq:visibility-multiple-edges}
F(\Ecal_{\Bcal}) := \sum_{j \in \Vcal'{}} F(\Ecal_{\Bcal}, j),
\end{equation}
where, with an overload of notation, $$F(\Ecal_{\Bcal}, j) := \sum_{i \in \Bcal \,:\, (i, j) \in \Ecal_{\Bcal}} \Ucal(i, j).$$
Here, note that, by using the linearity of expectation, we can also write $F(\Ecal_{\Bcal}, j)$ in terms of 
the number of stories $r(t, \Ecal_{\Bcal}, j)$ posted by the broadcasters that are among the top $K$ positions of user $j$'{}s feed at time $t$ and 
the probability $g_{k}(t, \Ecal_{\Bcal}, j)$ that a story posted the broadcasters is at position $k$ of user $j$'{}s feed at time $t$, \ie,
\vspace{-2mm}
\begin{equation}
F(\Ecal_{\Bcal}, j) = \int_{t_0}^{t_f} \EE\left[r(t, \Ecal_{\Bcal}, j)\right] = \int_{t_0}^{t_f} \sum_{k=1}^{K} g_k(t, \Ecal_{\Bcal}, j),
\end{equation}
where we have again overloaded the notation for simplicity.

\xhdr{Computation of visibility} 
Following a similar procedure as in~\citet{zarezade2018steering, zarezade2017redqueen}, we can find a closed form expression for 
the probability $g_k(t) = g_k(t, i, j)$ that one story from a broadcaster $i$ with intensity $\mu(t)$ is at the top of a follower $j$'s feed with intensity $\gamma(t)$ at 
time $t$: 
\begin{lemma} \label{lemma:computeg}
Given a broadcaster with intensity $\mu(t)$ and one of her followers with feed intensity due to other broadcasters $\gamma(t)$, the probability
$g_{k}(t)$ that a story posted by the broadcaster is at position $k$ of the follower'{}s feed at time $t$ is given by
\vspace{-1mm}
\begin{equation} \label{eq:gk}
g_{k}(t) = \int_{0}^{t} \frac{J^{k-1}(\mu+\gamma, \tau, t)}{(k-1)!} e^{-J(\mu+\gamma, \tau, t)} \mu(\tau) d\tau,
\end{equation}
where $J(\lambda, \tau, t) = \int_{\tau}^{t} \lambda(x) dx$.
\end{lemma}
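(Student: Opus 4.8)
The plan is to derive $g_k(t)$ by conditioning on the time $\tau$ at which the broadcaster posts a given story and then counting the arrivals that subsequently displace it in the follower's feed. First I would fix a story posted by the broadcaster at some time $\tau \in (0,t)$ and observe that, under the inverse chronological ordering, this story occupies position $k$ at time $t$ precisely when exactly $k-1$ stories have entered the follower's feed during the interval $(\tau, t)$. The key modeling observation is that the stories that push it down come from two sources: the broadcaster's own later posts (intensity $\mu$) and the posts of all other broadcasters already accounted for in the feed (intensity $\gamma$). Hence the relevant displacement process is the superposition of the two, whose intensity is $\mu + \gamma$; this is why the formula features $\mu+\gamma$ rather than $\gamma$ alone.

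Next I would invoke the modeling assumption, standard in the cited work~\citep{zarezade2018steering, zarezade2017redqueen}, that the posting processes are independent inhomogeneous Poisson processes. Under this assumption the superposed feed process is itself Poisson, so the number of arrivals in $(\tau,t)$ is Poisson-distributed with mean $J(\mu+\gamma,\tau,t) = \int_\tau^t (\mu+\gamma)(x)\,dx$. Consequently, the probability that exactly $k-1$ stories arrive after $\tau$ is the Poisson probability mass $\frac{J^{k-1}(\mu+\gamma,\tau,t)}{(k-1)!}\,e^{-J(\mu+\gamma,\tau,t)}$.

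To combine these observations I would appeal to Campbell's formula (equivalently, the Slivnyak--Mecke theorem): the expected number of the broadcaster's stories sitting at position $k$ is obtained by integrating, over all posting times $\tau$, the intensity $\mu(\tau)\,d\tau$ of posting at $\tau$ times the conditional probability that such a story occupies position $k$. Integrating then yields
\begin{equation*}
g_k(t) = \int_0^t \frac{J^{k-1}(\mu+\gamma,\tau,t)}{(k-1)!}\,e^{-J(\mu+\gamma,\tau,t)}\,\mu(\tau)\,d\tau,
\end{equation*}
as claimed. Since at most one story can occupy a given position at a given instant, this expected count coincides with the probability that position $k$ is occupied by one of the broadcaster's stories, matching the statement of the lemma.

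The main obstacle will be the conditioning step: one must check that conditioning on a broadcaster post at $\tau$ does not bias the distribution of the later feed arrivals. I expect this to be the only delicate point, and it is resolved entirely by the independent-increments property of the Poisson process (Slivnyak's theorem), which guarantees that the law of the arrivals in $(\tau,t)$ is unchanged by the presence of a point at $\tau$, so the conditional probability of $k-1$ later arrivals equals the unconditional Poisson mass above. The remaining manipulations --- recognizing the ``position $k$'' event as a ``$k-1$ subsequent arrivals'' event and substituting the Poisson mass --- are routine.
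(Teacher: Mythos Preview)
The paper does not actually prove this lemma; it merely introduces it with the phrase ``Following a similar procedure as in~\citet{zarezade2018steering, zarezade2017redqueen}, we can find a closed form expression\ldots'' and then states the formula. Your argument is the standard derivation one would expect from those references: condition on a broadcaster post at time $\tau$, observe that under reverse-chronological ordering the post sits at position $k$ at time $t$ iff exactly $k-1$ feed arrivals occur in $(\tau,t)$, use the Poisson assumption on the superposed process with intensity $\mu+\gamma$ to get the Poisson mass $\frac{J^{k-1}}{(k-1)!}e^{-J}$, and integrate against $\mu(\tau)\,d\tau$ via Campbell/Slivnyak. Your identification of $g_k(t)$ with the expected number of broadcaster stories at position $k$ (hence with the occupancy probability, since a position holds at most one story) is exactly how the paper uses $g_k$ in the identity $\EE[r(t)]=\sum_{k=1}^K g_k(t)$. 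In short, there is nothing to compare against, and your proposal is correct and is the natural proof.
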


Then, if we plug Eq.~\ref{eq:gk} into Eq.~\ref{eq:visibility-one-edge-2} with $r(t) = r(t, i, j)$, we obtain
\begin{equation*}
\EE\left[ r(t) \right] = \int_{0}^{t} \left[ \sum_{k=1}^{K} \frac{J^{k-1}(\mu+\gamma, \tau, t)}{(k-1)!}\right] e^{-J(\mu+\gamma, \tau, t)} \mu(\tau) d\tau = (K-1)! \int_{0}^{t} \Gamma\left(K ,  J(\mu+\gamma, \tau, t)\right) \mu(\tau) d \tau,
%= & (K-1)! \int_{0}^{t} \Gamma(K ,  J_{s_{w}}^{t}(\tau)) s_{w}(\tau) d\tau - (K-1)! \int_{0}^{t} \Gamma(K ,  J_{s_{w}}^{t}(\tau))\mu_{w}(\tau)) d \tau,
\end{equation*}
where $\Gamma(K , x)$ is the incomplete gamma function. Using that $\frac{dJ(\mu+\gamma, \tau, t)}{d\tau} = -(\mu(\tau) + \gamma(\tau))$, 
we can simplify the above expression into
\begin{equation*}
\EE\left[ r(t) \right] =  - G(J(\mu+\gamma, t, t)) + G(J(\mu+\gamma, 0, t)) - \frac{1}{(K-1)!}\int_{0}^{t} \Gamma(K ,  J(\mu+\gamma, \tau, t)) \gamma(\tau) {d\tau}, 
\end{equation*}
where $G(x) =  -\sum_{i=0}^{K-1}(K-i)\frac{x^{i}}{i!} e^{-x}$ is the anti-derivative of $\Gamma(K , x)$. 
Then, using that $J(\mu+\gamma, t, t) = 0$, $G(0) = -K$ and $$J(\mu+\gamma, 0, t) = \int_{0}^{t} \mu(x)+\gamma(x) dx, $$ % = \EE[(A^{T} N(t))_j],$$
it follows that
\begin{equation}
% G\left(\EE[(\Ab^{T} N(t))_j]\right)
\EE\left[ r(t) \right] = K + G\left( \int_{0}^{t} \mu(x)+\gamma(x) dx \right) - \frac{1}{(K-1)!} \int_{0}^{t} \Gamma(K ,  J(\mu+\gamma, \tau, t)) \gamma(\tau) {d\tau}. \label{eq:Ucomputation}
\end{equation}
Finally, if we plug Eq.~\ref{eq:Ucomputation} into Eq.~\ref{eq:visibility-one-edge}, we obtain an expression for the average
top $K$ visibility of broadcaster $i$ with respect to follower $j$ in terms of the intensity functions $\mu(t)$ and $\gamma(t)$
characterizing the broadcaster and the follower, respectively:
\begin{equation}
\Ucal(i, j) = K(t_{f} - t_{0}) + \int_{t_{0}}^{t_{f}} G\left( \int_{0}^{t} \mu(x)+\gamma(x) dx \right) dt - \frac{1}{(K-1)!}  \int_{t_{0}}^{t_{f}} \int_{0}^{t} \Gamma(K ,  J(\mu+\gamma, \tau, t)) \gamma(\tau) {d\tau} dt. \label{eq:Uwexact}
\end{equation}

Given a social network $\Gcal = (\Vcal, \Ecal)$, a set of candidate links $\Ecal'{}$ provided by a link recommendation algorithm, with $\Ecal'{} \cap \Ecal = \emptyset$, 
and a subset of links $\Ecal_{\Bcal} \subseteq \Ecal'{}$, we can proceed similarly as in the case of one links and show that: 
\begin{itemize}[noitemsep,nolistsep,leftmargin=0.8cm]
\item[(i)] The probability $g_{k}(t, \Ecal_{\Bcal}, j)$ that a story posted by the broadcasters user $j$ gains due to the links $\Ecal_{\Bcal}$ is at position $k$ of her feed 
at time $t$ is given by Eq.~\ref{eq:gk} with $\gamma(t) = \gamma_{j \,|\, \Ecal}(t)$, where $\gamma_{j \,|\, \Ecal}(t)$ denotes the intensity at which follower $j$ receives 
stories in her feed due to other broadcasters she follows, and $\mu(t) =  \sum_{i \in \Bcal \,:\, (i, j) \in \Ecal_{\Bcal}} \mu_i(t)$.

\item[(ii)] The average visibility $\Ucal(\Ecal_{\Bcal}, j)$ with respect to user $j$ of the broadcasters this user gains due to the links $\Ecal_{\Bcal}$ is given 
by Eq.~\ref{eq:Uwexact} with $\gamma(t) = \gamma_{j \,|\, \Ecal}(t)$ and $\mu(t) =  \sum_{i \in \Bcal \,:\, (i, j) \in \Ecal_{\Bcal}} \mu_i(t)$.
\end{itemize}
Finally, the above results allow us to write $F(\Ecal_{\Bcal})$, given by Eq.~\ref{eq:visibility-multiple-edges}, in terms of intensity functions characterizing 
the broadcasters and the feeds, as we were aiming for.

\vspace{-1mm}
\subsection{$\alpha$-submodularity of the visibility}
\vspace{-1mm}
Given a social network $\Gcal = (\Vcal, \Ecal)$, let $\Ecal_{\Bcal} \subseteq \Ecal'{}$ be a set of candidate links, with $\Ecal'{} \cap \Ecal = \emptyset$, 
$\{ \mu_i(t) \}_{i \in \Bcal}$ be the intensities of a set of broadcasters $\Bcal \subseteq \Vcal$, and $\{ \gamma_{i \,|\, \Ecal}(t) \}_{i \in \Vcal}$ 
be the users'{} feed intensities due to the broad\-cas\-ters induced by the links $\Ecal$. 
Define $\inf(\lambda) = \inf  \{ \lambda(t)  \mid t \in (t_0, t_f) \}$, $\sup(\lambda) = \sup \{ \lambda(t) \mid t \in (t_0, t_f) \}$ and assume that: 
\begin{itemize}[noitemsep,nolistsep,leftmargin=0.8cm]
\item[(i)] All intensities are bounded above, \ie, $\sup(\mu_i) \leq c_1,\, \forall i \in \Bcal$ and $\sup(\gamma_{i \,|\, \Ecal}) \leq c_1,\, \forall i \in \Vcal$ with $0 < c_1 < \infty$.
\item[(ii)] The users'{} feed intensities due to the broadcasters induced by the links $\Ecal$ are bounded below, \ie, $\inf(\gamma_{i \,|\, \Ecal}) \geq c_2,\, \forall i \in \Vcal$ 
with $0 < c_2 \leq c_1$.
\end{itemize}
Note that these assumptions are natural in most practical scenarios---the first assumption is satisfied if broadcasters post a finite number of stories 
per unit of time and the second assumption is satisfied if, at any time, there is always a nonzero probability that a user'{}s feed receives a post from 
the other broadcasters.
Moreover, under these assumptions, it readily follows that, for any set $\Ecal_{\Bcal}$, the feed intensities $\gamma_{i}(t)$
%
% \begin{align}
% \gamma_{j}(t) = \gamma_{j {\scriptsize \backslash} \Ecal_{\Bcal}}(t) + \sum_{i \in \Bcal \,:\, (i, j) \in \Ecal_{\Bcal}} \mu_i(t) \label{criticalcondition}
% \end{align}
%
are $\xi$-bounded, \ie, $\sup(\gamma_{i}) \leq \xi \inf(\gamma_{i})$, with $\xi \leq \frac{c_{1}}{c_2}(1+|\Ecal'{}_{:, i}|)$, where $\Ecal_{i, :}'{}$ 
denotes the ground set of candidate links from broadcaster $i$.

Then, we can characterize the generalized curvature $\alpha$ of the average visibility $F(\Ecal_{\Bcal})$ using the following Theorem: 
\begin{theorem} \label{thm:first-theorem}
%Suppose the intensities $\{ \gamma_{i {\scriptsize \backslash} \Ecal_{\Bcal}}(t) \}_{i \in \Vcal}$ and $\{ \gamma_{i}(t) \}_{i \in \Vcal}$ are $\xi$-bounded and
%
Suppose that, for any set $\Ecal_{\Bcal} \subseteq \Ecal'{}$, the feed intensities $\gamma_i(t)$ are $\xi$-bounded and, for each follower $j$,
\vspace{-1mm}
\begin{align}
% \EE[\Mb_{{\scriptsize \backslash} \Ecal_{\Bcal}}(t_0) ]
\int_{0}^{t_0} \gamma_{j \,|\, \Ecal}(x) dx \geq K-1 + \zeta \sqrt{K-1}, \, \forall j \in \Vcal \label{mainconstraint}
\end{align}
for some $\zeta > 0$, which simply states that $t_0$ is large enough so that the expected number of stories posted by the broadcasters induced 
by $\Ecal$ by $t_{0}$ in each follower'{}s feed is greater than the RHS.
Then, the generalized curvature $\alpha$ of the average visibility $F(\Ecal_{\Bcal})$, defined by Eq.~\ref{eq:visibility-multiple-edges}, 
satisfies that
\begin{align}
\alpha \leq \alpha^{*} = 1 - \frac{1}{\xi \left( \frac{6.154 e^2}{ 1  -  \frac{4e^{\frac{7}{4}}}{\zeta}} \xi \sqrt{K} + 1 \right)}. \label{eq:weakratio}
\end{align}
\end{theorem}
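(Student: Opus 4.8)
The plan is to bound the generalized curvature directly from its definition in Eq.~\ref{eq:xi-submodularity}, i.e.\ to show that for every candidate edge $e=(i,j)$ and every pair of feasible sets $\Acal\subseteq\Bcal$ the marginal gains satisfy $\rho_{(i,j)}(\Acal)\geq(1-\alpha^{*})\rho_{(i,j)}(\Bcal)$, which requires a lower bound on $\rho_{(i,j)}(\Acal)$ and an upper bound on $\rho_{(i,j)}(\Bcal)$. The first reduction I would exploit is locality: since $F(\Ecal_{\Bcal})=\sum_{j\in\Vcal'}F(\Ecal_{\Bcal},j)$ by Eq.~\ref{eq:visibility-multiple-edges} and adding $(i,j)$ changes only the term of follower $j$, it holds that $\rho_{(i,j)}(\Acal)$ equals the marginal gain of the single-follower visibility $F(\cdot,j)$. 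By the additivity of nonnegative monotone marginals, the curvature of $F$ is the maximum of the per-follower curvatures, so it suffices to analyze one scalar functional $V(\mu):=F(\Ecal_{\Bcal},j)$ as a function of the combined broadcaster intensity $\mu(t)=\sum_{i:(i,j)\in\Ecal_{\Bcal}}\mu_i(t)$, with the background feed intensity $\gamma:=\gamma_{j\,|\,\Ecal}$ held fixed.

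Second, I would make the source of non-submodularity explicit using the closed forms in Eq.~\ref{eq:Ucomputation}--\ref{eq:Uwexact}. Writing $\lambda=\mu+\gamma$ for the total feed intensity, the marginal gain of adding broadcaster $i$ is a finite difference of $V$, which I would sandwich by the Gateaux derivative of $\EE[r(t)]$ in the direction $\mu_i$. Differentiating Eq.~\ref{eq:Ucomputation}, using $G'=\Gamma(K,\cdot)$ and the standard expression for the derivative of the incomplete gamma, this derivative splits into a nonnegative \emph{gain} term, coming from $i$'s own stories entering the top $K$, and a nonpositive \emph{displacement} term, coming from the added feed mass pushing stories out. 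In a pure steady state one checks that $V$ is concave in $\mu$, so the function is submodular there; hence all of the curvature is carried by the transient near $t_0$, which is exactly what Eq.~\ref{mainconstraint} is designed to control. This is the conceptual heart of the argument and tells me where the estimates must be sharp.

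Third, I would obtain matching two-sided bounds on the single-broadcaster marginal gain. Using the $\xi$-boundedness $\sup(\lambda)\leq\xi\inf(\lambda)$, I would replace the time-varying $\lambda(t)$ by the constants $\inf(\lambda)$ and $\sup(\lambda)$ in the gain and displacement integrals, losing only a factor of $\xi$ at the lower bound for $\Acal$ and a further factor of $\xi$ at the upper bound for $\Bcal$; this accounts for the two powers of $\xi$. The $\sqrt{K}$ factor then arises from the concentration of the regularized incomplete gamma, which transitions from $1$ to $0$ over a window of width $\Theta(\sqrt{K})$ centered at $x=K$, so the constraint $\int_{0}^{t_0}\gamma\,dx\geq K-1+\zeta\sqrt{K-1}$ places the pre-$t_0$ argument $\zeta$ standard deviations into the tail. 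A Poisson/gamma tail estimate bounds the contribution of stories posted before $t_0$ by a term of order $1/\zeta$, which after tracking constants yields the correction $\bigl(1-\tfrac{4e^{7/4}}{\zeta}\bigr)^{-1}$ and the leading constant $6.154\,e^{2}$.

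Finally, I would combine the lower bound on $\rho_{(i,j)}(\Acal)$ with the upper bound on $\rho_{(i,j)}(\Bcal)$; their ratio is at least $1-\alpha^{*}=\bigl[\xi\bigl(\tfrac{6.154e^{2}}{1-4e^{7/4}/\zeta}\,\xi\sqrt{K}+1\bigr)\bigr]^{-1}$, and since the bound is uniform over followers $j$, edges $(i,j)$, and pairs $\Acal\subseteq\Bcal$, it bounds the generalized curvature of $F$. I expect the main obstacle to be the third step: producing upper and lower estimates of the marginal visibility gain that agree up to the factor $\xi^{2}\sqrt{K}$ requires careful incomplete-gamma asymptotics together with a quantitative tail bound whose constants must be tracked precisely, rather than merely establishing the order of growth.
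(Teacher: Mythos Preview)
Your reduction to a single follower $j$ and your identification of the two $\xi$ factors and the $\sqrt{K}$ scale of the incomplete gamma transition are all correct, and they match what the paper ultimately uses. But the mechanism you propose for isolating the non-submodular part is off, and this is not a cosmetic issue: it is precisely where the work lies.

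You claim that in steady state $V$ is concave in $\mu$ and that ``all of the curvature is carried by the transient near $t_0$''. This is not the right picture. Look at the third term of Eq.~\ref{eq:Uwexact}: for every fixed $t\in[t_0,t_f]$ the integrand is $-\Gamma(K,J(\mu+\gamma,\tau,t))$, and $-\Gamma(K,\cdot)$ is convex on $(0,K-1)$ and concave on $(K-1,\infty)$. Since $J(\mu+\gamma,\tau,t)$ runs from $0$ at $\tau=t$ up to $J(\mu+\gamma,0,t)$ as $\tau$ decreases, the non-submodular contribution comes from the \emph{recent past} $\tau\in(\tau_0,t)$ where $J<K-1$, and this window is present for every $t$, not only near $t_0$. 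The constraint in Eq.~\ref{mainconstraint} is not there to kill a transient; it guarantees that $J(\mu+\gamma,0,t)\geq K-1+\zeta\sqrt{K-1}$, so that the ``good'' (concave) interval $(0,\tau_0)$ contributes enough to dominate the ``bad'' (convex) interval $(\tau_0,t)$.

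Accordingly, the paper does not separately lower bound $\rho_e(\Acal)$ and upper bound $\rho_e(\Bcal)$ as you propose. It first uses an aggregation lemma (Lemma~\ref{lemma:aggregate}) to strip off the sums over $j$ and the outer integral over $t$, and a composition-with-concave lemma (Lemma~\ref{lemma:concave}) to show that the constant term and the $G$-term of Eq.~\ref{eq:Uwexact} are exactly submodular and hence contribute curvature $0$. All of the curvature sits in the remaining term $\Pcal_j$ of Eq.~\ref{pdefinition}. For that term the key step (Lemma~\ref{keylemma}) compares the two marginals \emph{directly}: one splits the $\tau$-integral at the inflection point $\tau_0$ where $J=K-1$, uses convexity of $\Gamma(K,\cdot)$ on $(K-1,\infty)$ to get $\Delta(0,\tau_0)\geq 0$, and then shows via change of variables to $x=J$ and the incomplete-gamma inequalities of Lemma~\ref{lemma:inequalities} that $(\theta-1)\Delta(0,\tau_0)$ absorbs the possibly negative $\Delta(\tau_0,t)$. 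The $\xi$ enters once when passing from $d\tau$ to $dJ$ on each side, and once more when reinserting $\gamma(\tau)$ at the end; the $\sqrt{K}$ and the $(1-4e^{7/4}/\zeta)^{-1}$ come from parts~1 and~2 of Lemma~\ref{lemma:inequalities}. Your Gateaux-derivative/sandwiching route does not give you this cancellation structure, and without the $\tau_0$ split it is unclear how you would control the convex region uniformly in $\Acal\subseteq\Bcal$.
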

\begin{corollary}
The generalized curvature of the average visibility $F(\Ecal_{\Bcal})$ is given by $\alpha \approx 1-\Omega\left(\frac{1}{\xi^2 \sqrt{K}}\right)$.
\end{corollary}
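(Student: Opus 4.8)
The plan is to reduce the statement to a per-follower analysis and then to control the incremental gains of a single edge through a direct computation of the visibility functional. Since $F(\Ecal_\Bcal)=\sum_{j\in\Vcal'}F(\Ecal_\Bcal,j)$ and the marginal gain of an edge $(i,j)$ depends only on the edges incident to the same follower $j$ (the feeds of distinct followers evolve independently), the generalized curvature of $F$ equals the largest generalized curvature among the single-follower functions $F(\cdot,j)$. Hence it suffices to fix a follower $j$, write $\mu(t)=\sum_{(i',j)\in\Ecal_\Bcal}\mu_{i'}(t)$ for the aggregate intensity of the currently selected broadcasters and $\gamma(t)=\gamma_{j\,|\,\Ecal}(t)$ for the fixed background feed, and regard the visibility as a functional $V[\mu]$ given by Eq.~\ref{eq:Uwexact}. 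Adding an edge $(i,j)$ replaces $\mu$ by $\mu+\mu_i$, so every marginal gain is an increment $V[\mu+\mu_i]-V[\mu]$, and the curvature is governed by how this increment varies as the base intensity $\mu$ grows.

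First I would compute the G\^ateaux derivative of $V$ in the direction of an added intensity. Differentiating Eq.~\ref{eq:gk} and telescoping the sum over the top $K$ slots, the contribution of a perturbation at time $s$ splits into a positive direct term and a negative displacement term: the direct term is the probability $Q(J):=\Pr[\mathrm{Pois}(J)\le K-1]$ that the new story reaches the top $K$ given the expected backlog $J=\int_s^t(\mu+\gamma)$, while the displacement term, proportional to the Poisson mass $p_{K-1}(J)=\frac{J^{K-1}}{(K-1)!}e^{-J}$ and to the incumbent intensity $\mu$, records the visibility the new story steals from the already-selected broadcasters. Writing $V[\mu+\mu_i]-V[\mu]$ as the integral of this derivative along the segment from $\mu$ to $\mu+\mu_i$, I obtain two-sided bounds on any marginal gain purely in terms of $Q$, $p_{K-1}$, and the intensities $\mu_i,\mu,\gamma$.

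Next I would turn these bounds into a bound on the ratio $\rho_v(\Acal)/\rho_v(\Bcal)$ for $\Acal\subseteq\Bcal$, which is exactly $1-\alpha$. The background constraint in Eq.~\ref{mainconstraint} guarantees $J\ge K-1+\zeta\sqrt{K-1}$ along the whole path, so $J$ sits just above the mean $K-1$ of the relevant Poisson law; in this window a Gaussian (Stirling) estimate lower-bounds $Q(J)$ by an absolute constant and upper-bounds the pmf by $p_{K-1}(J)=O(1/\sqrt K)$, which is where the $\sqrt K$ in the denominator originates. The $\xi$-boundedness of the feed intensities lets me compare the added intensity $\mu_i$ with the aggregate $\mu+\gamma$ and replace $\sup$'s by $\inf$'s at the cost of factors of $\xi$: the direct term of any marginal gain is at least an $\Omega(1/\xi)$ fraction of the feed it competes against, the largest possible marginal gain is at most $O(\xi)$ times the same quantity, and the displacement correction is at most a $\frac{4e^{7/4}}{\zeta}$ fraction of the direct term. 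Collecting these estimates yields $1-\alpha\ge \frac{1}{\xi}\Big(\frac{6.154\,e^2}{1-4e^{7/4}/\zeta}\,\xi\sqrt K+1\Big)^{-1}$, which is precisely the claimed bound $\alpha\le\alpha^*$ of Eq.~\ref{eq:weakratio}.

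The main obstacle, and the step I expect to consume most of the work, is the quantitative control of the incomplete gamma function $\Gamma(K,J)$ (equivalently the Poisson CDF $Q$ and pmf $p_{K-1}$) in the near-mode regime $J\approx K-1+\zeta\sqrt{K-1}$: I need non-asymptotic two-sided estimates with the explicit constants $6.154\,e^2$ and $4e^{7/4}$, obtained from Stirling's bounds on $(K-1)!$ together with a Gaussian approximation of the central Poisson mass. A secondary difficulty is handling the negative displacement term so that the net marginal gain retains a useful lower bound; this is exactly what forces $\zeta>4e^{7/4}$, since only then is the factor $\big(1-4e^{7/4}/\zeta\big)^{-1}$ positive and the displacement provably dominated by the direct gain. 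Once these two estimates are in place, the assembly into Eq.~\ref{eq:weakratio} and the corollary $\alpha\approx 1-\Omega(1/(\xi^2\sqrt K))$ is routine bookkeeping.
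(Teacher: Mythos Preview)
Your reduction to a single follower and the plan to compare marginal gains via the visibility functional are in the same spirit as the paper's use of Lemma~\ref{lemma:aggregate} and Eq.~\ref{eq:Uwexact}. However, there is a genuine gap at the heart of your quantitative step. The constraint in Eq.~\ref{mainconstraint} does \emph{not} guarantee that ``$J\ge K-1+\zeta\sqrt{K-1}$ along the whole path'': it only lower-bounds the endpoint value $J(\mu+\gamma,0,t)=\int_0^t(\mu+\gamma)$. Inside the formula for $\Ucal(\cdot,j)$ (Eq.~\ref{eq:Uwexact}) one integrates over $\tau\in[0,t]$, and along that integration $J(\mu+\gamma,\tau,t)$ decreases continuously from $J(\mu+\gamma,0,t)$ down to $0$. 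Thus the integrand visits the entire range $[0,K-1]$ where $-\Gamma(K,\cdot)$ is \emph{convex}, and in that range your pointwise claims (``$Q(J)$ bounded below by an absolute constant'', ``$p_{K-1}(J)=O(1/\sqrt K)$'') fail: for small $J$ one has $Q(J)\approx 1$ but $p_{K-1}(J)$ is tiny, while around $J\approx K-1$ the pmf is $\Theta(1/\sqrt K)$ and $Q$ is of constant order. A single near-mode estimate cannot control the whole $\tau$-integral, and the convex region is precisely what can make $\rho_v(\Acal)/\rho_v(\Bcal)$ small.

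This is exactly the obstacle the paper's proof is built around. The paper splits $\Ucal(\cdot,j)$ into the $G$-term, which is shown to be \emph{exactly submodular} via Lemma~\ref{lemma:concave}, and the $\Pcal_j$-term. For $\Pcal_j$ the key device is Lemma~\ref{keylemma}: one introduces the time $\tau_0$ with $J(\mu+\gamma,\tau_0,t)=K-1$, uses concavity of $-\Gamma(K,\cdot)$ on $(K-1,\infty)$ to get the desired inequality on $(0,\tau_0)$ with slack $(\theta-1)$, and then shows via the incomplete-gamma estimates of Lemma~\ref{lemma:inequalities} that this slack dominates the unfavorable contribution from $(\tau_0,t)$, where $J<K-1$. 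The constraint Eq.~\ref{mainconstraint} enters only to ensure that $J(\mu+\gamma,0,t)$ is far enough above $K-1$ so that the boundary term $\int_{J_0}^{J_0+d}\Gamma(K,x)\,dx$ is a small ($4e^{7/4}/\zeta$) fraction of $\int_{K-1}^{K-1+d}\Gamma(K,x)\,dx$. Your G\^ateaux-derivative viewpoint is compatible with all this, but to make it go through you must reinstate the split at $J=K-1$ and prove an inequality of the type in Lemma~\ref{keylemma}; without it, the step from your pointwise bounds to Eq.~\ref{eq:weakratio} does not follow.
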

Moreover, if we assume that, at each time $t$, the intensity function $\mu_i(t)$ of each broadcaster is lower than a fraction of each of her 
follower'{}s feed intensities\footnote{\scriptsize This is an assumption that is likely to hold in practice since the stories posted by a single broadcaster 
are typically a small percentage of the stories her followers receive in their feeds over time.} then we can tighten upper bound $\alpha^{*}$ given by 
Theorem~\ref{thm:first-theorem} using the following Theorem and Corollary. 
\begin{theorem} \label{thm:second-theorem}
Suppose the conditions in Theorem~\ref{thm:first-theorem} hold and, in addition,
\vspace{-1mm}
\begin{equation}
\mu_{i}(t) \leq \frac{\gamma_{j \,|\, \Ecal}(t)}{\rho \sqrt{K-1}}, \quad \forall t \in [t_0, t_f], \forall i \in \Bcal, \forall j \in \Vcal'{}, \label{eq:extracondition}
\end{equation}
where $\rho > 0$. Then, the generalized curvature $\alpha$ of the average visibility $F(\Ecal_{\Bcal})$ satisfies that
\begin{equation}
\alpha \leq \alpha^{*} = 1-\frac{1}{ \alpha \left( \frac{ 2e^{\frac{7}{4}}}{(1  -  \frac{4e^{\frac{7}{4}}}{\zeta}) \min \{ \rho , 1 \} e^{-\frac{1}{\rho^2}}}\xi + 1 \right) }.  \label{eq:strongratio}
\end{equation}
\end{theorem}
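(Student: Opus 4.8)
The plan is to follow the scheme used for Theorem~\ref{thm:first-theorem} and sharpen the one estimate where the extra hypothesis Eq.~\ref{eq:extracondition} pays off. Writing out the definition of generalized curvature (Eq.~\ref{eq:xi-submodularity}), the quantity to control is $\alpha = 1 - \inf_{\Ecal_{1}\subseteq\Ecal_{2},\,v}\rho_{v}(\Ecal_{1})/\rho_{v}(\Ecal_{2})$, where the infimum runs over candidate links $v=(i_{0},j_{0})$ and nested feasible link sets $\Ecal_{1}\subseteq\Ecal_{2}\subseteq\Ecal'{}$. Since $F$ is additive over followers (Eq.~\ref{eq:visibility-multiple-edges}) and adding $(i_{0},j_{0})$ only perturbs follower $j_{0}$'{}s feed, the marginal gain $\rho_{v}(\Ecal)$ equals the change in the single-follower visibility $F(\cdot,j_{0})$; it therefore suffices to produce a uniform lower bound $\ell$ and a uniform upper bound $u$ on $\rho_{v}(\Ecal)$ valid for every feasible $\Ecal$, and then set $1-\alpha^{*}=\ell/u$.

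First I would put $\rho_{v}(\Ecal)$ in closed form. By linearity of expectation, the single-follower visibility $F(\Ecal,j_{0})=\int_{t_{0}}^{t_{f}}\EE[r(t,\Ecal,j_{0})]\,dt$ is given by Eq.~\ref{eq:Ucomputation}--Eq.~\ref{eq:Uwexact} with source intensity $\mu_{\Ecal}$, the total intensity of the broadcasters linked to $j_{0}$ in $\Ecal$, and background $\gamma=\gamma_{j_{0}\mid\Ecal}$; concretely it is an integral of the incomplete-gamma tail $\Gamma\!\big(K,J(\mu_{\Ecal}+\gamma,\tau,t)\big)$ against $\mu_{\Ecal}(\tau)$. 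Injecting $i_{0}$ raises $\mu_{\Ecal}$ by $\mu_{i_{0}}$ in both the source and the total feed, so $\rho_{v}(\Ecal)$ splits as the direct visibility $i_{0}$'{}s own stories acquire against the crowded feed, $\propto\!\int\!\int\Gamma(K,J^{+})\,\mu_{i_{0}}$ with $J^{+}=J(\mu_{\Ecal}+\mu_{i_{0}}+\gamma,\tau,t)$, minus a displacement term $\propto\!\int\!\int\big(\Gamma(K,J^{0})-\Gamma(K,J^{+})\big)\,\mu_{\Ecal}$ recording the loss suffered by the already-linked broadcasters, where $J^{0}=J(\mu_{\Ecal}+\gamma,\tau,t)$. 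Thus every bound I need is a statement about the tail $\Gamma(K,\cdot)$ evaluated at the cumulative feed intensity.

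The two hypotheses then do complementary work, and this is where the improvement over Theorem~\ref{thm:first-theorem} comes from. The $\xi$-boundedness of the feed confines $J(\mu_{\Ecal}+\gamma,\tau,t)$ to a band of multiplicative width controlled by $\xi$, uniformly over the window and over feasible $\Ecal$, which keeps $\Gamma(K,\cdot)$, and hence both $\ell$ and $u$, within a factor of order $\xi^{2}$ and produces the outer $\tfrac{1}{\xi(\,\cdots\,)}$ shape of Eq.~\ref{eq:strongratio}. The new hypothesis Eq.~\ref{eq:extracondition}, $\mu_{i}\le\gamma_{j\mid\Ecal}/(\rho\sqrt{K-1})$, bounds each broadcaster'{}s contribution, so that (i) the displacement term stays a controlled fraction of the direct gain, and (ii) injecting $i_{0}$ shifts the cumulative feed intensity by at most $\Theta(\sqrt{K}/\rho)$, i.e.\ by $O(1/\rho)$ standard deviations of the Poisson$(K)$ law that underlies $\Gamma(K,\cdot)$. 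Estimate (ii) is exactly what replaces the $\sqrt{K}$ factor that degrades Theorem~\ref{thm:first-theorem} (Eq.~\ref{eq:weakratio}) by the sharper, $K$-independent $\min\{\rho,1\}\,e^{-1/\rho^{2}}$ appearing in Eq.~\ref{eq:strongratio}.

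The main obstacle is the sharp tail estimate for $\Gamma(K,x)$ near its transition at $x\approx K$: I must show that advancing the argument by $\Theta(\sqrt{K}/\rho)$ multiplies the relevant tail integral by a factor bounded below by order $\min\{\rho,1\}\,e^{-1/\rho^{2}}$, which requires a Laplace/saddle-point (local central limit) estimate of the Poisson tail rather than a crude monotonicity argument. Here the hypothesis Eq.~\ref{mainconstraint}, $\int_{0}^{t_{0}}\gamma_{j\mid\Ecal}\ge K-1+\zeta\sqrt{K-1}$, is essential: it keeps the cumulative intensity a distance $\Theta(\zeta)$ standard deviations away from the transition point throughout $[t_{0},t_{f}]$, placing us in the regime where the Gaussian approximation is valid, and the residual error of that approximation is precisely what surfaces as the $\big(1-4e^{7/4}/\zeta\big)$ factor in the denominator of Eq.~\ref{eq:strongratio}. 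Once this single tail estimate and the displacement control from Eq.~\ref{eq:extracondition} are in hand, assembling the uniform bounds $\ell$ and $u$ and forming $\ell/u$ yields $\alpha\le\alpha^{*}$ as stated.
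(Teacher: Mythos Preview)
Your plan departs from the paper's argument in two structural ways, and one of them contains a genuine error.

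\textbf{Different decomposition and different comparison.} The paper does not seek uniform bounds $\ell\le\rho_v(\Ecal)\le u$. Instead it writes $\Ucal(\Ecal_\Bcal,j)$ via the integrated form Eq.~\ref{eq:Uwexact} as a constant plus a $G$-term plus the $\Gamma\cdot\gamma$-term $\Pcal_j$; it then observes (Lemma~\ref{lemma:concave}) that the $G$-term is \emph{submodular}, hence contributes curvature $0$, and uses the aggregation Lemma~\ref{lemma:aggregate} so that only $\Pcal_j$ needs a curvature bound. For $\Pcal_j$ the paper proves a direct ratio inequality for nested sets $\Ecal_\Bcal\subseteq\tilde\Ecal_\Bcal$ (Lemma~\ref{keylemma}), explicitly using the monotonicity $J(\tilde\mu+\gamma,\tau,t)\ge J(\mu+\gamma,\tau,t)$ together with the convexity of $x\mapsto -\Gamma(K,x)$ on $(K-1,\infty)$. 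Your direct/displacement split and your $\ell/u$ scheme are not wrong in principle, but they forfeit both of these simplifications, and you give no indication how to recover the same constants without them.

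\textbf{The gap.} Your claim that Eq.~\ref{mainconstraint} ``keeps the cumulative intensity a distance $\Theta(\zeta)$ standard deviations away from the transition point throughout $[t_0,t_f]$'' is false. For each fixed $t$, the inner integration variable $\tau$ runs over $[0,t]$, and $J(\mu+\gamma,\tau,t)$ is continuous and decreases from $J(\mu+\gamma,0,t)\ge K-1+\zeta\sqrt{K-1}$ all the way to $J(\mu+\gamma,t,t)=0$. Hence $J$ \emph{necessarily} crosses the transition point $K-1$, and part of the integrand always lives in the regime where $-\Gamma(K,\cdot)$ is convex rather than concave. The paper handles this by introducing $\tau_0$ with $J(\mu+\gamma,\tau_0,t)=K-1$ and splitting $\Delta=\Delta(0,\tau_0)+\Delta(\tau_0,t)$: on $(0,\tau_0)$ convexity gives a favourable inequality for free; on $(\tau_0,t)$ the piece $\Delta(\tau_0,t)$ can be negative, and the whole point is to show $\Delta(0,\tau_0)\ge -\Delta(\tau_0,t)$. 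The extra hypothesis Eq.~\ref{eq:extracondition} enters precisely here: it bounds $d=J(\lambda,\tau_0,t)\le \sqrt{K-1}/\rho$, which is what enables statement~3 of Lemma~\ref{lemma:inequalities} and yields the $\min\{\rho,1\}e^{-1/\rho^2}$ factor. Your sketch does not identify this split, and without it the tail estimate you describe cannot be applied uniformly over the inner integral.
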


Note that we can always find a constant $\rho > 0$ so that Eq.~\ref{eq:strongratio} is satisfied, however, the term $\min \{ \rho , 1 \} e^{-\frac{1}{\rho^2}}$ will decrease
drastically when $\rho \geq 1$.
Finally, by combining Theorems~\ref{thm:first-theorem} and Theorem~\ref{thm:second-theorem}, we obtain Proposition~\ref{prop:visibility-strong-submodularity-ratio} in 
the main paper.

\xhdr{Proof sketch of Theorems~\ref{thm:first-theorem} and~\ref{thm:second-theorem}}
To bound the generalized curvature of the average visibility $F(\Ecal_{\Bcal})$, defined by Eq.~\ref{eq:visibility-multiple-edges}, we have to provide a 
$\alpha$ satisfying the inequality in Eq.~\ref{eq:xi-submodularity}. 
The following Lemma % (proven in Appendix~\ref{app:lemma:aggregate}) 
lets us omit several sums and integrations Eq.~\ref{eq:visibility-multiple-edges} depends on while deriving a bound for the generalized curvature.
\vspace{-1mm}
\begin{lemma}	\label{lemma:aggregate}
Let $\{ F_{\sigma} \}_{\sigma \in \wp}$ be a family of set functions $F_{\sigma} : \Wcal \rightarrow \RR$ parametrized by $\sigma$, such that for each fixed $\sigma \in \wp$, 
$F_{\sigma}$ is a set function with generalized curvature $\alpha_{\sigma} \leq \alpha^{*}$. Then, the following statements hold:
\begin{itemize}[noitemsep,nolistsep,leftmargin=0.8cm]
\item[---] Suppose $\wp = (a,b) \subseteq \mathbb{R}$ and $\tilde{F}(S) = \int_{a}^{b} F_{\sigma}(S) \, d\sigma$ for every subset $S \subseteq \Vcal$ and the integral always exists. 
Then, $\tilde{F}$ has generalized curvature $\alpha \leq \alpha^{*}$. 
\item[---] Suppose $\wp$ is a discrete set and $\tilde{F}(S) = \sum_{\sigma \in \wp} F_{\sigma}(S)$. Then, $\tilde{F}$ has generalized curvature $\alpha \leq \alpha^{*}$.
\end{itemize}
For both cases introduced, if for each $\sigma$, $F_{\sigma}$ is submodular, then $F$ is submodular as well.
\vspace{-2mm}
\begin{proof}
By the definition of generalized curvature, $\forall v \in \Wcal$, $A \subseteq B$, $\forall \sigma$:
%
%\vspace{-1mm}
%
\begin{equation}
F_{\sigma}(A \cup \{ v\}) - F_{\sigma} (A) \geq (1-\alpha_{\sigma})(F_{\sigma}(B \cup \{ v\}) - F_{\sigma}(B)) \geq (1-\alpha^*)(F_{\sigma}(B \cup \{ v\}) - F_{\sigma}(B)) \label{firstprimitive}.
\end{equation}
By integrating over both sides of Eq.~\ref{firstprimitive} over $\sigma$ we get
\begin{align*}
F(A \cup \{ v\}) - F (A) \geq (1-\alpha_{\sigma})(F(B \cup \{ v\}) - F(B)) \geq 
 (1-\alpha^*)(F (B \cup \{ v\}) - F (B)),
 \end{align*}
 which proves $\alpha \leq \alpha^*$. The proof for the second part follows the same way by summing over $\sigma \in \wp$.
 \end{proof}
\end{lemma}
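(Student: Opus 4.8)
The plan is to reduce the statement to a single pointwise curvature inequality for each fixed $\sigma$ and then push that bound through the aggregation (integral or sum) by linearity. First I would fix an arbitrary $v \in \Wcal$ and two nested sets $A \subseteq B \subseteq \Wcal$, and recall what it means for $F_\sigma$ to have generalized curvature $\alpha_\sigma \le \alpha^*$: by Eq.~\ref{eq:xi-submodularity}, for every $\sigma \in \wp$ we have $F_\sigma(A \cup \{v\}) - F_\sigma(A) \ge (1-\alpha_\sigma)[F_\sigma(B \cup \{v\}) - F_\sigma(B)]$. Since $\alpha_\sigma \le \alpha^*$ gives $1-\alpha_\sigma \ge 1-\alpha^* \ge 0$, and since each $F_\sigma$ is nondecreasing so that $F_\sigma(B \cup \{v\}) - F_\sigma(B) \ge 0$, I can weaken the coefficient and obtain the uniform bound $F_\sigma(A \cup \{v\}) - F_\sigma(A) \ge (1-\alpha^*)[F_\sigma(B \cup \{v\}) - F_\sigma(B)]$, now with a constant $\alpha^*$ that no longer depends on $\sigma$.

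Second I would aggregate over $\sigma$. For the continuous case, integrating both sides over $(a,b)$ and using linearity of the integral together with the definition $\tilde{F}(S) = \int_a^b F_\sigma(S)\,d\sigma$ yields $\tilde{F}(A \cup \{v\}) - \tilde{F}(A) \ge (1-\alpha^*)[\tilde{F}(B \cup \{v\}) - \tilde{F}(B)]$; the interchange of the finite difference with the integral is justified precisely by the stated hypothesis that the integral exists. The discrete case is identical, with $\sum_{\sigma \in \wp}$ replacing $\int_a^b$. Because the resulting inequality is exactly Eq.~\ref{eq:xi-submodularity} for $\tilde{F}$ with the constant $\alpha^*$, and the generalized curvature is defined as the smallest admissible constant, I can conclude that $\tilde{F}$ has generalized curvature $\alpha \le \alpha^*$.

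For the submodularity addendum I would observe that submodularity is the special case of the same argument in which the marginal gain $F_\sigma(\,\cdot\, \cup \{v\}) - F_\sigma(\,\cdot\,)$ is nonincreasing in the set, i.e. $F_\sigma(A \cup \{v\}) - F_\sigma(A) \ge F_\sigma(B \cup \{v\}) - F_\sigma(B)$ for each $\sigma$. Integrating or summing this inequality over $\sigma$ preserves it and delivers submodularity of $\tilde{F}$ directly, with no coefficient to track.

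I expect no deep obstacle here: the only points requiring care are (i) the direction of the inequality when replacing $1-\alpha_\sigma$ by $1-\alpha^*$, which is what forces me to invoke monotonicity (nonnegativity of the marginal gains) rather than treating the $F_\sigma$ as arbitrary set functions, and (ii) the legitimacy of pulling the finite difference inside the integral, which is a linearity statement guaranteed by the assumed existence of the integral. Everything else is a one-line consequence of linearity of integration and summation.
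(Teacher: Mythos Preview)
Your proposal is correct and follows essentially the same line as the paper's own proof: write the curvature inequality pointwise in $\sigma$, replace $1-\alpha_\sigma$ by the uniform $1-\alpha^*$, and then integrate or sum over $\sigma$. You are in fact slightly more careful than the paper, since you make explicit that the coefficient replacement step relies on nonnegativity of the marginal gains (monotonicity of the $F_\sigma$), a point the paper's proof leaves implicit.
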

\vspace{-1mm}

More specifically, using the first statement of Lemma~\ref{lemma:aggregate}, in order to obtain an upper bound for the generalized curvature for $F(\Ecal_{\Bcal})$, it is 
sufficient to obtain an upper bound for each $\Ucal(\Ecal_{\Bcal}, j)$, which is a summation of three terms, as given by Eq.~\ref{eq:Uwexact}.
The first term, $K(t_{f} - t_{0})$, is a constant and does not appear in the marginal gains $\{ \rho_e \}_{e \in \Ecal_{\Bcal}}$, therefore, it does not affect the generalized curvature. 
Using the second statement of Lemma~\ref{lemma:aggregate}, it is sufficient to obtain an upper bound for the generalized curvature of the second and third term separately. 

The second term is the integration of the function $G$ in the time interval $(t_0 , t_f)$. Therefore, using the first statement of Lemma~\ref{lemma:aggregate}, it is sufficient 
to obtain an upper bound for the generalized curvature of the function $G$ for any $t \in [t_0, t_f]$. 
Here, note that the function $G$ only depends on the links pointing at follower $j$, therefore, its ground set $\Ecal_{:, j}'{}$ is the ground set of candidate links from 
the broadcasters to $j$.
Next, we will use the following Lemma % (proven in Appendix~\ref{app:lemma:concave}) 
to show that $G$ is a submodular function and thus its generalized curvature is $\alpha = 0$:
\begin{lemma} \label{lemma:concave}
Let $G(\Scal) = f(c + F(\Scal))$ be a nonnegative set function, where $f$ is a concave function over $\mathbb{R}^{+}$, $F$ is a nonnegative 
modular function, and $c$ is a nonnegative constant. Then, $G$ is submodular. 
\vspace{-1mm}
\begin{proof}
According to Lemma~\ref{lemma:concavebasic} and by using the property of modular functions, for $A \subseteq B$, it follows that
\vspace{-1mm}
\begin{align*}
G(B \cup \{ v\}) - G(B) &= f(c + F(B \cup \{ v\})) - f(c + F(B)) \\
& = f(c + F(B  \setminus A) + F(A \cup \{ v\}) ) - f(c + F(B \setminus A) + F(A)) \\
& \leq f(c + F(A  \cup \{ v\})) - f(c + F(A)) = G(A \cup \{ v\}) - G(A).
\end{align*}
\end{proof}
\end{lemma}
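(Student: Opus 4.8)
The plan is to verify the submodularity inequality directly, i.e., to show that for every $\Acal \subseteq \Bcal$ and every $v \in \Vcal \setminus \Bcal$ the marginal gain of $v$ is at least as large at $\Acal$ as at $\Bcal$,
\begin{equation*}
G(\Acal \cup \{v\}) - G(\Acal) \geq G(\Bcal \cup \{v\}) - G(\Bcal).
\end{equation*}
The key idea is that, because $F$ is modular, all four relevant values of $F$ collapse into increments of the single concave function $f$ evaluated at two points that differ only by a shift. This reduces the entire claim to a one-dimensional statement about concave functions, eliminating the combinatorial structure of the ground set.

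First I would exploit modularity to rewrite the four values of $F$. Since $F$ is a nonnegative modular function, $F(\emptyset) = 0$ and $F$ is additive over disjoint sets; hence, decomposing $\Bcal$ into the disjoint pieces $\Acal$ and $\Bcal \setminus \Acal$, I obtain $F(\Bcal) = F(\Acal) + F(\Bcal \setminus \Acal)$ and $F(\Bcal \cup \{v\}) = F(\Acal \cup \{v\}) + F(\Bcal \setminus \Acal)$. Setting $x = c + F(\Acal)$, $y = c + F(\Bcal) = x + F(\Bcal \setminus \Acal)$ and $\delta = F(\Acal \cup \{v\}) - F(\Acal) = F(\{v\})$, the nonnegativity of $F$ guarantees both $y \geq x$ and $\delta \geq 0$. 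Under these substitutions the two marginal gains become
\begin{equation*}
G(\Acal \cup \{v\}) - G(\Acal) = f(x + \delta) - f(x), \qquad G(\Bcal \cup \{v\}) - G(\Bcal) = f(y + \delta) - f(y),
\end{equation*}
so the claim reduces to the inequality $f(x+\delta) - f(x) \geq f(y+\delta) - f(y)$ whenever $x \leq y$ and $\delta \geq 0$.

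This last inequality is the diminishing-increments property of concave functions, which I expect to be exactly the content of Lemma~\ref{lemma:concavebasic}. To prove it from scratch I would observe that both $x + \delta$ and $y$ lie in the interval $[x, y + \delta]$ and can be written as complementary convex combinations of its endpoints: with $s = \delta / (y - x + \delta)$ one checks that $x + \delta = (1-s)x + s(y+\delta)$ and $y = s\,x + (1-s)(y+\delta)$. Applying the concavity of $f$ to each of these two convex combinations and adding the resulting inequalities makes the weights on $f(x)$ and on $f(y+\delta)$ each sum to one, yielding $f(x+\delta) + f(y) \geq f(x) + f(y+\delta)$, which rearranges into the desired inequality (the degenerate case $x=y,\delta=0$ being trivial).

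I expect the only genuine subtlety to lie in the first step: one must use that \emph{nonnegative modular} forces $F(\emptyset) = 0$ and nonnegative singleton increments, since these are precisely the facts that ensure $y \geq x$ and $\delta \geq 0$ and hence let the monotone-increment property of $f$ be applied in the correct direction. The concavity argument itself is routine once the convex-combination weights $s$ and $1-s$ have been identified, so I do not anticipate any deeper obstacle.
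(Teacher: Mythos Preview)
Your proposal is correct and follows essentially the same route as the paper: both reduce the submodularity inequality, via modularity of $F$, to the diminishing-increments property of the concave function $f$, which is precisely the content of Lemma~\ref{lemma:concavebasic}. The only minor difference is that you prove this one-dimensional inequality by a convex-combination argument (which avoids the smoothness hypothesis the paper uses when it integrates $f'$), but the overall structure is identical.
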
 
\vspace{-2mm}
Moreover, the parameter of the function $G$, \ie,
\begin{equation*}
% \EE[(\Ab^{T} N(t))_j] % &= \EE[(\Ab_{\Bcal}^{T} N(t))_j + (\Ab_{\Vcal {\scriptsize \backslash} \Ecal_{\Bcal}}^{T} N(t))_j] \\
% = \sum_{i \in \Bcal \, : \, (i, j) \in \Ecal_{\Bcal}} \EE[N_i(t)]  + \sum_{i \notin \Bcal \, : \, (i, j) \in \Ecal} \EE[N_i(t)].
\int_{0}^{t} \sum_{i \in \Bcal \, : \, (i, j) \in \Ecal_{\Bcal}} \mu_i(x) + \gamma_{j \,|\, \Ecal}(x) dx,
\end{equation*}
for a fixed $t$, it is a sum of a modular function over the groundset $\Ecal_{:, j}'{}$ of candidate links from the broadcasters to $j$ and a 
constant. Moreover, $G$ is concave in $\mathbb{R}^{+}$ since $G'(x) = -\frac{x^{K-1}e^{-x}}{(K-1)!} \leq 0$.
Therefore, we can use Lemma~\ref{lemma:concave} to conclude that the second term is submodular over the groundset 
$\Ecal_{:, j}'{}$.
This means that an upper bound for the generalized curvature of the third term in Eq.~\ref{eq:Uwexact} will be an upper bround for the generalized curvature 
of $F(\Ecal_{\Bcal})$, on the grounds of Lemma~\ref{lemma:aggregate}. 

To upper bound the generalized curvature of the third term, using the second statement of Lemma~\ref{lemma:aggregate}, it 
is sufficient to provide an upper bound on the generalized curvature of 
\vspace{-1mm}
\begin{align}
-\int_{0}^{t} \Gamma(K ,  J(\mu+\gamma, \tau, t)) \gamma(\tau)  {d\tau} := \Pcal_j(\Ecal_{\Bcal}), \label{pdefinition}
\end{align}
where $\mu(\tau) = \sum_{i \in \Bcal \,:\, (i, j) \in \Ecal_{\Bcal}} \mu_i(\tau)$ and $\gamma(\tau) = \gamma_{j \,|\, \Ecal}(\tau)$. Here, note that the 
above function depends on the candidate links $\Ecal_{\Bcal}$ through $\mu(\tau)$.
%
%  the argument $\Hh_{w}$ reminds that $\Pcal_{w}^{(0,t)}$ depends on $\Hh_{w}$. 
% It can be checked that $\Pcal_{w}^{(0,t)}$ is actually the expected number of outer posts in wall $w$ at time t (proof is in a Lemma in Appendix). 
Moreover,
%
% \vspace{-1mm}
% \begin{equation*}
%J(\mu+\gamma, \tau, t) = % \int_{\tau}^{t} \mu(x)+\gamma(x) dx \\ % = \int_{\tau}^{t} (\mu_{w}(x) + (A^{T})_{w} \Lambda(x)) dx \\
%\int_{\tau}^{t} \sum_{i \in \Bcal \,:\, (i, j) \in \Ecal_{\Bcal}} \mu_i(x) dx + \int_{\tau}^{t}  \gamma_{j {\scriptsize \backslash} \Ecal_{\Bcal}}(x)dx,
% \end{equation*}
%
% which reveals that 
the function $J(\mu+\gamma, \tau, t)$ is the sum of a modular function over the ground set $\Ecal_{:, j}'{}$ of candidate links from the broadcasters to $j$ and a constant.
Thus, if $-\Gamma(K , x)$ was concave for $x \in \mathbb{R}^{+}$, then we could combine the second statement of Lemma~\ref{lemma:aggregate} and 
Lemma~\ref{lemma:concave} to conclude that $\Pcal_j(\Ecal_{\Bcal})$ is submodular over $\Ecal_{:, j}'{}$. 
Unfortunately, $-\Gamma$ is convex on $(0,K-1)$ and concave on $(K-1,\infty)$. 

With this in mind, we define the time point $0 \leq \tau_{0} = \tau_{0}(\mu+\gamma, t) \leq t$ such that $J(\mu+\gamma, \tau_0, t)  = K-1$,
%
%\begin{align}
%J(\mu+\gamma, \tau_0, t)  = K-1,
%\end{align}
%
which will allow us to analyze the domains $(0,K-1)$ and $(K-1 , \infty)$ separately. Here, note that such $\tau_{0}$ exists because $J(\mu+\gamma, t, t) = 0$, 
\begin{align*}
J(\mu+\gamma, 0, t) &= \int_{0}^{t} \sum_{i \in \Bcal \,:\, (i, j) \in \Ecal_{\Bcal}} \mu_i(x) + \gamma_{j \,|\, \Ecal}(x) dx \geq \int_{0}^{t} \gamma_{j \,|\, \Ecal}(x) dx \geq \int_{0}^{t_0} \gamma_{j \,|\, \Ecal}(x) dx \\
&\geq K-1 \!\!\ + \zeta \sqrt{K-1} \!\! \geq K-1,
 \end{align*}
using Eq.~\ref{mainconstraint} in~Theorem~\ref{thm:first-theorem}, and the fact that $J(\mu+\gamma, \tau, t)$ is a continuous (nonincreasing) function with respect to $\tau$.

The following Lemma (proven in Appendix~\ref{app:keylemma}) introduces a key inequality to derive an upper bound on the generalized curvature of $\Pcal(\Ecal_{\Bcal})$.
 \begin{lemma} \label{keylemma}
Given a social network $\Gcal = (\Vcal, \Ecal)$. Let $\Ecal_{\Bcal}, \tilde{\Ecal}_{\Bcal}$ be two possible sets of candidate links from a set of broadcasters $\Bcal$, such 
that $\Ecal_{\Bcal} \subseteq \tilde{\Ecal}_{\Bcal}$ and $\tilde{\Ecal}_{\Bcal} \cap \Ecal = \emptyset$,
$j \in \Vcal$ a given user with feed intensity $\gamma_{j \,|\, \Ecal}(t) = \gamma(t)$ due to broadcasters induced by $\Ecal$, 
$$\mu(t) = \sum_{i \in \Bcal : (i, j) \in \Ecal_{\Bcal}} \mu_i(t) \quad \mbox{and} \quad \tilde{\mu}(t) = \sum_{i \in \Bcal : (i, j) \in \tilde{\Ecal}_{\Bcal}} \mu_i(t)$$ be 
the intensities induced by the links $\Ecal_{\Bcal}$ and $\tilde{\Ecal}_{\Bcal}$, respectively, in user $j$'{}s feed, and assume the intensities $\gamma(t)$, $\gamma(t)+\mu(t)$ and $\gamma(t)+\tilde{\mu}(t)$ to 
be $\xi$-bounded.
Consider a broadcaster $i \in \Bcal$ with intensity $\mu_i(t) = \lambda(t)$, such that $(i, j) \notin \tilde{\Ecal}_{\Bcal}$. Then, under conditions of Theorem~\ref{thm:first-theorem}, it holds that
\begin{align}
%\theta \big( \Pcal_{w}^{(0,t)} (\Hw \cup \{ b \}) - & \Pcal_{w}^{(0,t)} (\Hw)   \big) \geq \nonumber  \\
%& \Pcal_{w}^{(0,t)} (\Hwtilde \cup \{ b \}) - \Pcal_{w}^{(0,t)} (\Hwtilde), \label{mainlemma1}
 \theta & \left[  \int_{0}^{t}  \Gamma(K ,  J(\mu+\gamma, \tau, t))  d\tau - \int_{0}^{t} \Gamma(K ,  J(\mu+\gamma+\lambda, \tau, t)) {d\tau} \right] \nonumber  \\
 \geq & \int_{0}^{t} \Gamma(K ,  J(\tilde{\mu}+\gamma, \tau, t)) d\tau - \int_{0}^{t} \Gamma(K ,  J(\tilde{\mu}+\gamma+\lambda, \tau, t)) {d\tau}, \label{eq:mainlemma1}
\end{align} 
where
\begin{equation}
\theta = \theta_{1} = \frac{6.154 e^2}{ 1  -  \frac{4e^{\frac{7}{4}}}{\zeta}} \xi \sqrt{K} + 1. \label{thetadefinition}
\end{equation}
Moreover, under extra condition of Theorem~\ref{thm:second-theorem}, Eq.~\ref{eq:mainlemma1} also holds for
\begin{align}
\theta = \theta_{2} = \frac{ 2e^{\frac{7}{4}}}{(1  -  \frac{4e^{\frac{7}{4}}}{\zeta}) \min \{ \rho , 1 \} e^{-\frac{1}{\rho^2}}}\xi + 1. \label{thetadefinition2}
\end{align}
\end{lemma}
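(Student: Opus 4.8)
The claimed inequality is exactly the generalized-curvature inequality for the set function $\Pcal_j$ of Eq.~\ref{pdefinition}: the left side is (a multiple $\theta$ of) the marginal gain of attaching the link $(i,j)$ to the smaller set $\Ecal_{\Bcal}$, and the right side is the marginal gain of attaching it to the larger set $\tilde{\Ecal}_{\Bcal}$. So the plan is to prove this inequality directly by a pointwise-in-$\tau$ analysis after reducing the incomplete gamma function to the Erlang density. Writing $p(u)=\frac{u^{K-1}e^{-u}}{(K-1)!}$, so that $\frac{d}{ds}\Gamma(K,s)=-p(s)$, and setting $\phi_\ell(s):=\Gamma(K,s)-\Gamma(K,s+\ell)=\int_s^{s+\ell}p(u)\,du\ge 0$, both sides of Eq.~\ref{eq:mainlemma1} become integrals over $\tau$ of $\phi_{\ell(\tau)}$ evaluated at the accumulated intensities $a(\tau):=J(\mu+\gamma,\tau,t)$ on the left and $\tilde a(\tau):=J(\tilde\mu+\gamma,\tau,t)$ on the right, with increment $\ell(\tau):=J(\lambda,\tau,t)$. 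Since $\tilde\mu\ge\mu$ we have $\tilde a\ge a$, and all three quantities are continuous, nonincreasing in $\tau$, and vanish at $\tau=t$. I will write $\Delta_A=\int_0^t\phi_{\ell}(a)\,d\tau$ and $\Delta_B=\int_0^t\phi_{\ell}(\tilde a)\,d\tau$, so the goal is $\theta\,\Delta_A\ge\Delta_B$.

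First I would split the integral at the time $\tau_0$ defined by $a(\tau_0)=K-1$, which exists by the argument already given in the text (it uses Eq.~\ref{mainconstraint}). This separates the \emph{concave branch} $\tau\in[0,\tau_0]$, where $a\ge K-1$, from the \emph{convex branch} $\tau\in[\tau_0,t]$, where $a\le K-1$, mirroring the fact that $-\Gamma$ is concave on $(K-1,\infty)$ and convex on $(0,K-1)$. On the concave branch $\phi_\ell(\cdot)$ is nonincreasing, because $\phi_\ell'(s)=p(s+\ell)-p(s)\le 0$ once $s\ge K-1$ (the density $p$ is unimodal with peak at $K-1$); hence $\phi_\ell(\tilde a)\le\phi_\ell(a)$ pointwise there, and the concave branch already obeys the desired inequality with constant $1\le\theta$.

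The heart of the argument is the convex branch, where $\phi_\ell$ may be increasing and the pointwise inequality can fail. Using the concave-branch estimate together with $\int_0^{\tau_0}\phi_\ell(a)\,d\tau\le\Delta_A$, it suffices to establish
\[
\int_{\tau_0}^{t}\phi_{\ell}(\tilde a)\,d\tau \;\le\; (\theta-1)\int_{0}^{\tau_0}\phi_{\ell}(a)\,d\tau ,
\]
since then $\Delta_B\le\int_0^{\tau_0}\phi_\ell(a)\,d\tau+(\theta-1)\int_0^{\tau_0}\phi_\ell(a)\,d\tau=\theta\int_0^{\tau_0}\phi_\ell(a)\,d\tau\le\theta\,\Delta_A$. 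To bound the two integrals I would change variables from $\tau$ to the accumulated intensity ($da=-(\mu+\gamma)(\tau)\,d\tau$, and analogously for $\tilde a$), using $\xi$-boundedness to pin the Jacobian $1/(\mu+\gamma)$ to within a factor $\xi$ of a constant; this turns both integrals into integrals of $\phi_\ell$ against $p$ over intervals of the accumulated-intensity variable. The convex-branch integral is then controlled by the total mass $\int_0^\infty\phi_\ell(s)\,ds=\int_0^\infty p(u)\min(u,\ell)\,du$ (of order $K$ for large increments, of order $\ell$ for small ones), while Eq.~\ref{mainconstraint} guarantees $a(\tau_0^-)$ extends a buffer of length $\zeta\sqrt{K-1}$ above $K-1$, lower-bounding $\int_0^{\tau_0}\phi_\ell(a)\,d\tau$; the tail correction on this buffer is what produces the factor $1-\frac{4e^{7/4}}{\zeta}$ and forces $\zeta>4e^{7/4}$. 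The ratio of a convex extent of order $K$ (in accumulated-intensity units) to a concave buffer of order $\sqrt K$ is precisely the source of the $\sqrt K$ in $\theta_1$, and carrying the $\xi$ from the Jacobian gives $\theta-1=\Theta\!\big(\xi\sqrt K/(1-\tfrac{4e^{7/4}}{\zeta})\big)$, matching Eq.~\ref{thetadefinition}.

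The main obstacle is this last estimate on the convex branch, where $\tilde a$ ranges over $(0,K-1)$ and the increment $\ell$ is, under the hypotheses of Theorem~\ref{thm:first-theorem}, only controlled through $\xi$-boundedness and may span the full peak-width $\sim\sqrt K$ of $p$; the delicate input is a two-sided Stirling/Gaussian estimate of $p$ near its maximum $p(K-1)=\Theta(1/\sqrt K)$, everything else being bookkeeping with the change of variables. For the sharper constant $\theta_2$ of Theorem~\ref{thm:second-theorem}, I would instead feed in the extra hypothesis Eq.~\ref{eq:extracondition}, $\mu_i(t)\le\gamma_{j\,|\,\Ecal}(t)/(\rho\sqrt{K-1})$, to bound $\ell(\tau)$ by a $1/(\rho\sqrt{K-1})$ fraction of the background accumulation. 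This confines $\phi_\ell$ to a thin strip around the peak, so the convex-branch mass and the buffer mass become comparable and the $\sqrt K$ cancels, the Gaussian estimate of $p$ instead contributing the factor $\min\{\rho,1\}\,e^{-1/\rho^2}$ and yielding $\theta_2$ as in Eq.~\ref{thetadefinition2}.
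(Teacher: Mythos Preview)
Your proposal is correct and follows essentially the same route as the paper's proof: split at the unique $\tau_0$ with $J(\mu+\gamma,\tau_0,t)=K-1$, use the monotonicity of $\phi_\ell$ on $[K-1,\infty)$ (equivalently, convexity of $\Gamma(K,\cdot)$ there) to get the concave-branch inequality for free, and then show that the $(\theta-1)$-fold surplus on $[0,\tau_0]$ dominates the deficit on $[\tau_0,t]$ after a change of variables to accumulated intensity, with $\xi$-boundedness controlling the Jacobian and Lemma~\ref{lemma:inequalities} supplying the ratio of the two $\Gamma$-integrals.

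One small point the paper makes explicit and you glossed over: before changing variables, the paper freezes the variable increment $\ell(\tau)=J(\lambda,\tau,t)$ at the constant $d:=\ell(\tau_0)$, using that $\ell(\tau)\ge d$ on $[0,\tau_0]$ (so $\phi_{\ell(\tau)}(a)\ge\phi_d(a)$ there) and $\ell(\tau)\le d$ on $[\tau_0,t]$ (so $\phi_{\ell(\tau)}(\tilde a)\le\phi_d(\tilde a)$ there). Without this step your identity $\int_0^\infty\phi_\ell(s)\,ds=\int_0^\infty p(u)\min(u,\ell)\,du$ is not directly applicable, since after the substitution $s=\tilde a(\tau)$ the increment $\ell$ is still a function of $\tau$ rather than of $s$. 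Once $\ell$ is replaced by $d$, the paper's telescoping gives $\int_0^\infty\phi_d(s)\,ds=\int_0^d\Gamma(K,x)\,dx$ on the convex side and $\int_{K-1}^{K-1+d}\Gamma(K,x)\,dx$ (after the tail correction from statement~1 of Lemma~\ref{lemma:inequalities}) on the concave side, and statements~2 and~3 of that lemma yield $\theta_1$ and $\theta_2$ exactly as you described.
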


With the above Lemma, we are now ready to derive an upper bound on the generalized curvature of $\Pcal_j(\Ecal_{\Bcal})$. Consider the same definitions and assumptions 
as in the above Lemma. Then, 
%
% We proceed with proof of Theorem \ref{firsttheorem}:
% To prove that $\alpha$ is a upper bound for generalized curvature of $\Pcal_{w}^{(0,t)}$, it is necessary and sufficient to prove the following property:
\begin{align*}
   \Pcal_{j}(\Ecal_{\Bcal} \cup \{ (i, j) \}) -  \Pcal_{j} (\Ecal_{\Bcal}) &=  \int_{0}^{t} \big( \Gamma(K ,  J(\mu+\gamma, \tau, t)) - \Gamma(K ,  J(\mu+\gamma+\lambda, \tau, t)) \big) \gamma(\tau)  d\tau \\
  & \geq  \inf(\gamma) \int_{0}^{t} \big( \Gamma(K ,  J(\mu+\gamma, \tau, t)) - \Gamma(K ,  J(\mu+\gamma+\lambda, \tau, t)) \big) d\tau \\
  & \stackrel{(a)}{\geq} \frac{\inf(\gamma)}{\theta} \int_{0}^{t} \big( \Gamma(K ,  J(\tilde{\mu}+\gamma, \tau, t)) - \Gamma(K ,  J(\tilde{\mu}+\gamma+\lambda, \tau, t)) \big) d\tau \\
  & \geq \frac{\inf(\gamma)}{\theta \sup(\gamma)} \int_{0}^{t} \big( \Gamma(K ,  J(\tilde{\mu}+\gamma, \tau, t)) - \Gamma(K ,  J(\tilde{\mu}+\gamma+\lambda, \tau, t)) \big) \gamma(\tau) d\tau \\
   &\stackrel{(b)}{\geq}  \frac{1}{\theta \xi} \int_{0}^{t} \big( \Gamma(K ,  J(\tilde{\mu}+\gamma, \tau, t)) - \Gamma(K ,  J(\tilde{\mu}+\gamma+\lambda, \tau, t)) \big) \gamma(\tau) d\tau  \\
   &=  (1-\alpha) \big( \Pcal_{j} (\tilde{\Ecal}_\Bcal \cup \{ (i, j) \}) -  \Pcal_{j} (\tilde{\Ecal}_\Bcal) \big), %\label{keylemmaproof}
%\geq  \alpha & \big( \Pcal_{w}^{(0,t)}(\Hw \cup \{ b \}) - \Pcal_{w}^{(0 , t)}(\Hw) \big) \\
\end{align*}
%
%\begin{align}
%\!\!\!\!\! =  
%\end{align}
%
where (a) follows from the Lemma and (b) follows from the $\xi$-boundedness of $\gamma$.
This result implies that $\alpha$, defined in Eq.~\ref{eq:weakratio} or Eq.~\ref{eq:strongratio}, is an upper bound on the generalized curvature of $\Pcal_j(\Ecal_{\Bcal})$, 
thereby, it is an upper bound on that of $F(\Ecal_{\Bcal})$ as well.

\vspace{-1mm}
\subsubsection{Proof of Lemma~\ref{keylemma}} \label{app:keylemma}
\vspace{-1mm}
We start by subtracting the left hand side (LHS) and the right hand side (RHS) of Eq.~\ref{eq:mainlemma1} and splitting the integration interval into subintervals $(0 , \tau_{0})$ and 
$(\tau_{0} , t)$ with $\tau_0 = \tau_{0}(\mu+\gamma, t)$ such that $J(\mu+\gamma, \tau_0, t)  = K-1$:
\vspace{-1mm}
\begin{align*}
LHS - RHS = \Delta(0 , \tau_{0}) + \Delta(\tau_{0} , t),
\end{align*}
where
\vspace{-1mm}
\begin{align*}
\Delta(a , b) = 
 & \theta \left[  \int_{a}^{b}  \Gamma(K ,  J(\mu+\gamma, \tau, t))  d\tau - \int_{a}^{b} \Gamma(K ,  J(\mu+\gamma+\lambda, \tau, t)) {d\tau} \right]  \\
 & - \left[ \int_{a}^{b} \Gamma(K ,  J(\tilde{\mu}+\gamma, \tau, t)) d\tau - \int_{a}^{b} \Gamma(K ,  J(\tilde{\mu}+\gamma+\lambda, \tau, t)) {d\tau} \right].
\end{align*}
For $\tau \leq \tau_{0}$, we have that $J(\mu+\gamma, \tau, t) \geq K-1$ using the fact that $J(\mu+\gamma, \tau, t)$ is nonincreasing with respect to $\tau$. Moreover, note that the intensity 
$\tilde{\mu}(t)$ is the summation of $\mu(t)$ with the intensity due to broadcasters $\tilde{\Ecal}_{\Bcal} \backslash \Ecal_{\Bcal}$. Therefore, 
\vspace{-1mm}
\begin{equation*}
J(\tilde{\mu}+\gamma, \tau, t) = \int_{\tau}^{t} \tilde{\mu}(x)+\gamma(x) dx \geq  \int_{\tau}^{t} \mu(x)+\gamma(x) dx = J(\mu+\gamma, \tau, t) \geq K-1.
\end{equation*}
In a similar way, we can conclude that
\vspace{-1mm}
\begin{align*}
J(\mu+\gamma+\lambda, \tau, t),\, J(\tilde{\mu}+\gamma+\lambda, \tau, t) \geq K-1.%  \label{morethanK}
\end{align*}
Then, using that 
the composite function $\Gamma(K, J(\mu+\gamma+\lambda, \tau, t))$ is convex in $J$ for $\tau \leq \tau_0$ 
and, for any convex function $f$, $x \geq x^{\prime}$, and $y \geq 0$, $f(x) - f(x + y) \leq f(x^{\prime}) - f(x^{\prime}+y)$ 
(refer to Lemma \ref{lemma:concavebasic}), it follows that
\vspace{-1mm}
 \begin{equation*}
  \Gamma(K , J(\mu+\gamma, \tau, t)) - \Gamma(K , J(\mu+\gamma+\lambda, \tau, t)) \geq \Gamma(K , J(\tilde{\mu}+\gamma, \tau, t))  - \Gamma(K , J(\tilde{\mu}+\gamma+\lambda, \tau, t)). % \numberthis %\label{semiconcavity}
 \end{equation*}
Next, we can integrate the above equation and obtain that
\vspace{-1mm}
\begin{align}
    \int_{0}^{\tau_{0}} &  \Gamma(K ,  J(\mu+\gamma, \tau, t))   d\tau - \int_{0}^{\tau_{0}} \Gamma(K ,  J(\mu+\gamma+\lambda, \tau, t)) {d\tau} \nonumber  \\
 & \geq \int_{0}^{\tau_{0}} \Gamma(K ,  J(\tilde{\mu}+\gamma, \tau, t)) d\tau - \int_{0}^{\tau_{0}} \Gamma(K ,  J(\tilde{\mu}+\gamma+\lambda, \tau, t)) {d\tau}, \nonumber
\end{align}
which implies 
\begin{align}
\Delta(0,\tau_0) \geq (\theta - 1) \big(  \int_{0}^{\tau_{0}}  \Gamma(K , J(\mu+\gamma, \tau, t))   d\tau - \int_{0}^{\tau_{0}} & \Gamma(K ,  J(\mu+\gamma+\lambda, \tau, t)) d\tau \big) \geq 0, \label{thetaminusone}
\end{align}
using that $\Gamma(K, x)$ is nonincreasing with respect to $x$ and $\theta > 1$. 
Unfortunately, $\Delta(\tau_{0} , t)$ can be negative. However, in the following, we will show that $\Delta(\tau_{0} , t) \geq -\Delta(0,\tau_{0})$. 

Let $d = J(\lambda, \tau_0, t)$. First, note that, for $\tau \leq \tau_{0}$, $J(\lambda, \tau, t) \geq J(\lambda, \tau_0, t) = d$. 
Then, starting from Eq.~\ref{thetaminusone}, we have that
\vspace{-1mm}
\begin{align*}
\Delta(0 , \tau_{0})
 &\geq % & (\theta - 1) \big(  \int_{0}^{\tau_{0}}  \Gamma(K ,  J_{s}^{t}(\tau))   d\tau - \int_{0}^{\tau_{0}} \Gamma(K ,  J_{s + \lambda}^{t}(\tau)) d\tau \big) \\
  (\theta - 1) \int_{J(\mu+\gamma, 0, t)}^{J(\mu+\gamma, \tau_0, t)} \big[ \Gamma(K , J(\mu+\gamma, \tau, t)) - \Gamma(K ,  J(\mu+\gamma+\lambda \tau, t)) \big] \frac{d\tau}{dJ(\mu+\gamma, \tau, t)} d J(\mu+\gamma, \tau, t)  \\
  %
  % =  (\theta - 1)&  \int_{J(\mu+\gamma, 0, t)}^{J(\mu+\gamma, \tau_0, t)}  \big( \Gamma(K , J_{s}^{t}(\tau)) - \Gamma(K ,  J_{s + \lambda}^{t}(\tau)) \big) \frac{1}{-s(\tau)} d J_{s}^{t}(\tau)  \\
  &=  (\theta - 1)   \int_{K-1}^{J(\mu+\gamma, 0, t)} \frac{1}{\mu(\tau)+\gamma(\tau)} \big[ \Gamma(K , J(\mu+\gamma, \tau, t))  - \Gamma(K ,  J(\mu+\gamma+\lambda, \tau, t)) \big] d J(\mu+\gamma, \tau, t)  \\
 % \geq  (\theta - 1) & \frac{1}{\sup(s)}   \int_{K-1}^{J_{s}^{t}(0)}  \big( \Gamma(K , J_{s}^{t}(\tau)) - \Gamma(K ,  J_{s + \lambda}^{t}(\tau)) \big)  d J_{s}^{t}(\tau)  \\
%  =  (\theta - 1) & \frac{1}{\sup(s)}   \int_{K-1}^{J_{s}^{t}(0)}  \big( \Gamma(K , J_{s}^{t}(\tau)) - \Gamma(K ,  J_{s}^{t}(\tau) + J_{\lambda}^{t}(\tau)) \big)  d J_{s}^{t}(\tau)  \\
 & \geq  \frac{\theta - 1}{\sup(\mu+\gamma)}    \int_{K-1}^{J(\mu+\gamma, 0, t)} \big[ \Gamma(K , J(\mu+\gamma, \tau, t))  - \Gamma(K ,  J(\mu+\gamma, \tau, t) + d) \big]  d J(\mu+\gamma, \tau, t)  \\
 &=  \frac{\theta - 1}{\sup(\mu+\gamma)}  \int_{K-1}^{J(\mu+\gamma, 0, t)} \big[ \Gamma(K , x)- \Gamma(K ,  x + d) \big] dx  \\
 &=  \frac{\theta - 1}{\sup(\mu+\gamma)}    \big[ \int_{K-1}^{J(\mu+\gamma, 0, t)}  \Gamma(K , x)dx -  \int_{K-1 + d}^{J(\mu+\gamma, 0, t) + d}   \!\!\!\!\!\!\!\!\!  \Gamma(K ,  x)  d x \big] \\
 &=  \frac{\theta - 1}{\sup(\mu+\gamma)}   \big[ \int_{K-1}^{K-1+d}  \Gamma(K , x)dx -  \int_{J_{s}^{t}(0)}^{J_{s}^{t}(0) + d}  \Gamma(K ,  x)  d x \big] 
\end{align*}
Next, we can use the first statement of Lemma~\ref{lemma:inequalities} (refer to Appendix~\ref{app:additional-lemmas}) to bound the second
integration term above and obtain that
\vspace{-1mm}
\begin{align}
\Delta(0 , \tau_{0}) \geq \frac{\theta - 1}{\sup(s)} \left(1 - \frac{4e^{\frac{7}{4}}}{\zeta} \right) \int_{K-1}^{K-1+d}  \Gamma(K , x)dx. \label{firstpart}
\end{align}
Second, note that, for $\tau \geq \tau_{0}$, $J(\lambda, \tau, t) \leq d$. Then, we have that
\vspace{-1mm}
\begin{eqnarray}
\begin{split}
\Delta(\tau_{0} , t)  
& \geq - \left( \int_{\tau_{0}}^{t} \Gamma(K ,  J(\tilde{\mu}+\gamma, \tau, t))d\tau  - \int_{\tau_{0}}^{t} \Gamma(K ,  J(\tilde{\mu}+\gamma+\lambda, \tau, t)) {d\tau} \right) \\
&= -  \int_{J(\tilde{\mu}+\gamma, \tau_0, t)}^{J(\tilde{\mu}+\gamma, t, t)} \big[ \Gamma(K ,  J(\tilde{\mu}+\gamma, \tau, t))  \\
& \qquad - \Gamma(K ,  J(\tilde{\mu}+\gamma+\lambda, \tau, t)) \big] \frac{d\tau}{d J(\tilde{\mu}+\gamma, \tau, t)} d J(\tilde{\mu}+\gamma, \tau, t) \\
& = - \int_{0}^{J(\tilde{\mu}+\gamma, \tau_0, t)} \frac{1}{\tilde{\mu}(\tau)+\gamma(\tau)} \big[ \Gamma(K ,  J(\tilde{\mu}+\gamma, \tau, t))  \\
& \qquad - \Gamma(K ,  J(\tilde{\mu}+\gamma+\lambda, \tau, t)) \big] d J(\tilde{\mu}+\gamma, \tau, t) \\
& \geq - \frac{1}{\inf(\tilde{\mu}+\gamma)} \int_{0}^{J(\tilde{\mu}+\gamma, \tau_0, t)} \big( \Gamma(K ,  x) - \Gamma(K ,  x + d) \big)  d x \\
& \geq - \frac{1}{\inf(\tilde{\mu}+\gamma)} \int_{0}^{\infty} \big( \Gamma(K ,  x) - \Gamma(K ,  x + d) \big)  d x \\
& = - \frac{1}{\inf(\tilde{\mu}+\gamma)}\int_{0}^{d} \Gamma(K ,  x)  d x \geq - \frac{\xi}{\sup(\tilde{\mu}+\gamma)}\int_{0}^{d} \Gamma(K ,  x)  d x, \label{secondpart}
\end{split}
\end{eqnarray}
where the last inequality follows from $\xi$-boundedness of the user'{}s feed intensity.
Before we proceed further, note that, under the extra condition of Theorem~\ref{thm:second-theorem}, we can upper bound $d$ as follows:
\begin{align*}
d = J(\lambda, \tau_0, t) \leq \!\! \int_{\tau_0}^{t} \lambda(x) dx \leq \!\! \int_{\tau_0}^{t} \frac{\gamma(x)}{\rho \sqrt{K-1}} dx 
\leq \!\! \int_{\tau_0}^{t} \frac{\mu(x)+\gamma(x)}{\rho \sqrt{K-1}}dx \\
= \frac{1}{\rho \sqrt{K-1}} J(\mu+\gamma, \tau_0, t) = \frac{\sqrt{K-1}}{\rho},
\end{align*}
and this enables us to use the third statement of Lemma~\ref{lemma:inequalities}. 
Finally, combining Eq.~\ref{firstpart} and Eq.~\ref{secondpart}, where $\theta$ is given by either Eq.~\ref{thetadefinition} or Eq.~\ref{thetadefinition2} 
together with the second or third statement of Lemma~\ref{lemma:inequalities}, depending on whether we have the extra condition of 
Theorem~\ref{thm:second-theorem}, it follows that:
\vspace{-1mm}
\begin{equation*}
 \Delta(0 , \tau_{0}) \geq \frac{\theta - 1}{\sup(s)} \left(1 - \frac{4e^{\frac{7}{4}}}{\zeta} \right) \int_{K-1}^{K-1+d}  \Gamma(K , x)dx \geq  \frac{\xi}{\sup(\tilde{s})}\int_{0}^{d} \Gamma(K ,  x)  d x \geq -\Delta(\tau_{0} , t),
\end{equation*}
which completes the proof of Lemma \ref{keylemma}.

\vspace{-1mm}
\subsubsection{Additional Technical Lemmas} \label{app:additional-lemmas}
\vspace{-1mm}
%
% For space constraints, we defer the proofs of the following additional technical lemmas to a longer version of this paper.
% \vspace{-1mm}
\begin{lemma} \label{lemma:concavebasic}
Let $f$ be a smooth function which is concave over domain $(a,b)$. Let $x,y,z$ be real numbers such that $a \leq x,y,x+z,y+z \leq b$, $x \leq y$, and $z \geq 0$. Then
$f(y+z) - f(x+z) \leq f(y) - f(x)$.
\begin{proof}
By integrating over the monotonicity relation of $f^{\prime}$.
\end{proof}
\end{lemma}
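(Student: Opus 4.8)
The plan is to reduce the claim to the elementary fact that a concave function has a nonincreasing derivative. First I would observe that, since $f$ is smooth and concave on $(a,b)$, its derivative $f'$ is nonincreasing there; equivalently, whenever $t$ and $t+z$ both lie in $(a,b)$ and $z \geq 0$, we have $f'(t+z) \leq f'(t)$. This is precisely the ``monotonicity relation of $f'$'' that the one-line proof sketch alludes to.

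Next I would rewrite both differences as integrals of $f'$ via the fundamental theorem of calculus, which applies because all four points $x$, $y$, $x+z$, $y+z$ lie in $(a,b)$ by hypothesis. Writing
\[
f(y) - f(x) = \int_x^y f'(t)\,dt, \qquad f(y+z) - f(x+z) = \int_{x+z}^{y+z} f'(t)\,dt = \int_x^y f'(t+z)\,dt,
\]
where the last equality is the substitution $t \mapsto t+z$, reduces the desired inequality to comparing two integrals over the same interval $[x,y]$, which is nonempty since $x \leq y$. Applying the pointwise bound $f'(t+z) \leq f'(t)$ for every $t \in [x,y]$ and integrating then preserves the inequality, yielding $\int_x^y f'(t+z)\,dt \leq \int_x^y f'(t)\,dt$, which is exactly $f(y+z) - f(x+z) \leq f(y) - f(x)$.

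There is no substantive obstacle here; the only points requiring care are bookkeeping ones. I would confirm that the hypotheses $a \leq x, y, x+z, y+z \leq b$ keep every argument of $f$ and $f'$ strictly inside the domain of concavity, so that the derivative exists and its monotonicity can be invoked, and I would note that the degenerate case $x = y$ (where both sides vanish) makes the integral comparison vacuously valid. As an alternative route that sidesteps the change of variables, one could set $g(t) = f(t+z) - f(t)$, compute $g'(t) = f'(t+z) - f'(t) \leq 0$, and conclude that $g$ is nonincreasing, so $g(y) \leq g(x)$; this is the same argument phrased through the mean value theorem rather than direct integration, and either presentation completes the proof.
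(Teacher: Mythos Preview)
Your proof is correct and is precisely the argument the paper has in mind: the one-line ``by integrating over the monotonicity relation of $f'$'' is exactly your computation $f(y+z)-f(x+z)=\int_x^y f'(t+z)\,dt \leq \int_x^y f'(t)\,dt = f(y)-f(x)$. The alternative via $g(t)=f(t+z)-f(t)$ is an equivalent rephrasing of the same idea.
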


\begin{lemma}  \label{lemma:inequalities}
The following statements about the incomplete gamma function $\Gamma(K , x)$ hold:
we have the following inequalities.
\begin{itemize}[noitemsep,nolistsep,leftmargin=0.8cm]
\item[1.] Let $d \geq 0$ and $\mathcal{K} \geq K-1 + \zeta \sqrt{K-1}$, 
\begin{align}
\frac{4e^{\frac{7}{4}}}{\zeta} \int_{K-1}^{K-1+d} \Gamma(K , x) dx \geq \int_{\mathcal{K}}^{\mathcal{K}+d} \Gamma(K , x)
\end{align}
\item[2.] Let $d \geq 0$, %For arbitrary positive real $d$,
\begin{align}
\frac{\int_{K}^{K-1+d} \Gamma(K,x)dx}{\int_{0}^{d}\Gamma(K,x)dx} \geq \frac{0.1625}{e^2 \sqrt{K}}
\end{align}
\item[3.] Let $0 \leq d \leq \frac{\sqrt{K-1}}{\rho}$,
\begin{align}
\frac{\int_{K}^{K-1+d} \Gamma(K,x)dx}{\int_{0}^{d}\Gamma(K,x)dx} \geq \frac{1}{2 e^{\frac{7}{4}}} \min \{ \rho,1 \} e^{-\frac{1}{\rho^2}}
\end{align}
\end{itemize}
\end{lemma}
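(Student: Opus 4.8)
The plan is to recast everything in terms of the $\mathrm{Gamma}(K,1)$ distribution. Observe that the paper's $\Gamma(K,x)$ is the \emph{regularized} upper incomplete gamma function $\Gamma(K,x)=e^{-x}\sum_{k=0}^{K-1}x^{k}/k!=\Pr[Y>x]$, where $Y$ is a sum of $K$ i.i.d.\ unit-rate exponentials (equivalently $Y\sim\mathrm{Gamma}(K,1)$), so that $-\tfrac{d}{dx}\Gamma(K,x)=p_K(x):=\tfrac{x^{K-1}e^{-x}}{(K-1)!}$ is the Gamma density, with mode $K-1$, mean $K$ and variance $K$. Two elementary facts do most of the work: $\Gamma(K,\cdot)$ is nonincreasing, and $\int_0^\infty\Gamma(K,x)\,dx=\int_0^\infty\Pr[Y>x]\,dx=\mathbb{E}[Y]=K$. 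All three inequalities reduce to estimating $I(a,d):=\int_a^{a+d}\Gamma(K,x)\,dx$ for various left endpoints $a$, so I would first assemble a small toolkit: (i) the exact total mass $K$; (ii) the trivial two-sided bound $d\,\Gamma(K,a+d)\le I(a,d)\le d\,\Gamma(K,a)$ from monotonicity; and (iii) Stirling bounds on the peak $p_K(K-1)=\Theta(1/\sqrt{K})$ together with Gaussian-type tail estimates for $\Pr[Y>K-1+s\sqrt{K-1}]$.

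For statement 1, I would exploit the log-concavity of $\Gamma(K,\cdot)$ (the survival function of the log-concave density $p_K$ is log-concave). Writing $\delta=\mathcal K-(K-1)\ge\zeta\sqrt{K-1}$ and substituting,
\[ \int_{\mathcal K}^{\mathcal K+d}\Gamma(K,x)\,dx=\int_{K-1}^{K-1+d}\Gamma(K,x+\delta)\,dx, \]
so it suffices to bound the ratio $\Gamma(K,x+\delta)/\Gamma(K,x)$ on $x\ge K-1$. Log-concavity makes $x\mapsto\log\Gamma(K,x+\delta)-\log\Gamma(K,x)$ nonincreasing, so this ratio is maximized at $x=K-1$, giving $\int_{\mathcal K}^{\mathcal K+d}\Gamma\le\frac{\Gamma(K,K-1+\delta)}{\Gamma(K,K-1)}\int_{K-1}^{K-1+d}\Gamma$. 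Since $K-1$ lies below the median, $\Gamma(K,K-1)=\Pr[Y>K-1]\ge\tfrac12$, and a one-sided Chebyshev/Cantelli bound gives $\Gamma(K,K-1+\zeta\sqrt{K-1})=\Pr[Y-K\ge\zeta\sqrt{K-1}-1]=O(1/\zeta^2)$; since $1/\zeta^2\le 1/\zeta$ in the relevant regime $\zeta>4e^{7/4}$, this yields the claimed $\tfrac{4e^{7/4}}{\zeta}$ factor once the constant is tracked.

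For statements 2 and 3 I would bound numerator and denominator separately using toolkit item (ii): $I(K-1,d)\ge d\,\Gamma(K,K-1+d)$ and $\int_0^d\Gamma\le d\,\Gamma(K,0)=d$, so the ratio is always at least $\Gamma(K,K-1+d)$; additionally $\int_0^d\Gamma\le K$. For statement 3, the hypothesis $d\le\sqrt{K-1}/\rho$ keeps the right endpoint within $O(\sqrt{K}/\rho)$ of the mode, so $\Gamma(K,K-1+d)\ge\Pr[Y>K-1+\sqrt{K-1}/\rho]$, and a Gaussian lower bound on this tail produces exactly the $\min\{\rho,1\}e^{-1/\rho^2}$ behaviour (the $\min\{\rho,1\}$ separating the bulk regime $\rho\ge1$ from the tail regime $\rho<1$). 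For statement 2 there is no restriction on $d$, so I would split on whether $d\le\sqrt K$ or $d>\sqrt K$: for $d\le\sqrt K$ the bound $\Gamma(K,K-1+d)\ge\Gamma(K,K-1+\sqrt K)\ge c$ (a constant) already beats $\tfrac{0.1625}{e^2\sqrt K}$, while for $d>\sqrt K$ I would instead lower-bound the numerator by $\int_{K-1}^{K-1+\sqrt K}\Gamma\ge\sqrt K\,\Gamma(K,K-1+\sqrt K)$ and use $\int_0^d\Gamma\le K$, giving ratio $\ge\Gamma(K,K-1+\sqrt K)/\sqrt K$; both branches require the single uniform estimate $\Gamma(K,K-1+\sqrt K)=\Pr[Y>K-1+\sqrt K]\ge 0.1625/e^2$.

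The main obstacle is not the structure but pinning down the explicit constants uniformly in $K$, including small $K$. The Gaussian heuristics give the right scalings ($\Theta(1/\sqrt K)$ in statement 2, $\Theta(1)$ in statement 3, $O(1/\zeta^2)$ tails in statement 1) and show each claimed bound holds with room to spare, but turning them into rigorous inequalities requires explicit Stirling bounds on $p_K(K-1)$ and on the Gamma tails with controlled error terms; this is precisely where $e^2$, $e^{7/4}$ and the factor $4$ are generated, and where one must verify that the estimates do not degrade at the smallest admissible $K$. The secondary subtlety is arranging these constants so that, when substituted into $\theta_1,\theta_2$ in Lemma~\ref{keylemma}, the chain $\Delta(0,\tau_0)\ge\cdots\ge-\Delta(\tau_0,t)$ closes.
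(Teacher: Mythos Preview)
Your proposal is structurally sound, and for statement 3 it coincides with the paper's argument: both bound the ratio below by $\Gamma(K,K-1+d)$ via the monotonicity sandwich, then invoke a lower tail estimate of the form $\Gamma(K,K-1+\omega\sqrt{K-1})\ge\tfrac{1}{2e^{7/4}}\min\{1,1/\omega\}e^{-\omega^2}$ (which the paper proves separately via direct Stirling-type manipulation).

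For statements 1 and 2 your route is genuinely different. In statement~1 the paper does not use log-concavity at all: it bounds the pointwise ratio $\Gamma(K,\mathcal K+u)/\Gamma(K,K-1+u)$ by splitting into the regimes $u\ge\sqrt{K-1}$ and $u<\sqrt{K-1}$, handling the first by the explicit estimate $e^{K-1-\mathcal K}\bigl(\tfrac{\mathcal K+u}{K-1+u}\bigr)^{K-1}$ and the second by combining separate upper and lower tail lemmas for $\Gamma$, then integrates over $u\in[0,d]$. Your log-concavity argument is cleaner, reducing the entire pointwise ratio on $[K-1,K-1+d]$ to its value at $K-1$ in one step, and Cantelli is a legitimate substitute for the paper's upper-tail lemma since the regime $\zeta>4e^{7/4}$ is indeed the only nontrivial one. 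In statement~2 the paper works with the antiderivative $F_K$ of $\Gamma(K,\cdot)$ and a Cauchy-mean-value-type argument: it shows $g(d)=\Gamma(K,d)/\Gamma(K,K-1+d)$ is increasing on $(0,\tfrac{K-1}{e-1})$, then lower-bounds $G(d)$ by comparing $F_K(K-1+\tfrac{K-1}{e-1})-F_K(K-1)$ against $F_K(\infty)-F_K(0)$, which requires two further auxiliary lemmas controlling $F_K(K-1)$ and $F_K(\tfrac{e}{e-1}(K-1))$. Your split on $d\gtrless\sqrt K$, combined with the exact identity $\int_0^\infty\Gamma(K,x)\,dx=K$, is more economical and collapses everything to the single estimate $\Gamma(K,K-1+\sqrt K)\ge 0.1625/e^2$, which holds with room to spare (the asymptotic value is $\approx 0.16$ and small-$K$ values are larger). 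The trade-off is that the paper's auxiliary lemmas are engineered to output precisely the constants $4e^{7/4}$, $0.1625/e^2$, $2e^{7/4}$ that feed into $\theta_1,\theta_2$, whereas your probabilistic bounds give the right orders with slack; as you correctly flag, tightening them to the stated constants (especially uniformly for small $K$) is the residual work.
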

\begin{proof}
\item[1.]  For $u \geq \sqrt{K-1}$,
\begin{align}
& \frac{\Gamma(K,\mathcal{K}+u)}{\Gamma(K,K -1 + u)} =  e^{K-1 - \mathcal{K}} \frac{\sum_{i=0}^{K-1} \frac{(K^{\prime}+u)^i}{i!}}{\sum_{i=0}^{K-1} \frac{(K - 1 + u)^i}{i!}} 
  \leq e^{K-1 - \mathcal{K}} (\frac{\mathcal{K} + u}{K -1 + u})^{K-1} \nonumber \\
 & = e^{K -1 - \mathcal{K}} \big( (1 + \frac{\mathcal{K} - K + 1}{K-1+u})^{\frac{K-1+u}{\mathcal{K} - K + 1}} \big)^{\frac{(K-1)(\mathcal{K} - K + 1)}{K-1+u}} \!\!\!\! \leq e^{K-1 - \mathcal{K}} e^{-\frac{(K-1)(K - \mathcal{K} - 1)}{K-1+u}} \nonumber \\
 & = e^{\frac{u}{K-1+u} (K -1 - \mathcal{K})} \leq e^{\frac{1}{2\sqrt{K-1}} (K-1 - \mathcal{K})} \leq e^{-\frac{\zeta}{2}} \leq  \frac{4e^{\frac{7}{4}}}{\zeta}. \label{first}
\end{align}
On the other hand, According to Lemmas \ref{thirdlowlevelinequality} and \ref{forthlowlevelinequality}, for $u \leq \sqrt{K-1}$,
\begin{equation}
\frac{\Gamma(K,\mathcal{K}+u)}{\Gamma(K,K -1 + u)} \leq  \frac{\Gamma(K,K-1 + \zeta \sqrt{K-1})}{\Gamma(K,K-1+\sqrt{K-1})}  \leq \frac{\frac{2}{e\zeta}}{\frac{1}{2e^{\frac{11}{4}}}} =  \frac{4e^{\frac{7}{4}}}{\zeta}. \label{second}
\end{equation}
Integrating over $0 \leq u \leq d$ in Eq. \eqref{first} and \eqref{second} implies the claim.
\item[2.] Define $G(d) = \frac{\int_{K}^{K-1+d} \Gamma(K,x) dx}{\int_{0}^{d} \Gamma(K,x) dx}$, $\tilde{d} = \text{argmin}_{d} G(d)$. By taking derivative from $g(d) = \frac{\Gamma(K , d)}{\Gamma(K , K-1+d)}$,
\begin{align*}
g^{\prime}(d) = \frac{e^{-d}d^{K-1}}{(K-1)!} \frac{- \Gamma(K , K-1+d) + (\frac{K-1}{d}+1)^{K-1} e^{-(K-1)}\Gamma(K,d)}{\Gamma^2(K-1+d)},
\end{align*}
we obtain that for $d \in (0,\frac{1}{e-1}(K-1))$, $g^{\prime}(d) \geq 0$. Hence,
\begin{eqnarray}
\begin{split}
\!\!\!\! G(d)  \! & \geq \! G(\tilde{d}) \! = \! \frac{F_{K}(K-1+d) - F_{K}(K-1)}{F_{K}(d)- F_{K}(0)} \! \geq \! \frac{F_{K}(K-1+\frac{K-1}{e-1}) - F_{K}(K-1)}{F_{K}(\infty)- F_{K}(0)} \\ 
& \stackrel{(a)}{ \geq} \frac{ -0.325 F_{K}(K-1) }{K} \stackrel{(b)}{ \geq} \frac{ -0.325 \frac{\sqrt{K}}{2e^2} }{K} = \frac{0.1625}{e^2 \sqrt{K}} = \Omega(\frac{1}{\sqrt{K}}), \label{secondusing}
\end{split}
\end{eqnarray}
where (a) and (b) follow from Lemmas \ref{secondlowlevelinequality} and \ref{thirdlowlevelinequality} respectively.
\item[3.] Due to the monotonicity of $\Gamma$ and using Lemma \ref{thirdlowlevelinequality}.
\begin{align*}
 G(d) \geq \frac{d\Gamma(K,K-1+d)}{d} = \Gamma(K,K-1+d) \geq \frac{1}{2e^{\frac{7}{4}}} \min \{ \rho , 1 \} e^{-\frac{1}{\rho^2}},
\end{align*}
\end{proof}

%%%first lowlevel inequality
\begin{lemma} \label{firstlowlevelinequality} \vspace{-1mm}
$$-\frac{F_{K}(K-1)}{(K-1)!} \geq \frac{\sqrt{K} + 1}{2e^2}.$$
\end{lemma}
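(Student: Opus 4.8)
The plan is to reduce the claim to an explicit Poisson-tail estimate, evaluate the resulting truncated mean exactly by a telescoping identity, and finish with Stirling's inequality. First I would put the left-hand side in closed form. With $G(x) = -\sum_{i=0}^{K-1}(K-i)\frac{x^i}{i!}e^{-x}$ the primitive of $\Gamma(K,\cdot)$ recorded in the main text, the quantity to be bounded is
\begin{equation*}
-G(K-1) = e^{-(K-1)}\sum_{i=0}^{K-1}(K-i)\frac{(K-1)^i}{i!} =: S_K,
\end{equation*}
the $(K-1)!$ in the statement being exactly the factor that relates the normalized primitive to $F_K$. The case $K=1$ is immediate, since $S_1 = 1 \ge e^{-2}$, so assume $K \ge 2$ and set $\lambda = K-1 \ge 1$. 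Reading $S_K$ through a variable $X \sim \mathrm{Poisson}(\lambda)$ and using $K-i = (\lambda-i)+1$, I would split off a nonnegative term,
\begin{equation*}
S_K = \sum_{i=0}^{\lambda}(\lambda - i)\,\Pr[X=i] + \Pr[X \le \lambda] \ge \EE\big[(\lambda - X)^+\big],
\end{equation*}
so that it suffices to lower bound the truncated mean $\EE[(\lambda-X)^+]$.

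The crux is an exact evaluation of this truncated mean. Splitting $\lambda - i$ and reindexing telescopes the sum,
\begin{equation*}
\sum_{i=0}^{\lambda-1}(\lambda-i)\frac{\lambda^i}{i!} = \lambda\sum_{i=0}^{\lambda-1}\frac{\lambda^i}{i!} - \lambda\sum_{i=0}^{\lambda-2}\frac{\lambda^i}{i!} = \frac{\lambda^{\lambda}}{(\lambda-1)!},
\end{equation*}
whence $\EE[(\lambda-X)^+] = e^{-\lambda}\frac{\lambda^{\lambda}}{(\lambda-1)!} = \lambda\cdot\frac{\lambda^{\lambda}e^{-\lambda}}{\lambda!}$, which is the classical De Moivre expression for the Poisson mean absolute deviation. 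Applying the upper Stirling bound $\lambda! \le e\,\lambda^{\lambda+1/2}e^{-\lambda}$ gives $\frac{\lambda^{\lambda}e^{-\lambda}}{\lambda!} \ge \frac{1}{e\sqrt{\lambda}}$, and therefore
\begin{equation*}
S_K \ge \EE\big[(\lambda-X)^+\big] \ge \frac{\lambda}{e\sqrt{\lambda}} = \frac{\sqrt{K-1}}{e}.
\end{equation*}

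It then remains to verify the elementary comparison $\frac{\sqrt{K-1}}{e} \ge \frac{\sqrt{K}+1}{2e^2}$ for $K \ge 2$, i.e.\ $2e\sqrt{K-1} \ge \sqrt{K}+1$. Using $K-1 \ge K/2$ for $K \ge 2$ yields $2e\sqrt{K-1} \ge \sqrt{2}\,e\,\sqrt{K}$, and $(\sqrt{2}e - 1)\sqrt{K} \ge 1$ holds for every $K \ge 1$, which closes the bound with a large multiplicative margin (in fact dropping $\Pr[X\le\lambda]$ costs nothing). The only genuine content is the telescoping identity together with the Stirling estimate; the main point to get right is the normalization of $\Gamma(K,\cdot)$ and $F_K$ that makes the left-hand side equal to $S_K = -G(K-1)$, after which every remaining step is routine and comfortably slack.
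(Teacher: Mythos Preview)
Your argument is correct, and it is substantially cleaner than the paper's. The paper proves the bound by truncating the sum $\sum_{k=0}^{K-1}(K-1-k)\frac{(K-1)^k e^{-(K-1)}}{k!}$ to the window $K-\sqrt{K}\le k\le K-1$, rewriting each term via $\frac{1}{k!}=\frac{(K-1)\cdots(k+1)}{(K-1)!}$, and then pushing through several product estimates of the type $(1-\frac{i-1}{K-1})^i\ge e^{-i(i-1)/(K-i)}$ before finishing with Stirling. Your route instead exploits the exact telescoping identity $\sum_{i=0}^{\lambda-1}(\lambda-i)\frac{\lambda^i}{i!}=\frac{\lambda^\lambda}{(\lambda-1)!}$ (the De~Moivre formula for the Poisson mean absolute deviation), which collapses the whole sum to a single term and makes the Stirling step immediate. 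The payoff is a sharper intermediate bound $S_K\ge\sqrt{K-1}/e$ with no truncation loss, leaving considerable slack in the final comparison with $(\sqrt{K}+1)/(2e^2)$; the paper's approach, by contrast, is more ``bare-hands'' and does not rely on recognizing the closed form, but at the cost of a longer chain of inequalities. One minor point: the paper's own proof opens with the coefficient $(K-1-k)$ rather than the $(K-i)$ appearing in the definition of $G$ and in the companion Lemma~\ref{secondlowlevelinequality}; your reading $-F_K(K-1)/(K-1)! = -G(K-1)$ is the one consistent with how $G$ is defined in the text, and since $(K-i)\ge(K-1-i)$ the discrepancy is harmless for a lower bound either way.
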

\begin{proof}
\begin{align*}
& -\frac{F_{K}(K-1)}{(K-1)!} = \sum_{k=0}^{K-1} (K -1 - k) \frac{(K-1)^k e^{-(K-1)}}{k!}  \\
& > \sum_{k=K - \sqrt{K}}^{K-1} (K -1 - k) \frac{(K-1)^k e^{-(K-1)}}{k!} \\
 & = \sum_{k=K - \sqrt{K}}^{K-1} (K -1 - k) \frac{(K-1)...(k+1)}{(K-1)^{K -1 - k}} \frac{(K-1)^{K-1} e^{-(K-1)}}{(K-1)!} \\
 & = \sum_{i = 0}^{\sqrt{K}} i \frac{(K-1)...(K - i)}{(K-1)^{i}} \frac{(K-1)^{K-1} e^{-(K-1)}}{(K-1)!} \\
% & = \frac{(K-1)^{K-1} e^{-(K-1)}}{(K-1)!}  \sum_{i = 0}^{\sqrt{K}} i \frac{(K-1)...(K - i)}{(K-1)^{i}} \\
 & \geq \frac{(K-1)^{K-1} e^{-(K-1)}}{(K-1)!}  \sum_{i = 0}^{\sqrt{K}} i \frac{(K-i)^{i}}{(K-1)^{i}} \\
 & =  \frac{(K-1)^{K-1} e^{-(K-1)}}{(K-1)!}  \sum_{i = 0}^{\sqrt{K}} i (1 - \frac{i-1}{K-1})^{i} \\
  & =  \frac{(K-1)^{K-1} e^{-(K-1)}}{(K-1)!}  \sum_{i = 0}^{\sqrt{K}} i \big[ (1 - \frac{(i-1)}{K-1})^{\frac{K-1}{i-1} - 1} \big]^{\frac{i(i-1)}{K - i}} \\
 &  \geq  \frac{(K-1)^{K-1} e^{-(K-1)}}{(K-1)!}  \sum_{i = 0}^{\sqrt{K}} i {e}^{-\frac{i(i-1)}{K - i}} \geq \frac{(K-1)^{K-1} e^{-(K-1)}}{(K-1)!}  \sum_{i = 1}^{\sqrt{K}} i  \\
 & \geq \frac{(K-1)^{K-1} e^{-(K-1)}}{e^2 (K-1)^{(K-1)+\frac{1}{2}} e^{-(K-1)}} \frac{\sqrt{K}(\sqrt{K} + 1)}{2} \geq \frac{\sqrt{K} + 1}{2e^2}.
\end{align*}
\end{proof}

%%%second lowlevel inequality
\begin{lemma} \label{secondlowlevelinequality} \vspace{-1mm}
$$-F_{K}(K-1) \leq 1.49 -F_{K}( \frac{e}{e-1}(K-1)).$$
\end{lemma}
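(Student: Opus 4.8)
The plan is to collapse the two-point inequality to a single scalar ratio and then bound that ratio. Since $F_K$ is an antiderivative of the incomplete gamma function $\Gamma(K,\cdot)$, repeating the computation behind Lemma~\ref{firstlowlevelinequality} for a general argument gives, for every $x\ge 0$,
\begin{equation*}
-F_K(x) = (K-1)!\sum_{k=0}^{K-1}(K-1-k)\,\frac{x^{k}}{k!}\,e^{-x} = (K-1)!\,\EE\big[(K-1-N_x)^{+}\big],
\end{equation*}
where $N_x\sim\mathrm{Poisson}(x)$. The common factor $(K-1)!$ cancels from both sides, so, writing $h(x):=\EE[(K-1-N_x)^{+}]$, it suffices to prove the scalar inequality $h(K-1)\le 1.49\,h\!\big(\tfrac{e}{e-1}(K-1)\big)$.

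First I would record the elementary structure of $h$. A termwise differentiation gives $h'(x)=-\Pr[N_x\le K-2]\le 0$, together with $h(0)=K-1$ and $h(\infty)=0$, so $h$ is nonincreasing; a second differentiation yields $h''(x)=e^{-x}x^{K-2}/(K-2)!\ge 0$, so $h$ is convex. The point $\tfrac{e}{e-1}(K-1)=(K-1)+\tfrac{K-1}{e-1}$ is exactly the argument entering the second statement of Lemma~\ref{lemma:inequalities}, which is what makes this constant natural. I would then treat $K$ as a continuous parameter $m=K-1\ge 1$, pass to the representation $h(x)=\int_x^{\infty}\Pr[N_t\le K-2]\,dt$, and study the monotonicity in $m$ of the ratio
\begin{equation*}
R(m):=\frac{h(m)}{h\!\big(\tfrac{e}{e-1}m\big)},
\end{equation*}
reducing the claim to $R(m)\le 1.49$. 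The numerator $h(K-1)$ I would control from above via Stirling's approximation, reusing the machinery already behind Lemma~\ref{firstlowlevelinequality}, and I would finish by evaluating the worst integer value of $K$ by hand.

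The main obstacle is the denominator $h\!\big(\tfrac{e}{e-1}(K-1)\big)$. Its argument lies strictly above $K-1$, the range on which the truncation $(K-1-k)^{+}$ is supported, so this term is a Poisson lower-tail expectation and is therefore delicate to bound from below sharply. The constant $1.49$ leaves little slack, so a crude tail estimate will not suffice; the heart of the argument is a sufficiently tight lower bound on this lower tail, matched against the upper bound on the numerator, together with a verification that the extremal integer $K$ is the one indicated by the one-dimensional analysis. I expect essentially all of the effort to sit in this quantitative tail estimate, reusing the incomplete-gamma and Stirling bounds already developed for Lemma~\ref{firstlowlevelinequality} rather than any genuinely new device.
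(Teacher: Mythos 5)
Your reduction to $h(x)=\mathbb{E}\bigl[(K-1-N_x)^{+}\bigr]$ and the cancellation of $(K-1)!$ are fine, and your structural facts about $h$ (namely $h'(x)=-\Pr[N_x\le K-2]\le 0$ and $h''(x)=e^{-x}x^{K-2}/(K-2)!\ge 0$) are correct. The fatal problem is the direction of the target inequality. The lemma's typesetting is garbled, but the direction that is true, that the paper proves, and that the downstream application requires is the \emph{reverse} of the one you set out to prove: step (a) in the proof of the second statement of Lemma~\ref{lemma:inequalities} needs $F_{K}\bigl(\tfrac{e}{e-1}(K-1)\bigr)-F_{K}(K-1)\ge -0.325\,F_{K}(K-1)$, i.e. $-F_{K}\bigl(\tfrac{e}{e-1}(K-1)\bigr)\le \tfrac{1}{1.49}\bigl(-F_{K}(K-1)\bigr)$ (using $1-1/1.49\approx 0.329\ge 0.325$), which in your notation is $h\bigl(\tfrac{e}{e-1}(K-1)\bigr)\le h(K-1)/1.49$. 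Your version, $h(K-1)\le 1.49\,h\bigl(\tfrac{e}{e-1}(K-1)\bigr)$, is false for every $K\ge 2$: already at $K=2$ one has $h(x)=e^{-x}$, so $h(1)/h\bigl(\tfrac{e}{e-1}\bigr)=e^{1/(e-1)}\approx 1.79>1.49$; and for large $K$ the failure is exponential, since $h(K-1)\ge \tfrac{\sqrt{K}+1}{2e^{2}}$ by Lemma~\ref{firstlowlevelinequality} (a central-limit quantity of order $\sqrt{K}$), while $\tfrac{e}{e-1}(K-1)$ sits a constant factor above the event $\{N\le K-2\}$, so $h\bigl(\tfrac{e}{e-1}(K-1)\bigr)\le (K-1)\Pr[N\le K-2]=O\bigl(Ke^{-cK}\bigr)$ with $c\approx 0.12$. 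No tail estimate, however sharp, can certify a bounded ratio when the true ratio diverges.

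This also inverts your diagnosis of where the work lies. In the correct direction one needs an \emph{upper} bound, not a tight lower bound, on the Poisson lower-tail expectation at $\tfrac{e}{e-1}(K-1)$, and because the true ratio $h(K-1)/h\bigl(\tfrac{e}{e-1}(K-1)\bigr)$ grows like $\sqrt{K}\,e^{0.12K}$, crude estimates suffice for large $K$: the paper bounds $-F_{K}\bigl(\tfrac{e}{e-1}(K-1)\bigr)/(K-1)!\le e^{2}\,e^{-0.12(K-1)}/\sqrt{2\pi(K-1)}$ by Stirling's formula and a geometric-series summation, pairs this with the lower bound $\tfrac{\sqrt{K}+1}{2e^{2}}$ from Lemma~\ref{firstlowlevelinequality}, and obtains the factor $1.49$ for all $K>20$; the only regime where $1.49$ is genuinely tight is small $K\le 20$, which the paper dispatches by direct numerical verification rather than by any refined tail analysis. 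So the "heart of the argument" you anticipate --- a sharp lower bound on the lower tail, plus ratio-monotonicity in a continuous parameter $m=K-1$ --- is machinery aimed at the wrong side of the inequality and becomes unnecessary once the direction is flipped.
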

\begin{proof}
For RHS we can write
\begin{align*}
-& \frac{F_{K}(\frac{e}{e-1}(K-1))}{(K-1)!} = \sum_{i = 0}^{K-1} (K - i) \frac{(\frac{e}{e-1}(K-1))^i e^{-\frac{e}{e-1}(K-1)}}{i!} \\
\leq & \sum_{i = 0}^{K-1} (K - i) (K-1)^{K-1-i} \frac{(\frac{e}{e-1}(K-1))^i e^{-\frac{e}{e-1}(K-1)}}{(K-1)!} \\
\leq & \sum_{i = 0}^{K-1} (K - i) (K-1)^{K-1-i} \frac{(\frac{e}{e-1}(K-1))^i e^{-\frac{e}{e-1}(K-1)}}{\sqrt{2\pi} (K-1)^{K-1+\frac{1}{2}} e^{-(K-1)}} \\
= & \sum_{i = 0}^{K-1} (K - i) \frac{(\frac{e}{e-1})^i e^{-\frac{1}{e-1}(K-1)}}{\sqrt{2\pi (K-1)}} = \sum_{i = 0}^{K-1} \frac{K - i}{(\frac{e}{e-1})^{K -1 - i}} \frac{(\frac{e}{e-1})^{K-1}e^{-\frac{1}{e-1}(K-1)} }{\sqrt{2\pi (K-1)}} \\
= & \frac{(\frac{e}{e-1})^{K-1}e^{-\frac{1}{e-1}(K-1)} }{\sqrt{2\pi (K-1)}} \sum_{i = 0}^{K-1} (K - i)(\frac{e-1}{e})^{K -1 - i}  \\
= & \frac{(\frac{e}{e-1})^{K-1}e^{-\frac{1}{e-1}(K-1)} }{\sqrt{2\pi (K-1)}} \big[ \frac{K}{\frac{e-1}{e} - 1} (\frac{e-1}{e})^K + \frac{1 - (\frac{e-1}{e})^K }{(\frac{e-1}{e} - 1)^2} \big]  \\
\leq & \frac{(\frac{e}{e-1})^{K-1}e^{-\frac{1}{e-1}(K-1)} }{\sqrt{2\pi (K-1)}} e^2 = \frac{e^{(\ln(\frac{e}{e-1}) -\frac{1}{e-1})(K-1)} }{\sqrt{2\pi (K-1)}} e^2  \leq  \frac{e^{-0.12 (K-1)} }{\sqrt{2\pi (K-1)}} e^2.
\end{align*}
For $K \leq 20$, one can check the inequality using a computer program. For $K > 20$, according to Lemma \ref{firstlowlevelinequality},
\begin{equation}
\frac{F_{K}(K-1)}{(K-1)!} \geq \frac{\sqrt{K}+1}{2e^2} \geq \frac{\sqrt{K}}{2e^2} \geq 1.49   \frac{e^{-0.12 (K-1)} }{\sqrt{2\pi (K-1)}} e^2 \geq 1.49 \frac{F_{K}(\frac{e}{1-e} (K-1))}{(K-1)!}. \label{numericalinequalit} 
\end{equation}
% where the inequality Eq.~\ref{numericalinequalit} can be checked to be true for $K > 20$. Thus, the proof is complete.
\end{proof}

%%%third lowlevel inequality
\begin{lemma} \label{thirdlowlevelinequality} \vspace{-1mm}
\begin{align*}
\frac{\Gamma(K , K-1+ \omega \sqrt{K-1})}{(K-1)!} \leq \frac{2}{e} w^{-1}.
\end{align*}
\end{lemma}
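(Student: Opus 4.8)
The plan is to read the left-hand side as a Poisson lower-tail probability, bound it by a geometric-tail (equivalently integral) estimate that exposes the factor $\omega^{-1}$, and then control the leftover prefactor with Stirling's formula, reducing everything to a one-variable extremal computation.

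First I would record the standard identity $\frac{\Gamma(K,x)}{(K-1)!} = e^{-x}\sum_{i=0}^{K-1}\frac{x^{i}}{i!} = \Pr[\mathrm{Poisson}(x)\le K-1]$, and set $x = K-1+\omega\sqrt{K-1}$ (so $x>K-1$ for $\omega>0$; the bound is otherwise vacuous). Writing $a_i = \frac{x^{i}e^{-x}}{i!}$, for every $i\le K-1<x$ the ratio $a_{i-1}/a_i = i/x \le (K-1)/x<1$, so the partial sum is dominated geometrically by its last term, $\sum_{i=0}^{K-1}a_i \le \frac{a_{K-1}}{1-(K-1)/x} = a_{K-1}\cdot\frac{x}{\omega\sqrt{K-1}}$. (Equivalently, substituting $t=x+u$ in $\frac{1}{(K-1)!}\int_x^{\infty}t^{K-1}e^{-t}\,dt$ and using $(1+u/x)^{K-1}\le e^{(K-1)u/x}$ gives the same estimate.) Since $a_{K-1}\cdot x/\sqrt{K-1} = \frac{x^{K}e^{-x}}{(K-1)!\sqrt{K-1}}$, this already extracts the desired $\omega^{-1}$ and reduces the claim to showing $\frac{x^{K}e^{-x}}{(K-1)!\sqrt{K-1}} \le \frac{2}{e}$.

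Next I would invoke the Stirling lower bound $(K-1)! \ge \sqrt{2\pi(K-1)}\,\big((K-1)/e\big)^{K-1}$ (valid for $K\ge 2$, where $\sqrt{K-1}$ is meaningful) and substitute $s := \omega/\sqrt{K-1}$, so that $x/(K-1)=1+s$ and $x-(K-1)=(K-1)s$. A short rearrangement turns the prefactor into $\frac{1}{\sqrt{2\pi}}(1+s)^{K}e^{-(K-1)s}$, so it remains to bound $\Phi_K(s):=K\ln(1+s)-(K-1)s$ uniformly. This is the crux and the only real obstacle, because $s$ and $K$ are coupled through $s=\omega/\sqrt{K-1}$; the convenient observation is that decoupling them loses nothing. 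Maximizing $\Phi_K$ over all $s\ge 0$ gives $s^{\star}=1/(K-1)$ and $\max_s \Phi_K(s)=K\ln\frac{K}{K-1}-1$, whose derivative in $K$ equals $\ln\frac{K}{K-1}-\frac{1}{K-1}\le 0$ (using $\ln(1+u)\le u$), so it is decreasing in $K$ and maximal at $K=2$, where it equals $2\ln 2 - 1$. Hence $(1+s)^{K}e^{-(K-1)s}\le 4/e$ for every $K\ge 2$ and every $\omega$.

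Finally, collecting constants yields $\frac{x^{K}e^{-x}}{(K-1)!\sqrt{K-1}} \le \frac{4}{e\sqrt{2\pi}}$, and since $\sqrt{2\pi}\ge 2$ this is at most $\frac{2}{e}$; multiplying by the earlier $\omega^{-1}$ gives $\frac{\Gamma(K,x)}{(K-1)!} \le \frac{1}{\omega}\cdot\frac{4}{e\sqrt{2\pi}} \le \frac{2}{e\,\omega}$, as claimed. The proof thus hinges on the extremal value of $\Phi_K$ being attained cleanly at $K=2$ and on the final constant surviving exactly because $\sqrt{2\pi}\ge 2$.
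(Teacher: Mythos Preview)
Your proof is correct and follows essentially the same skeleton as the paper's: write $\Gamma(K,x)/(K-1)!$ as the Poisson lower-tail sum, dominate the terms geometrically by the top term $a_{K-1}$ (the paper does this via $1/i!\le (K-1)^{K-1-i}/(K-1)!$, which is the identical inequality), extract the factor $\omega^{-1}$ from $x/(x-(K-1))$, and clean up with Stirling. The one genuine difference is in the final constant: the paper bounds $(1+s)^{K-1}\le e^{(K-1)s}$ and then needs $1+\omega/\sqrt{K-1}\le 2$ (i.e.\ an implicit restriction $\omega\le\sqrt{K-1}$) to reach $2/e$, whereas your direct maximization of $(1+s)^{K}e^{-(K-1)s}$ over all $s\ge 0$ gives the bound uniformly in $\omega>0$ and with the honest Stirling constant $\sqrt{2\pi}$, which is a small but clean improvement.
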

\begin{proof}
\begin{align*}
&\frac{\Gamma(K  , K-1+\omega \sqrt{K-1})}{(K-1)!} = \sum_{i=0}^{K-1} \frac{(K-1+\omega \sqrt{K-1})^i e^{-(K-1+\omega \sqrt{K-1})}}{i!} \\
&\leq \sum_{i=0}^{K-1} (K-1)^{K-1-i} \frac{(K-1+\omega \sqrt{K-1})^i e^{-(K-1+\omega \sqrt{K-1})}}{(K-1)!} \\
& \leq \sum_{i=0}^{K-1} (K-1)^{K-1-i} \frac{(K-1+\omega \sqrt{K-1})^i e^{-(K-1+\omega \sqrt{K-1})}}{e (K-1)^{K -1 + \frac{1}{2}} e^{-(K-1)}} \\
 & = \frac{e^{-\omega \sqrt{K-1}}}{e \sqrt{K-1}} \sum_{i=0}^{K-1} (\frac{K-1+\omega \sqrt{K-1}}{K-1})^i  \\
  & = \frac{e^{-\omega \sqrt{K-1}}}{e \sqrt{K-1}} \sum_{i=0}^{K-1} (1 + \frac{\omega}{\sqrt{K-1}})^i = \frac{e^{-\omega \sqrt{K-1} - 1}}{\sqrt{K-1}} \frac{(\frac{\omega }{\sqrt{K-1}} + 1)^K - 1}{\frac{\omega}{\sqrt{K-1}} }  \\
 & \leq \frac{e^{-\omega \sqrt{K-1} - 1}}{\sqrt{K-1}} \frac{(\frac{\omega }{\sqrt{K-1}} + 1)^{K-1}  (\frac{\omega }{\sqrt{K-1}} + 1) }{\frac{\omega}{\sqrt{K-1}} }  \\
%  & = \frac{e^{-\omega \sqrt{K-1} - 1}}{\sqrt{K-1}} \frac{\big[ (\frac{\omega }{\sqrt{K-1}} + 1)^{\frac{\sqrt{K-1}}{\omega}} \big]^{ w\sqrt{K-1} }   (\frac{\omega }{\sqrt{K-1}} + 1) }{\frac{\omega}{\sqrt{K-1}} }  \\
 & \leq e^{-\omega \sqrt{K-1} - 1}   e^{ w\sqrt{K-1} }   (\frac{\omega }{\sqrt{K-1}} + 1) w^{-1} = \frac{1}{ew} (\frac{\omega }{\sqrt{K-1}} + 1)  \leq \frac{2}{e} w^{-1}. 
\end{align*}
\end{proof}

%%%%%%%% forth lowlevel inequality
\begin{lemma} \label{forthlowlevelinequality} \vspace{-1mm}
For $t = \omega \sqrt{K-1}$, 
\begin{align*}
\frac{\Gamma(K , K-1+t)}{(K-1)!} \geq \frac{1}{2e^{\frac{7}{4}}} \min \{ \frac{1}{\omega} , 1 \} e^{-\omega^2}.
\end{align*}
\end{lemma}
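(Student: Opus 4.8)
The plan is to read the left-hand side as a Poisson lower-tail probability and bound it from below by summing a carefully chosen block of terms near its largest term, in direct analogy with Lemma~\ref{firstlowlevelinequality}. Writing $x = K-1+t = K-1+\omega\sqrt{K-1}$, the normalization made explicit in Lemma~\ref{thirdlowlevelinequality} gives $\frac{\Gamma(K,x)}{(K-1)!} = \sum_{i=0}^{K-1}\frac{x^i e^{-x}}{i!}$. Since the ratio of consecutive summands is $x/i \ge 1$ for every $i \le K-1 < x$, the summand is increasing in $i$ over the whole range and is largest at $i = K-1$. First I would therefore discard all but a window $i \in \{K-1-m,\dots,K-1\}$ of width $m+1$, taking $m = \lfloor \sqrt{K-1}/\max\{\omega,1\}\rfloor$, so that $m \le \sqrt{K-1}$ and $m \le \sqrt{K-1}/\omega$ simultaneously; this single choice is exactly what produces the two regimes hidden in $\min\{1/\omega,1\}$.

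Next I would estimate the top term $a_0 = \frac{x^{K-1}e^{-x}}{(K-1)!}$ from below. Using a Stirling upper bound on $(K-1)!$ (as in Lemma~\ref{firstlowlevelinequality}) together with $\ln(1+u) \ge u - u^2/2$ gives $a_0 \gtrsim \frac{1}{\sqrt{K-1}}\,(1+\tfrac{t}{K-1})^{K-1}e^{-t} \ge \frac{c}{\sqrt{K-1}}\,e^{-\omega^2/2}$ for an explicit constant $c$. To control the rest of the block I would write $\frac{a_j}{a_0} = \prod_{i=1}^{j}\frac{K-i}{x}$ and bound each factor uniformly over the window by $\frac{K-i}{x} \ge 1 - \frac{m-1+\omega\sqrt{K-1}}{K-1} \ge 1 - \frac{2\omega}{\sqrt{K-1}}$ for the chosen $m$, so that via $(1-u)^{1/u}\ge e^{-1/(1-u)}$ every retained term is at least a fixed constant multiple of $a_0$. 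Summing the $m+1$ retained terms then yields $\sum_{i=0}^{K-1}\frac{x^i e^{-x}}{i!} \ge (m+1)\,a_m \gtrsim \frac{\sqrt{K-1}}{\max\{\omega,1\}}\cdot\frac{1}{\sqrt{K-1}}\,e^{-\omega^2/2} = \min\{1,1/\omega\}\,e^{-\omega^2/2}$, and since $e^{-\omega^2/2}\ge e^{-\omega^2}$ this already has the claimed shape. I would split $\omega\le 1$ from $\omega>1$ only at this last step, and, as in Lemma~\ref{secondlowlevelinequality}, dispatch the finitely many small-$K$ cases by direct computation if the asymptotic estimates are not yet tight there.

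The main obstacle I anticipate is pinning down the explicit constant $\frac{1}{2e^{7/4}}$ rather than merely the correct order $\min\{1/\omega,1\}\,e^{-\Theta(\omega^2)}$. A back-of-the-envelope check shows the bound is genuinely tight: the crude estimate ``(number of terms)$\times$(smallest term)'' together with $\prod_{i=1}^m\frac{K-i}{x}\to e^{-2}$ lands within a few percent of the target, so the window-product $\prod_{i=1}^m \frac{K-i}{x}$ must be estimated sharply (for instance through $\sum_i \ln\frac{K-i}{x}$ and $\ln(1-u)\ge -u/(1-u)$), the sharpest available Stirling constant $\sqrt{2\pi}$ rather than $e$ must be used, and the width $m$ must be tuned so that the number of retained terms times the per-step decay clears the constant. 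The one comfortable margin is the slack $e^{-\omega^2/2}$ versus the claimed $e^{-\omega^2}$, i.e.\ a spare factor $e^{\omega^2/2}\ge 1$, which is precisely what leaves enough room for these constants to close.
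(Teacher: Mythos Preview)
Your approach is essentially the paper's: write the left side as the Poisson lower tail, keep a window of width $\Theta(\sqrt{K}/\max\{1,\omega\})$ ending at the top index $K-1$, bound the leading term via Stirling, show every retained term is within a fixed constant of it, and multiply. Two remarks. First, your uniform factor bound $\frac{K-i}{x}\ge 1-\frac{2\omega}{\sqrt{K-1}}$ is wrong when $\omega<1$: it would require $m\le \omega\sqrt{K-1}+1$, but your $m=\lfloor\sqrt{K-1}\rfloor$ (try $\omega=0.1$, $K=100$). The easy fix is to use $\frac{K-i}{x}\ge 1-\frac{2}{\sqrt{K-1}}$ in that regime, which still gives a product $\gtrsim e^{-2}$ over $m\approx\sqrt{K-1}$ factors; this is precisely why the paper splits on $s=\max\{2,2\omega\}$ from the outset rather than ``only at the last step''. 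Second, the paper takes half your window width, $\sqrt{K}/(2\max\{1,\omega\})$, which is what forces the exponent in the per-term decay to be at most $3/4$; combined with the Stirling loss $e^{-1}$ this is exactly where the $e^{-7/4}$ in the constant comes from, so narrowing the window (not sharpening Stirling to $\sqrt{2\pi}$) is how to hit $\tfrac{1}{2e^{7/4}}$.
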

\begin{proof}
Define $s = \max \{2 , 2\omega \}$. Then,
\begin{align*}
\frac{\Gamma(K,K-1+t)}{(K-1)!} &= \sum_{k=0}^{K-1} \frac{(K-1+t)^k e^{-(K-1+t)}}{k!} \geq \sum_{k=K-\frac{\sqrt{K}}{s}}^{K-1} \frac{(K-1+t)^k e^{-(K-1+t)}}{k!} \\
&= \sum_{i=0}^{\frac{\sqrt{K}}{s}} \frac{(K-1+t)^{K-1-i} e^{-(K-1+t)}}{(K-1-i)!} \\
&= \sum_{i=0}^{\frac{\sqrt{K}}{s}} \frac{(K-1)...(K-i)}{(K-1+t)^i} \frac{(K-1+t)^{K -1}e^{-(K-1+t)}}{(K-1)!} \\
&\geq \frac{(K-1+t)^{K -1}e^{-(K-1+t)}}{(K-1)!}   \sum_{i=0}^{\frac{\sqrt{K}}{s}} \big(\frac{K-i}{K-1+t} \big)^i  \\
&=  \frac{(K-1+t)^{K -1}e^{-(K-1+t)}}{(K-1)!}  \sum_{i=0}^{\frac{\sqrt{K}}{s}} \big(1 - \frac{i+t-1}{K-1+t} \big)^i  \\
&=  \frac{(K-1+t)^{K -1}e^{-(K-1+t)}}{(K-1)!}  \!\!\!\!\ \sum_{i=0}^{\frac{\sqrt{K}}{s}} \big( \big(1 - \frac{i+t-1}{K-1+t} \big)^{\frac{K-1+t}{i+t-1}+1}\big)^{\frac{i(i+t-1)}{k+2t+i-2}}  \\
&\geq  \frac{(K-1+t)^{K -1}e^{-(K-1+t)}}{(K-1)!} \sum_{i=0}^{\frac{\sqrt{K}}{s}} e^{-\frac{i(i+t-1)}{k+2t+i-2}}.  \\
\end{align*}
But according to the choice of $s$, we have
\begin{align*}
i(i+t-1) \leq \frac{\sqrt{K}}{s}(\frac{\sqrt{K}}{s} + \omega \sqrt{K} - 1) = \frac{K}{s^2} + \frac{\omega K}{s} - \frac{\sqrt{K}}{s}. 
\end{align*}
For the case $\omega \geq 1$, we have $s = 2\omega$,
$$\frac{K}{s^2} + \frac{\omega K}{s} - \frac{\sqrt{K}}{s} = \frac{K}{4\omega^2} + \frac{K}{2} - \frac{\sqrt{K}}{2\omega} \leq \frac{3}{4}K \leq \frac{3}{4}(K + 2t + i - 2).$$
In the other case where $\omega \leq 1$, we have $s = 2$,
$$\frac{K}{s^2} + \frac{\omega K}{s} - \frac{\sqrt{K}}{s} = \frac{K}{4} + \frac{\omega K}{2} - \frac{\sqrt{K}}{2} \leq \frac{3}{4}K \leq \frac{3}{4}(K + 2t + i - 2).$$
Hence, in both cases, we obtain
\begin{align*}
\frac{i(i+t-1)}{k+2t+i-2} \leq \frac{3}{4}.
\end{align*}
Therefore,
\begin{align*}
\frac{\Gamma(K,K-1+t)}{(K-1)!} &\geq  \frac{(K-1+t)^{K -1}e^{-(K-1+t)}}{(K-1)!}  \sum_{i=0}^{\frac{\sqrt{K}}{s}} e^{-\frac{3}{4}}  \\
 &\geq \frac{(K-1+t)^{K-1} e^{-(K-1+t)}}{e \sqrt{K-1} (K-1)^{K-1} e^{-(K-1)}} \frac{\sqrt{K}}{s} e^{-\frac{3}{4}} \geq \frac{1}{s} (1 + \frac{t}{K-1})^{K-1} e^{-t} e^{-\frac{7}{4}} \\
  &= \frac{1}{s} \big( (1 + \frac{t}{K-1})^{\frac{K-1}{t} + 1} \big)^{\frac{t(K-1)}{t+K-1}} e^{-t} e^{-\frac{7}{4}} \geq \frac{1}{s}e^{\frac{t(K-1)}{t+K-1}} e^{-t} e^{-\frac{7}{4}} \\
  &=\frac{e^{-\frac{7}{4}}}{s}e^{-\frac{t^2}{t+K-1}}  \geq e^{-\frac{7}{4}} \frac{1}{2}\min \{1 , \frac{1}{\omega}  \} e^{-\frac{t^2}{K-1}} =  \frac{1}{2e^{\frac{7}{4}}} \min \{1 , \frac{1}{\omega}  \} e^{-\omega^2}.
\end{align*} 
\end{proof}
\begin{figure*}[t]
	\centering
	\subfloat[$\mu_i(t)$ and $\gamma_{j \, | \, \Ecal}$]{\includegraphics[width=0.32\textwidth]{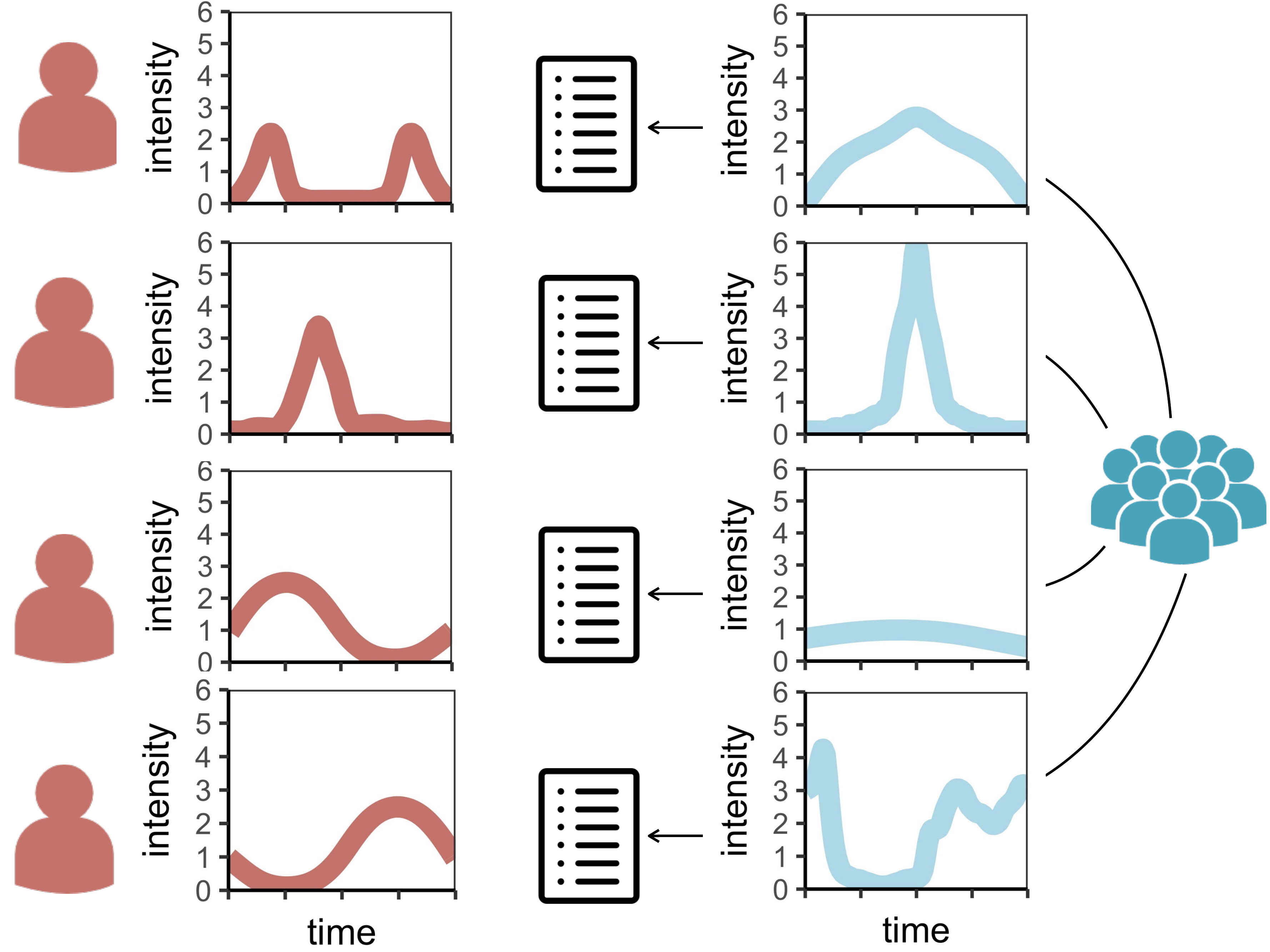}} \hspace{1mm}
	\subfloat[Greedy algorithm]{\includegraphics[width=0.32\textwidth]{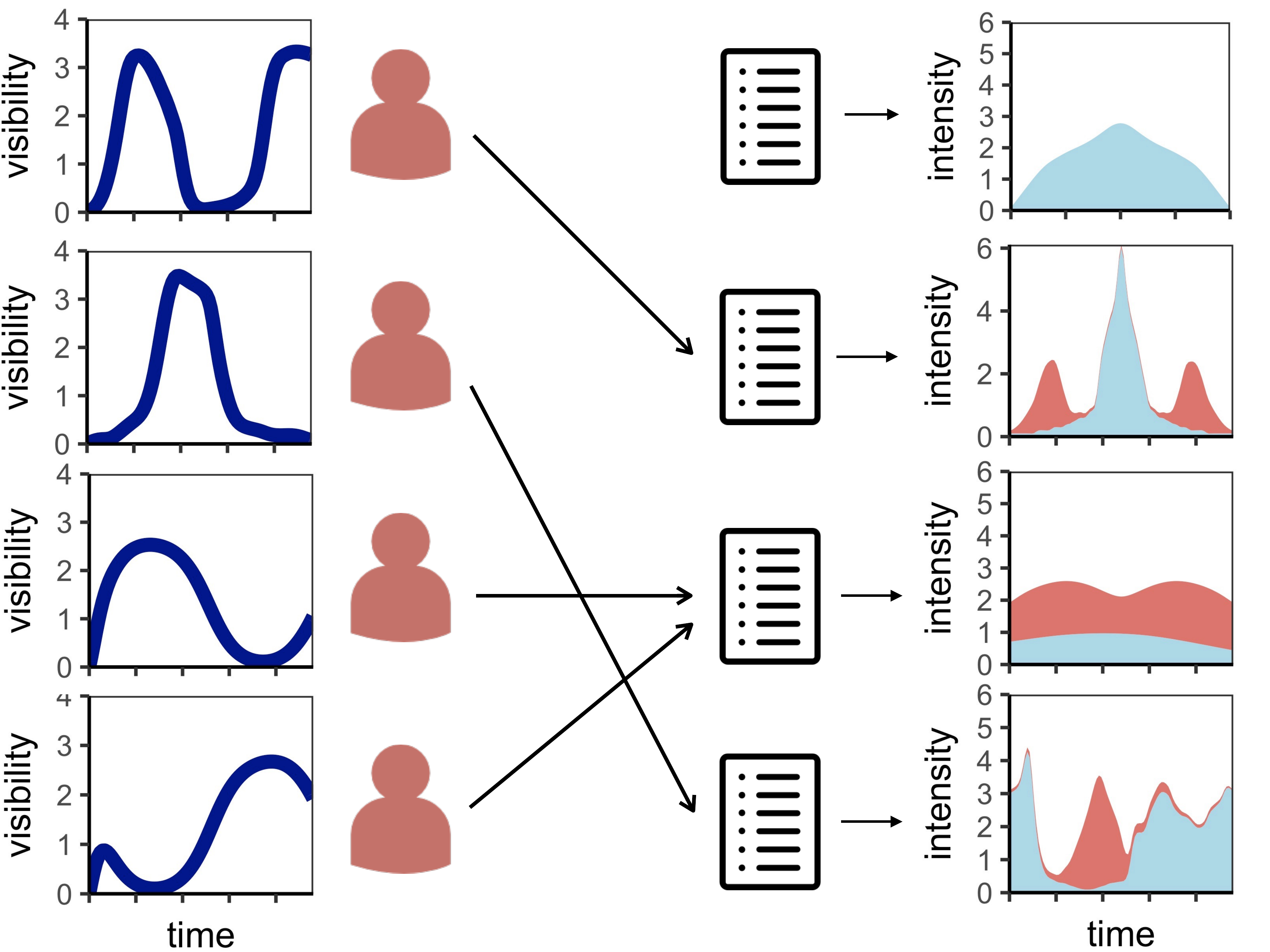}} \hspace{1mm}
	\subfloat[CP baseline]{\includegraphics[width=0.32\textwidth]{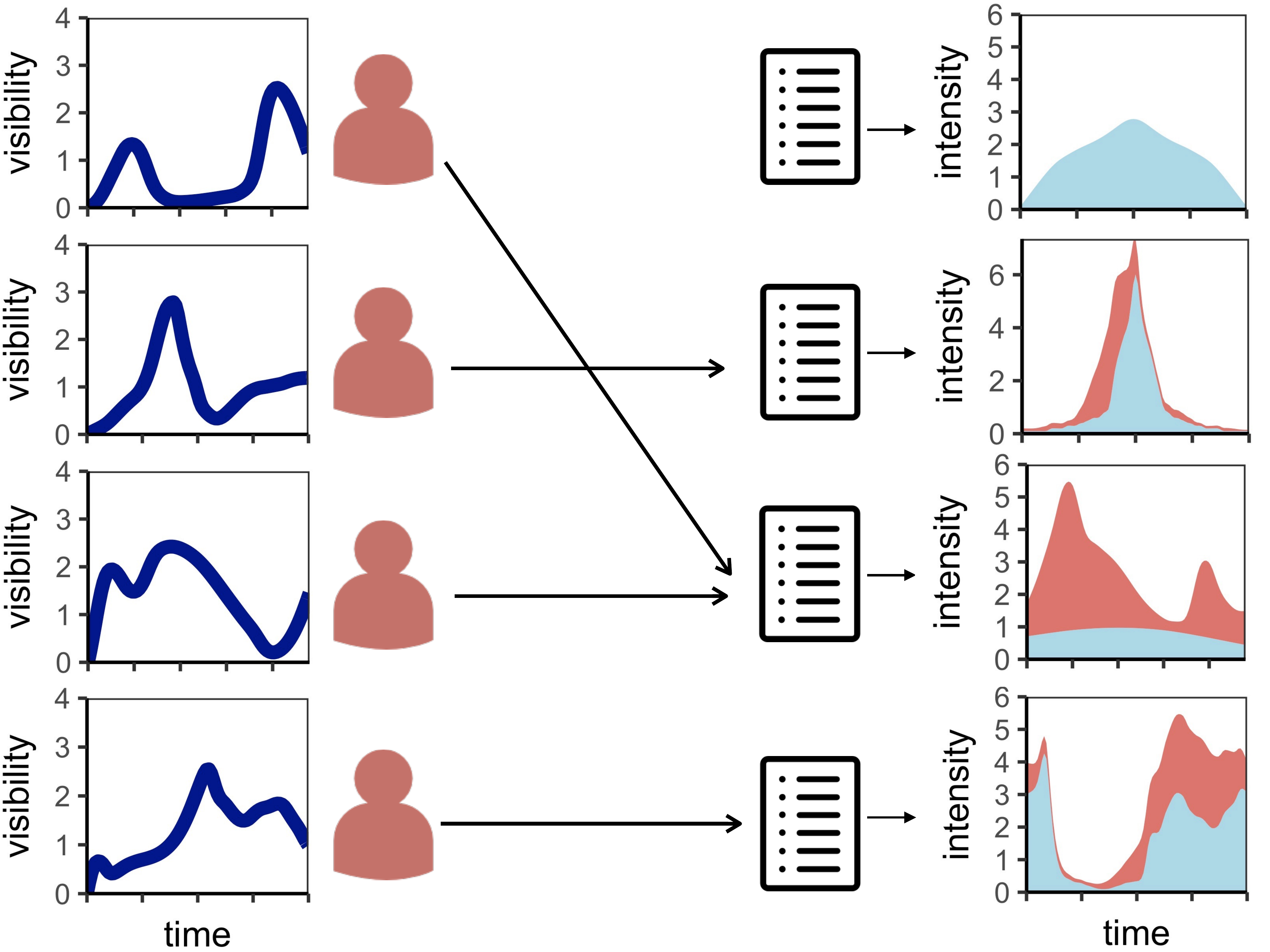}}
\caption{Toy example with four broadcasters with budget $c_i = 1$, four feeds, and $K = 10$. Panel (a) shows the broadcasters'{} intensities $\mu_i(t)$ and the feeds'{} intensities due 
to other broadcasters $\gamma_{j \,|\, \Ecal}$. Panels (b) and (c) show the solutions provided by the greedy algorithm and the CP baseline, respectively. In both panels, the left column 
shows the visibility $\Ucal(i, j)$ and the right column shows the feed'{}s intensities due to the broadcasters and the other broadcasters. }
\label{fig:toyexample}
\end{figure*}

\subsection{Robustness of the greedy algorithm}
To solve the visibility optimization problem defined in Eq.~\ref{eq:average-loss} with Algorithm~\ref{alg:greedy}, we need to compute the average 
visibility, given by Eq.~\eq{eq:visibility-multiple-edges}, which depends of a set of \emph{unknown} intensities of the broadcasters and feeds. 
In practice, one could adopt a specific functional form for these intensities and fit them using historical data, however, that could lead to poor estimates
of the visibility and, more importantly, it would be difficult to assess the impact of these empirical estimates on the approximation guarantees of the greedy
algorithm.
Instead, we can directly estimate the average visibility using historical data, derive a bound for the estimation error and assess how this estimation error 
impacts the approximation guarantees of the greedy algorithm.

\xhdr{Empirical estimation of the visibility} \label{app:robustness-visibility}
Given a directed network $\Gcal = (\Vcal, \Ecal)$, a set of candidate links $\Ecal_{\Bcal}$, with $\tilde{\Ecal}_{\Bcal} \cap \Ecal = \emptyset$, the intensities 
$\{ \mu_i(t) \}_{i \in \Bcal}$ of a set of broadcasters $\Bcal \subseteq \Vcal$, the users'{} feed intensities $\{ \gamma_{i \,|\, \Ecal}(t) \}_{i \in \Vcal}$ 
due to the broadcasters induced by the links $\Ecal$, and $n$ sequences of posts of length $\Delta = t_f - t_0$, our empirical estimate of the average visibility 
$\Ucal(\Ecal_{\Bcal}, j)$ of the broadcasters in user $j$'s feed is given by
\begin{equation}
\hat{\Ucal}(\Ecal_{\Bcal}) = \sum_{j \in \Vcal} \hat{\Ucal}(\Ecal_{\Bcal}, j) = \sum_{j \in \Vcal} \frac{\sum_{i=1}^{K} \sum_{\ell=1}^{n} \Delta^{(\ell)}_{i,j}}{n}, \label{eq:estimator}
\end{equation}
where $\Delta^{(\ell)}_{i, j}$ is the amount of time that a post from the set of broadcasters $\Bcal$ is at the $i$-th position of user $j$'{}s feed in realization $\ell$. Here, note that the empirical estimate
does not explicitly depend on the intensities of the broadcasters and feeds and, given an arbitrary set of candidate links $\Ecal_{\Bcal}$, one can always measure the visibility they
would reach without actual interventions.

We performed a formal analysis of the sample complexity of the above empirical estimate, however, for space constraints, we defer the details of this formal analysis to a 
longer version of the paper and here just state the main results. Let $0 \leq y \leq 1$, we can show that:
\begin{equation}
\PP \{ \Ucal(\Ecal_{\Bcal}, j) - y K(t_f - t_0) \leq \hat{\Ucal}(\Ecal_{\Bcal}, j) \leq \Ucal(\Ecal_{\Bcal} , j) + y K(t_f - t_0) \} \geq 1 - e^{Z_y - Q_y n}, \label{eq:majorbound} 
\end{equation}
where $Z_y$ and $Q_y$ are functions of, \eg, 
$$\beta_{j} =  \inf_{t \in (t_0, t_f)} \frac{\mu(t)}{\mu(t) + \gamma_{j \,|\, \Ecal}(t)}, \,\,
\rho_{j} =  \sup_{t \in (t_0, t_f)} \frac{\mu(t)}{\mu(t)+ \gamma_{j \,|\, \Ecal}(t)}, \,\, \mbox{and} \,\, \alpha.$$
Moreover, given the above error bound, we can characterize the approximation factor that the greedy algorithm achieves if it uses these 
empirical estimates:
\begin{theorem} \label{thm:algorithmbound}
Let the number of realizations 
$n \geq \frac{Z_y + \log \frac{  \sum_{i \in \Bcal} |\Wcal_{i,:}|  }{\delta}}{Q_y}$. 
Then, with probability at least $1 - \delta$, the greedy algorithm returns a set of links $\Ecal_{\Bcal}$ such that
\vspace{-1mm}
\begin{align*}
F(\Ecal_{\Bcal}) \geq \frac{1}{\frac{1}{\xi}+ 1} OPT - 4y K(t_f - t_0) \sum_{i \in \Bcal} c_{i},
\end{align*}
where $OPT$ is the optimal value.
%
% \begin{proof}
% It readily follows from the proof of Theorem~\ref{thm:guarantee-xi-submodular}. 
% \end{proof}
\end{theorem}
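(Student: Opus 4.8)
The plan is to combine the deterministic analysis of Theorem~\ref{thm:guarantee-xi-submodular} with a uniform concentration argument for the empirical visibility. The key observation is that the only place where the proof of Theorem~\ref{thm:guarantee-xi-submodular} uses that the greedy algorithm operates on the \emph{true} objective is the comparison $\rho_{s_i}(\Scal_{i-1}) \geq \rho_{t_i}(\Scal_{i-1})$, which holds because at step $i$ the algorithm selects the element of maximum true marginal gain. Here the algorithm instead maximizes the \emph{empirical} marginal gain $\hat{\rho}$, so I would first control the estimation error uniformly over all of the algorithm's decisions and then feed the resulting noisy comparison back into that proof.

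First I would set up the high-probability event. Since adding a single candidate link $(i,j)$ only changes the feed of follower $j$, every marginal gain the algorithm evaluates is a difference of two empirical visibilities of the form $\hat{\Ucal}(\cdot, j)$, each of which obeys the tail bound in Eq.~\ref{eq:majorbound}. Applying Eq.~\ref{eq:majorbound} to each of the $\sum_{i \in \Bcal} |\Wcal_{i,:}|$ candidate links and taking a union bound, the choice $n \geq (Z_y + \log(\sum_{i \in \Bcal}|\Wcal_{i,:}|/\delta))/Q_y$ forces the total failure probability below $\delta$. Hence, with probability at least $1-\delta$, simultaneously for every link and every relevant set, $|\hat{\Ucal}(\cdot, j) - \Ucal(\cdot, j)| \leq yK(t_f - t_0)$, and therefore every estimated marginal gain lies within $2yK(t_f - t_0)$ of the true one. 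I would condition on this event for the remainder.

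Next I would replay the telescoping estimate in the proof of Theorem~\ref{thm:guarantee-xi-submodular} with the noisy comparison. Because the greedy algorithm picks $s_i$ over $t_i$ in the empirical objective, $\hat{\rho}_{s_i}(\Scal_{i-1}) \geq \hat{\rho}_{t_i}(\Scal_{i-1})$, and applying the two-sided estimation bound to both marginal gains yields $\rho_{t_i}(\Scal_{i-1}) \leq \rho_{s_i}(\Scal_{i-1}) + 4yK(t_f - t_0)$. Substituting this into the chain
\begin{align*}
F(\Tcal) \leq F(\Scal) + \frac{1}{1-\alpha}\sum_{i=1}^{q}\rho_{t_i}(\Scal_{i-1}) \leq \Bigl(1 + \frac{1}{1-\alpha}\Bigr) F(\Scal) + \frac{4yK(t_f - t_0)\,q}{1-\alpha},
\end{align*}
and using $q \leq |\Scal| = \sum_{i \in \Bcal} c_i$ (the matroid constraint is tight since $F$ is nondecreasing), I obtain $OPT \leq (1 + 1/(1-\alpha))F(\Ecal_{\Bcal}) + \frac{4yK(t_f - t_0)\sum_i c_i}{1-\alpha}$.

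Finally I would rearrange for $F(\Ecal_{\Bcal})$, producing a leading term $OPT/(1 + 1/(1-\alpha))$ and an additive loss $\frac{4yK(t_f - t_0)\sum_i c_i}{2-\alpha}$; since $\alpha \leq 1$ gives $1/(2-\alpha) \leq 1$, this loss is at most $4yK(t_f - t_0)\sum_i c_i$. Bounding the leading factor by its curvature-based value (using the bound on $1-\alpha$ from Theorem~\ref{thm:first-theorem}) gives the stated $1/(1/\xi + 1)$, which completes the proof. The main obstacle is the uniform-concentration step: I must argue that a single sample size makes \emph{all} of the algorithm's comparisons reliable at once, and, crucially, that the accumulated error scales only with the number of \emph{selected} links $\sum_i c_i$ rather than with the number of candidate links or followers. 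This is precisely what keeps the additive term small and is the heart of turning a per-decision guarantee into an end-to-end approximation bound.
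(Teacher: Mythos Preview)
The paper does not actually prove this theorem; it explicitly defers the sample-complexity analysis to a longer version and records only the statement together with the tail bound Eq.~\eqref{eq:majorbound}. There is therefore no paper proof to compare against. Your overall strategy---a union bound to obtain uniform concentration, followed by replaying the proof of Theorem~\ref{thm:guarantee-xi-submodular} with an additive per-step error---is exactly what the form of the statement (the $\log(\sum_i |\Wcal_{i,:}|/\delta)$ term in the sample size and the additive $4yK(t_f-t_0)\sum_i c_i$ loss) suggests the authors had in mind, and your telescoping computation that produces the additive loss and then absorbs the $1/(2-\alpha)$ factor is correct.

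Two points deserve care. First, your union-bound step is not quite justified as written: Eq.~\eqref{eq:majorbound} is a tail bound for a \emph{fixed} pair $(\Ecal_{\Bcal}, j)$, but the sets $\Scal_{i-1}$ at which the greedy algorithm evaluates marginals are data-dependent, so a union bound over the $\sum_i |\Wcal_{i,:}|$ candidate edges alone does not immediately give ``simultaneously for every link and every relevant set.'' Whether this is a real gap or whether the unspecified constants $Z_y, Q_y$ already encode the needed uniformity cannot be determined from what the paper provides. Second, your final step invoking Theorem~\ref{thm:first-theorem} to convert the leading factor into $1/(1/\xi+1)$ is misplaced. The $\xi$ appearing in this theorem is a notational relic (the conclusion section still speaks of ``$\xi$-submodular functions''), standing for $1-\alpha$; with that reading, $1/(1/\xi+1)$ is literally the factor $(1+1/(1-\alpha))^{-1}$ from Theorem~\ref{thm:guarantee-xi-submodular}, and no further bound is needed. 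Invoking Theorem~\ref{thm:first-theorem} would instead produce a much smaller factor depending on the intensity-boundedness parameter, which does not match the stated bound.
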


\subsection{Experiments on synthetic data} \label{app:visibility-synthetic-experiments}

\xhdr{Experimental setup}
Unless stated otherwise, we use (periodic) piece-constant intensities $\mu_{i}(t) = \sum_{k=0}^{T-1} \mu_{i,k} \mathcal{I}(t \in [t_{k}, t_{k+1}])$ and 
$\gamma_{j \,|\, \Ecal}(t) = \sum_{j=0}^{T-1} \gamma_{j, k} \mathcal{I}(t \in [t_{k}, t_{k+1}])$ for the broadcasters and the feeds,
respectively,
where $T = 24$ days is the period, $t_{k+1} - t_k = 1$ day is the length of each piece and, each piece, we pick $\mu_{i, k}$ and $\gamma_{j, k}$ uniformly at random. 
Note that, for piece-constant intensities, we are able to compute $\Ucal(\Ecal_{\Bcal}, j)$ analytically.
We compare the performance of the greedy algorithm with the three heuristics\footnote{Initially, we also considered a trivial baseline that picks edges uniformly at random, however, its performance
was not at all competitive and decided to omit it.} from Section~\ref{sec:visibility-real-experiments}.
Here, we will run both our greedy algorithm and the baselines using the true intensity values and then report the average (theoretical) value of visibility, 
however, note that all can be run using empirical estimates of the relevant quantities, \ie, $\Ucal$ using Eq.~\ref{eq:estimator} or 
$\int_{0}^{T} \gamma(t) dt$ using maximum likelihood estimation.
\begin{figure}[t]
	\centering
	\subfloat[$F(\Ecal_{\Bcal})$ vs. $K$]{\includegraphics[width=0.32\textwidth]{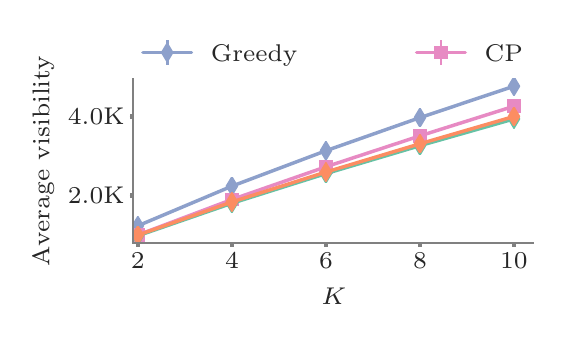}} \hspace{5mm}
        \subfloat[$F(\Ecal_{\Bcal})$ vs. $\alpha$]{\includegraphics[width=0.32\textwidth]{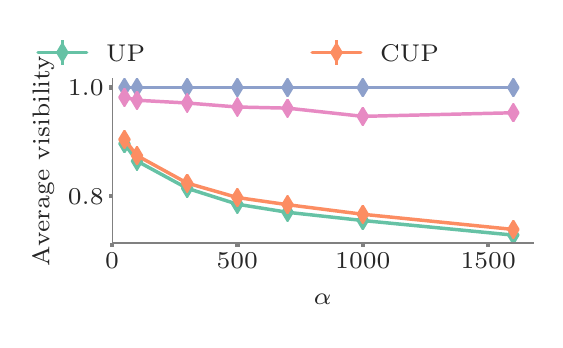}}
\caption{Average visibility achieved by the greedy algorithm (GP) and the three heuristics (CP, UP, CUP) for a setting with $60$ broadcasters and $600$ feeds. 
Panel (a) shows the average visibility, $F(\Ecal_{\Bcal})$, for different $K$ values, where we sampled $\mu_{i, k}$ and $\gamma_{j,k}$ from $U[0.01, 0.1]$ and $U[0.4, 50]$, 
respectively, and $c_i = 20$. 
Panel (b) shows the average visibility, normalized with respect to the visibility achieved by the greedy algorithm, where we sampled $\mu_{i, k}$ and $\gamma_{j,k}$ from $U[0.01, 0.1]$ and
$U[0.05, 0.05] \times \alpha$, respectively, $K = 10$ and $c_i = 50$.}
\label{fig:solution-quality-synthetic}
\end{figure}
\begin{figure}[t]
	\centering
	\subfloat[$F(\Ecal_{\hat{\Bcal}}) / F(\Ecal_{\Bcal})$ vs. $n$]{\includegraphics[width=0.32\textwidth]{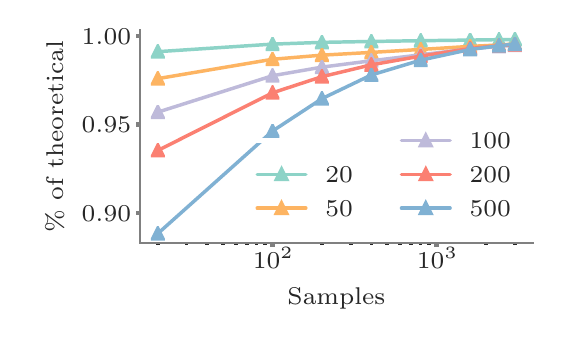} \label{fig:convergence}} \hspace{5mm}
        \subfloat[Running time vs. \# broadcasters]{\includegraphics[width=0.32\textwidth]{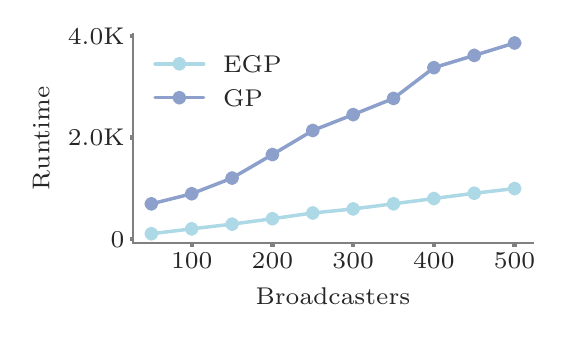} \label{fig:scalability}}
\caption{Robustness and running time of the greedy algorithm. 
Panel (a) compares the average visibility achieved by the solution $\Ecal_{\Bcal}$ ($\Ecal_{\hat{\Bcal}}$) provided by the greedy algorithm using the theoretical (empirical) visibility, where 
we sampled $\mu_{i, k}$ and $\gamma_{j,k}$ from $U[0.001, 0.1]$ and $U[0.05, 0.05 \times \alpha]$, respectively, $K = 10$ and $c_i = 30$.
Panel (b) shows the overall running time of the greedy algorithm using the theoretical (GP) and empirical (EGP) visibility, where we sampled $\mu_{i, k}$ and $\gamma_{j,k}$ from $U[0.001, 0.1]$ and $U[0.05, 0.05 \times \alpha]$, $K = 10$, and $c_i = 30$.
}
\end{figure}

\xhdr{Results}
First, we experiment with a toy example with four broadcasters, each with budget $c_i = 1$, and four feeds. Our goal here is to shed light on the way our greedy algorithm picks 
edges in comparison with one of the baselines. As illustrated in Figure~\ref{fig:toyexample}, while the greedy algorithm identifies the times when each feed'{}s intensity due to other broadcasters is low and then picks a broadcaster for each feed whose intensity is high in those times, the baseline (CP) fails to recognize such optimal matchings.

Second, we compare the performance of the greedy algorithm and all baselines in a setting with $60$ broadcasters and $600$ feeds. Figure~\ref{fig:solution-quality-synthetic} 
summarizes the results, which show that the greedy algorithm beats the baselines by large margins under different $K$ and $\alpha$ values. We did experiment with a wide 
range of parameter settings (\eg, $K$, $\alpha$, $T$ or $c_i$) and found that the greedy algorithm consistently beats the baselines.

Third, we compare the visibility values achieved by the solution $\Ecal_{\Bcal}$ the greedy algorithm provides using the theoretical visibility, given by Eq.~\ref{eq:Uwexact}, against the 
solution $\Ecal_{\hat{\Bcal}}$ it provides using the empirical visibility, given by Eq.~\ref{eq:estimator}. Figure~\ref{fig:convergence} summarizes the results, which show that, in agreement 
with Theorem~\ref{thm:algorithmbound}, the quality of the solution the greedy algorithm provides using the empirical visibility converges to the one it provides using the theoretical visibility.

Finally, we compute the running time of the greedy algorithm against the number of broadcasters. Figure~\ref{fig:scalability} summarizes the results, which show that the running time is linear
in the number of walls. %, also in agreement with Theorem~\ref{thm:timecomplexity}.
In additional experiments, we also found that the running time is linear in the number of walls, superlinear with respect to the number of pieces $T$ and it is independent on the
budget per broadcaster, however, for space constraints, we do not include the corresponding plots.

\end{document}